\newcommand\zero{\ensuremath{\mathfrak 0}}
\newcommand\one{\ensuremath{\mathfrak 1}}
\newcommand\ket[1]{\ensuremath{|{#1}\rangle}}
\newcommand\abstr[1]{#1.}
\newcommand\B{{\mathcal B}}
\newcommand\Q{{\mathcal Q}}
\newcommand\inl{\mbox{\it inl}}
\newcommand\inr{\mbox{\it inr}}
\newcommand\elimone{\delta_{\one}}
\newcommand\elimzero{\delta_{\zero}}
\newcommand\elimwith{\delta_{\with}}
\newcommand\elimplus{\delta_{\oplus}}
\newcommand\elimsup{\delta_{\odot}}
\newcommand\elimtens{\delta_{\otimes}}
\newcommand\test{\mbox{\it if}}
\newcommand\plus{\mathbin{\text{\normalfont\scalebox{0.7}{\faPlus}}}}
\newcommand\length[1]{\ell(#1)}
\newcommand\irule[3]{\infer[\mbox{\footnotesize $#3$}]{#2}{#1}}
\newcommand\pair[2]{\langle #1, #2 \rangle}
\newcommand\topintro{\langle\rangle}
\newcommand\super[2]{[#1,#2]}
\newcommand\lra{\longrightarrow}
\newcommand\lla{\longleftarrow}
\newcommand\lras{\lra^*}
\newcommand\llas{\mathrel{{}^*{\longleftarrow}}}
\newtheorem{lemma}{Lemma}[section]
\newtheorem{theorem}[lemma]{Theorem}
\newtheorem{definition}[lemma]{Definition}
\newtheorem{corollary}[lemma]{Corollary}
\newtheorem{remark}[lemma]{Remark}
\newtheorem{example}[lemma]{Example}
\title{A linear linear lambda-calculus\footnote{Partially supported by the French-Argentinian IRP SINFIN.}}
\author{
  Alejandro D\'iaz-Caro\\
  {\footnotesize DCyT, Universidad Nacional de Quilmes, Argentina}\\
  {\footnotesize ICC (CONICET--Universidad de Buenos Aires), Argentina}\\
  {\footnotesize DC--FCEN, Universidad de Buenos Aires, Argentina}\\
  \texttt{\footnotesize adiazcaro@conicet.gov.ar}
  \and
  {Gilles Dowek}\\
  {\footnotesize Inria \& ENS Paris-Saclay, France}\\
  \texttt{\footnotesize gilles.dowek@ens-paris-saclay.fr}
}
\date{}
\begin{document}

\maketitle

\begin{abstract}
  We present a linearity theorem for a proof language of intuitionistic
  multiplicative additive linear logic, incorporating addition and scalar
  multiplication. The proofs in this language are linear in the algebraic
  sense. This work is part of a broader research program aiming to define a
  logic with a proof language that forms a quantum programming language.
\end{abstract}

\section{Introduction}

\subsection{Interstitial rules}

The name of linear logic \cite{Girard87} suggests that this logic has
some relation with the algebraic notion of linearity. A common account
of this relation is that a proof of a linear implication between two
propositions $A$ and $B$ should not be any function mapping proofs of
$A$ to proofs of $B$, but a linear one.  This idea has been fruitfully
exploited to build models of linear logic
(for example \cite{Blute96,Girard99,Ehrhard02}), but it seems
difficult to even formulate it within the proof language
itself. Indeed, expressing the properties $f(u + v) = f(u) + f(v)$ and
$f(a. u) = a. f(u)$ requires an addition and a multiplication by a
scalar, that are usually not present in proof languages.

The situation has changed with quantum programming languages
and the algebraic $\lambda$-calculus,
that mix
some usual constructions of programming languages with algebraic
operations.

In this paper, we construct a minimal extension of the proof language
for intuitionistic multiplicative additive linear logic with addition
and multiplication by a scalar, the ${\mathcal L}^{\mathcal
  S}$-calculus (where $\mathcal S$ denotes the semi-ring of scalars
used), and we prove that the proof language of this logic expresses
linear maps only: if $f$ is a proof of an implication between two
propositions, then $f(u + v) = f(u) + f(v)$ and $f(a. u) = a. f(u)$.

Our main goal is thus to construct this extension of
intuitionistic linear logic and prove this linearity theorem.
Only in a second step, 
we discuss whether such a language forms the basis of a quantum
programming language or not.

In classical linear logic, the right rules of the multiplicative
falsehood, the additive implication, and the multiplicative
disjunction
$$
\irule{\Gamma\vdash\Delta}{\Gamma\vdash\bot,\Delta}{\mbox{$\bot$-r}}
\qquad
\irule{\Gamma, A \vdash \Delta}
  {\Gamma \vdash A \Rightarrow B, \Delta}
  {\mbox{$\Rightarrow$-r1}}
  \qquad
  \irule{\Gamma \vdash B, \Delta}
  {\Gamma \vdash A \Rightarrow B, \Delta}
  {\mbox{$\Rightarrow$-r2}}
\qquad
\irule{\Gamma\vdash A,B,\Delta}{\Gamma\vdash A\parr B,\Delta}{\mbox{$\parr$-r}}
$$
do not preserve the number of propositions in the right-hand side of
the sequents.  Hence, these three connectives are excluded from
intuitionistic linear logic, and we do not consider them.

Thus, we have
the multiplicative truth $\one$,
the multiplicative implication $\multimap$, 
the multiplicative conjunction $\otimes$,
the additive truth $\top$,
the additive falsehood $\zero$,
the additive conjunction $\with$,
and the additive disjunction $\oplus$.

The introduction rule for the additive conjunction $\with$
is the same as that in usual natural deduction
$$\irule{\Gamma \vdash A & \Gamma \vdash B}
{\Gamma \vdash A \with B}
{\mbox{$\with$-i}}$$
In particular, the proofs of $A$, $B$, and $A \with B$ are in the same context
$\Gamma$. In contrast, 
in the introduction rule for the multiplicative conjunction $\otimes$
$$\irule{\Gamma_1 \vdash A & \Gamma_2 \vdash B}
{\Gamma_1, \Gamma_2 \vdash A \otimes B}
{\mbox{$\otimes$-i}}$$
the proofs of $A$ and $B$ are in two contexts $\Gamma_1$ and $\Gamma_2$
and the proof of the conclusion $A \otimes B$ is in the multiset union of
these two contexts.
But, in both cases, in the elimination rules
$$\irule{\Gamma \vdash A \with B & \Delta, A \vdash C}
{\Gamma, \Delta \vdash C}
{\mbox{$\with$-e1}}
\hspace{1cm}
\irule{\Gamma \vdash A \with B & \Delta, B \vdash C}
{\Gamma, \Delta \vdash C}
{\mbox{$\with$-e2}}$$
$$\irule{\Gamma \vdash A \otimes B & \Delta, A, B \vdash C}
{\Gamma, \Delta \vdash C}
{\mbox{$\otimes$-e}}$$
the proof of the major premise
and that of the minor one are in contexts
$\Gamma$ and $\Delta, A$ (resp.~$\Delta, B$, $\Delta, A, B$) and the proof
of the conclusion $C$ is in the multiset union of $\Gamma$ and $\Delta$.
The same holds for the other connectives.

To extend this logic with addition and multiplication by a scalar,
we proceed, like in \cite{odot}, in two steps: we first add
interstitial rules and then scalars.

An interstitial rule is a deduction rule whose premises are identical
to its conclusion. We consider two such rules
$$\irule{\Gamma \vdash A & \Gamma \vdash A}{\Gamma \vdash A}{\mbox{sum}}
\hspace{1cm}
\irule{\Gamma \vdash A}{\Gamma \vdash A}{\mbox{prod}}$$
These rules obviously do not extend provability, but they introduce
new constructors $\plus$ and $\bullet$ in the proof language.

We then consider a semi-ring ${\mathcal S}$ of scalars and replace the
introduction rule of the connective $\one$ with a family of rules
$\one$-i$(a)$, one for each scalar, and the rule prod with a family of
rules prod$(a)$, also one for each scalar
$$\irule{}{\vdash \one}{\mbox{$\one$-i} (a)}
\qquad\qquad\qquad
\irule{\Gamma \vdash A}{\Gamma \vdash A}{\mbox{prod} (a)}$$

\subsection{Commutations}

Adding these rules yields proofs that cannot be reduced,
because the introduction rule of some connective and its elimination
rule are separated by an interstitial rule, for example
$$\irule{\irule{\irule{\irule{\pi_1}{\Gamma \vdash A}{}
      & 
    \irule{\pi_2}{\Gamma \vdash B}{}}
    {\Gamma \vdash A \with B}
    {\mbox{$\with$-i}}
    &
    \irule{\irule{\pi_3}{\Gamma \vdash A}{}
      &
      \irule{\pi_4}{\Gamma \vdash B}{}
    }
    {\Gamma \vdash A \with B}
    {\mbox{$\with$-i}}
  }
  {\Gamma \vdash A \with B}
  {\mbox{sum}}
  &
  \irule{\pi_5}{\Gamma, A \vdash C}{}
}
{\Gamma \vdash C}
{\mbox{$\with$-e1}}$$
Reducing such a proof, sometimes called a {\it commuting cut},
requires reduction rules to commute the rule sum either with the
elimination rule below or with the introduction rules above.

As the commutation with the introduction rules above is not always
possible, for example in the proofs
$$\irule{\irule{\irule{\pi_1}{\Gamma \vdash A}{}}
  {\Gamma \vdash A \oplus B}
  {\mbox{$\oplus$-i1}}
  &
  \irule{\irule{\pi_2}{\Gamma \vdash B}{}}
  {\Gamma \vdash A \oplus B}
  {\mbox{$\oplus$-i2}}
}
{\Gamma \vdash A \oplus B}
{\mbox{sum}}$$
$$\irule{\irule{\irule{\pi_1}{\Gamma_1 \vdash A}{}
    &  \irule{\pi_2}{\Gamma_2 \vdash B}{}
  }
  {\Gamma \vdash A \otimes B}
  {\mbox{$\otimes$-i}}
  &
  \irule{\irule{\pi_3}{\Gamma'_1 \vdash A}{}
    & \irule{\pi_4}{\Gamma'_2 \vdash B}{}
  }
  {\Gamma \vdash A \otimes B}
  {\mbox{$\otimes$-i}}
}
{\Gamma \vdash A \otimes B}
{\mbox{sum}}$$
where $\Gamma_1 \Gamma_2 = \Gamma'_1 \Gamma'_2 = \Gamma$, the
commutation with the elimination rule below is often preferred.  In
this paper, we favour the commutation of the interstitial rules with
the introduction rules, rather than with the elimination rules,
whenever it is possible, that is for the connectives $\one$,
$\multimap$, $\top$, and $\with$, and keep commutation with the
elimination rules for the connectives $\otimes$ and $\oplus$ only.
For example, with the additive conjunction $\with$, the proof
$$\irule{\irule{\irule{\pi_1}{\Gamma \vdash A}{}
    &
  \irule{\pi_2}{\Gamma \vdash B}{}}
  {\Gamma \vdash A \with B}
  {\mbox{$\with$-i}}
  &
  \irule{\irule{\pi_3}{\Gamma \vdash A}{}
    &
    \irule{\pi_4}{\Gamma \vdash B}{}
  }
  {\Gamma \vdash A \with B}
  {\mbox{$\with$-i}}
}
{\Gamma \vdash A \with B}
{\mbox{sum}}$$
reduces to 
$$\irule{\irule{\irule{\pi_1}{\Gamma \vdash A}{}
    &
  \irule{\pi_3}{\Gamma \vdash A}{}}
  {\Gamma \vdash A}
  {\mbox{sum}}
  &
  \irule{\irule{\pi_2}{\Gamma \vdash B}{}
    &
    \irule{\pi_4}{\Gamma \vdash B}{}
  }
  {\Gamma \vdash B}
  {\mbox{sum}}
}
{\Gamma \vdash A \with B}
{\mbox{$\with$-i}}$$
Such commutation rules yield a stronger introduction property for the
considered connective.

For coherence, we commute both rules sum and prod with the elimination
rule of the additive disjunction $\oplus$
and of the multiplicative conjunction
$\otimes$, rather that with its
introduction rules. But, for the rule prod, both choices are
possible.

\subsection{Related work}
While our primary objective is to introduce a minimal extension to the proof
language of linear logic, our work is greatly indebted to quantum programming
languages. These languages were pioneers in amalgamating programming language
constructs with algebraic operations, such as addition and scalar
multiplication.

The language QML~\cite{AltenkirchGrattageLICS05} introduced the
concept of superposition of terms, through an encoding: the
$\textrm{if}^\circ$ constructor can receive qubits as conditional
parameters. For example, the expression $\textrm{if}^\circ\ a.\ket 0 +
b.\ket 1\ \textrm{then}\ u\ \textrm{else}\ v$ represents the linear
combination $a.u+b.v$.  Thus, although QML does not have a direct way
to represent linear combinations of terms, such linear combinations
can always be expressed using this $\textrm{if}^\circ$ constructor.  A
linearity property, and even an unitarity property, is proved for QML,
through a translation to quantum circuits.

The ZX calculus~\cite{ZXBook17} is a graphical language based on a
categorical model. It does not have addition or multiplication by a
scalar in the syntax, but such constructions could be added and
interpreted in the model. This idea of extending the syntax with
addition and multiplication by a scalar, lead to the Many Worlds
Calculus~\cite{Chardonnet23}.
Although the Many Worlds Calculus and the
${\mathcal L}^{\mathcal S}$-calculus have several points in common,
the ${\mathcal L}^{\mathcal S}$-calculus
takes advantage of being a $\lambda$-calculus and not
a graphical language to introduce a primitive $\lambda$-abstraction, while the
Many Worlds Calculus introduces it indirectly through the adjunction between
the hom and the tensor. Then, the linearity proof for the Many
Worlds Calculus uses semantic tools, while that for the ${\mathcal L}^{\mathcal
S}$-calculus is purely syntactic.

The algebraic $\lambda$-calculus~\cite{Vaux2009} and Lineal~\cite{Lineal} have
similar syntaxes to the $\mathcal L^{\mathcal S}$-calculus we are proposing.
However, in the case of the algebraic $\lambda$-calculus, there is only a simple
intuitionistic type system, with no proof of linearity. In the case of Lineal,
there is no type system, and the linearity is not proved, but forced: 
the term $f(u+v)$ is defined as $f(u)+f(v)$ and $f(a.u)$ is defined as $a.f(u)$.
Several type systems have been proposed for
Lineal~\cite{ArrighiDiazcaroLMCS12,DiazcaroPetitWoLLIC12,ArrighiDiazcaroValironIC17,LambdaS,DiazcaroGuillermoMiquelValironLICS19},
but none of them are related to linear logic, and they are not intended to
prove the linearity, instead of forcing it.

Finally, other sources of the 
${\mathcal L}^{\mathcal S}$-calculus are the 
quantum lambda calculus~\cite{SelingerValironMSCS06} and the
language Q~\cite{ZorziMSCS16}, although the classical nature of
their control yields a restricted form of superposition, on
data rather than on arbitrary terms.

\subsection{Outline of the paper}

Extending the proof language of intuitionistic linear logic with
interstitial rules and with scalars yields the ${\mathcal L}^{\mathcal
  S}$-calculus, that we define and study in Section~\ref{secls}.  In
particular, we prove that the ${\mathcal L}^{\mathcal S}$-calculus
verifies the subject reduction, confluence, termination, and
introduction properties.  We then show, in
Section~\ref{secvectorsmatrices}, that the vectors of ${\mathcal S}^n$
can be expressed in this calculus, that the irreducible closed proofs
of some propositions are equipped with a structure of vector space,
and that all linear functions from ${\mathcal S}^m$ to ${\mathcal
  S}^n$ can be expressed as proofs of an implication between such
propositions.  We then prove, in Section~\ref{seclinearity}, the main
result of this paper: that, conversely, all the proofs of implications
are linear.

Finally, we discuss applications to quantum computing, in
Section~\ref{seclss}.

\section{The \texorpdfstring{${\mathcal L}^{\mathcal S}$}{L-S}-calculus}
\label{secls}

\subsection{Syntax and operational semantics}

\begin{figure}[t]
  $$\irule{}
  {x:A \vdash x:A}
  {\mbox{ax}}
  \qquad
  \irule{\Gamma \vdash t:A & \Gamma \vdash u:A}
  {\Gamma \vdash t \plus u:A}
  {\mbox{sum}}
  \qquad
  \irule{\Gamma \vdash t:A}
  {\Gamma \vdash a \bullet t:A}
  {\mbox{prod}(a)}
  $$
  $$
  \irule{}
  {\vdash a.\star:\one}
  {\mbox{$\one$-i}(a)}
  \qquad
  \irule{\Gamma \vdash t:\one & \Delta \vdash u:A}
  {\Gamma, \Delta \vdash \elimone(t,u):A}
  {\mbox{$\one$-e}}
  $$
  $$
  \irule{\Gamma, x:A \vdash t:B}
  {\Gamma \vdash \lambda \abstr{x}t:A \multimap B}
  {\mbox{$\multimap$-i}}
  \qquad
  \irule{\Gamma \vdash t:A\multimap B & \Delta \vdash u:A}
  {\Gamma, \Delta \vdash t~u:B}
  {\mbox{$\multimap$-e}}
  $$
  $$
  \irule{\Gamma \vdash t:A & \Delta \vdash u:B}
  {\Gamma, \Delta \vdash t \otimes u:A \otimes B}
  {\mbox{$\otimes$-i}}
  \qquad
  \irule{\Gamma \vdash t:A \otimes B & \Delta, x:A, y:B \vdash u:C}
  {\Gamma, \Delta \vdash \elimtens(t,\abstr{x y}u):C}
  {\mbox{$\otimes$-e}}
  $$
  $$
  \irule{}{\Gamma \vdash \topintro:\top}{\mbox{$\top$-i}}
  \qquad
  \irule{\Gamma \vdash t:\zero}
  {\Gamma, \Delta \vdash \elimzero(t):C}
  {\mbox{$\zero$-e}}
  $$
  $$\irule{\Gamma \vdash t:A & \Gamma \vdash u:B}
  {\Gamma \vdash \pair{t}{u}:A \with B}
  {\mbox{$\with$-i}}$$
  $$\irule{\Gamma \vdash t:A \with B & \Delta, x:A \vdash u:C}
  {\Gamma, \Delta \vdash \elimwith^1(t,\abstr{x}u):C}
  {\mbox{$\with$-e1}}
  \qquad
  \irule{\Gamma \vdash t:A \with B & \Delta, x:B \vdash u:C}
  {\Gamma, \Delta \vdash \elimwith^2(t,\abstr{x}u):C}
  {\mbox{$\with$-e2}}$$
  $$\irule{\Gamma \vdash t:A}
  {\Gamma \vdash \inl(t):A \oplus B}
  {\mbox{$\oplus$-i1}}
  \qquad
  \irule{\Gamma \vdash t:B}
  {\Gamma \vdash \inr(t):A \oplus B}
  {\mbox{$\oplus$-i2}}$$
  $$\irule{\Gamma \vdash t:A \oplus B & \Delta, x:A \vdash u:C & \Delta, y:B \vdash v:C}
  {\Gamma, \Delta \vdash \elimplus(t,\abstr{x}u,\abstr{y}v):C}
  {\mbox{$\oplus$-e}}$$
  \caption{The deduction rules of the ${\mathcal L}^{\mathcal S}$-calculus.\label{figuretypingrules}}
\end{figure}

The propositions of the 
of intuitionistic multiplicative additive linear logic
are 
$$A = \one \mid A \multimap A \mid A \otimes A \mid \top \mid \zero
\mid A \with A \mid A \oplus A$$

Let ${\mathcal S}$ be a semi-ring of {\it scalars}, for instance
$\{*\}$, $\{0, 1\}$, 
${\mathbb N}$, 
${\mathbb Q}$, ${\mathbb R}$, or ${\mathbb C}$.
The proof-terms of the ${\mathcal L}^{\mathcal S}$-calculus
are 
\begin{align*}
  t =~& x \mid t \plus u \mid a \bullet t\\
  & \mid a.\star \mid \elimone(t,u)  
  \mid \lambda \abstr{x}t\mid t~u
  \mid t \otimes u \mid \elimtens(t, \abstr{x y} u)\\
  &
  \mid \topintro
  \mid \elimzero(t)
  \mid \pair{t}{u} \mid \elimwith^1(t,\abstr{x}u) \mid
  \elimwith^2(t,\abstr{x}u)
  \mid \inl(t)\mid \inr(t)\mid \elimplus(t,\abstr{x}u,\abstr{y}v)
\end{align*}
where $a$ is a scalar.

These symbols are in one-to-one correspondence with the rules of
intuitionistic multiplicative additive linear logic extended with
interstitial rules and scalars, and proof-terms can be seen as mere
one-dimensional representation of proof-trees.  The proofs of the form
$a.\star$, $\lambda \abstr{x}t$, $t \otimes u$, $\topintro$,
$\pair{t}{u}$, $\inl(t)$, and $\inr(t)$ are called {\it
  introductions}, and those of the form $\elimone(t,u)$, $t~u$,
$\elimtens(t,\abstr{xy}u)$, $\elimzero(t)$,
$\elimwith^1(t,\abstr{x}u)$, $\elimwith^2(t,\abstr{x}u)$, and
$\elimplus(t,\abstr{x}u,\abstr{y}v)$ {\it eliminations}.  The
variables and the proofs of the form $t \plus u$ and $a \bullet t$ are
neither introductions nor eliminations.

On the other hand, each symbol can be considered as a construction of a
functional programming language: the introductions are standard, while the eliminations are as follows:
\begin{itemize}
  \item $\elimone(t,u)$ is the sequence, sometimes written $t;u$,
  \item $t~u$ is the application, sometimes written $t(u)$,
  \item $\elimtens(t,\abstr{xy}u)$ is the let on pairs, sometimes written $\mathsf{let\ }(x,y)=t\ \mathsf{in}\ u$,
  \item $\elimzero(t)$ is the error, sometimes written $\mathsf{error}(t)$,
  \item $\elimwith^1(t,\abstr{x}u)$ is the first projection, sometimes written $\mathsf{let\ }x=\mathsf{fst}(t)\ \mathsf{in}\ u$,
  \item $\elimwith^2(t,\abstr{x}u)$ is the second projection, sometimes written $\mathsf{let\ }x=\mathsf{snd}(t)\ \mathsf{in}\ u$,
  \item $\elimplus(t,\abstr{x}u,\abstr{y}v)$ is the match, sometimes written $\mathsf{match}\ t\ \mathsf{in}\ \{\mathsf{inl}(x)\mapsto u\mid \mathsf{inr}(y)\mapsto v\}$.
\end{itemize}

The $\alpha$-equivalence relation and the free and bound variables of
a proof-term are defined as usual. Proof-terms are defined modulo
$\alpha$-equivalence.  A proof-term is closed if it contains no free
variables.  We write $(u/x)t$ for the substitution of $u$ for $x$ in
$t$ and if $FV(t) \subseteq \{x\}$, we also use the notation $t\{u\}$.

The typing rules are those of Figure~\ref{figuretypingrules}.  These
typing rules are exactly the deduction rules of intuitionistic linear
natural deduction, with proof-terms, with two differences: the
interstitial rules and the scalars.

\begin{figure}[t]
  \[
    \begin{array}{r@{\,}l}
      \elimone(a.\star,t) & \longrightarrow  a \bullet t\\
      (\lambda \abstr{x}t)~u & \longrightarrow  (u/x)t\\
      \elimtens(u \otimes v,\abstr{x y}w) & \longrightarrow  (u/x,v/y)w\\
      \elimwith^1(\pair{t}{u}, \abstr{x}v) & \longrightarrow  (t/x)v\\
      \elimwith^2(\pair{t}{u}, \abstr{x}v) & \longrightarrow  (u/x)v\\
      \elimplus(\inl(t),\abstr{x}v,\abstr{y}w) & \longrightarrow  (t/x)v\\
      \elimplus(\inr(u),\abstr{x}v,\abstr{y}w) & \longrightarrow  (u/y)w\\
      \\
      {a.\star} \plus b.\star&\longrightarrow  (a+b).\star\\
      (\lambda \abstr{x}t) \plus (\lambda \abstr{x}u) & \longrightarrow  \lambda \abstr{x}(t \plus u)\\
      \elimtens(t \plus u,\abstr{x y}v) & \longrightarrow 
      \elimtens(t,\abstr{x y}v)
      \plus
      \elimtens(u,\abstr{x y}v)\\
      \topintro \plus \topintro
      & \longrightarrow  \topintro
      \\
      \pair{t}{u} \plus \pair{v}{w}
      & \longrightarrow  \pair{t \plus v}{u \plus w} 
      \\
      \elimplus(t \plus u,\abstr{x}v,\abstr{y}w) & \longrightarrow 
      \elimplus(t,\abstr{x}v,\abstr{y}w)
      \plus
      \elimplus(u,\abstr{x}v,\abstr{y}w)\\
      \\
      a \bullet b.\star&\longrightarrow  (a \times b).\star\\
      a \bullet \lambda \abstr{x} t &\longrightarrow  \lambda \abstr{x} a \bullet t\\
      \elimtens(a \bullet t,\abstr{x y}v) & \longrightarrow 
      a \bullet \elimtens(t,\abstr{x y}v)\\
      a \bullet \topintro &\longrightarrow  \topintro\\
      a \bullet \pair{t}{u} &\longrightarrow  \pair{a \bullet t}{a \bullet u}\\
      \elimplus(a \bullet t,\abstr{x}v,\abstr{y}w) & \longrightarrow 
      a \bullet \elimplus(t,\abstr{x}v,\abstr{y}w)
    \end{array}
  \]
  \caption{The reduction rules of the
  ${\mathcal L}^{\mathcal S}$-calculus.  \label{figureductionrules}}
\end{figure}

The reduction rules are those of Figure~\ref{figureductionrules}.  As
usual, the reduction can occur in any context. The one-step 
reduction relation is written $\lra$, its inverse $\lla$,
its reflexive-transitive closure $\lras$, the reflexive-transitive
closure of its inverse $\llas$, and its reflexive-symmetric-transitive
closure $\equiv$.  The first seven rules correspond to the reduction of
cuts on the connectives $\one$, $\multimap$,
$\otimes$,
$\with$, and $\oplus$.
The twelve others enable to commute the interstitial rules sum and prod
with the introduction rules of the connectives $\one$, $\multimap$,
$\top$, and $\with$, and with the elimination rule of the connectives
$\otimes$ and $\oplus$.
For instance, the rule
$$\pair{t}{u} \plus \pair{v}{w} \longrightarrow  \pair{t \plus v}{u \plus w}$$
pushes the symbol $\plus$ inside the pair. The scalars are added in the rule
$${a.\star} \plus b.\star \longrightarrow  (a+b).\star$$
and multiplied in the rule
$$a \bullet b.\star \longrightarrow  (a \times b).\star$$

We now prove the subject reduction, confluence, termination, and
introduction properties of the ${\mathcal L}^{\mathcal S}$-calculus.

\subsection{Subject reduction}

The subject reduction property is not completely trivial: as noted
above, it would, for example, fail if we commuted the sum rule with
the introduction rule of the multiplicative conjunction $\otimes$.

\begin{lemma}[Substitution]
  \label{lem:substitution}
  If $\Gamma,x:B\vdash t:A$ and $\Delta\vdash u:B$, then $\Gamma,\Delta\vdash
  (u/x)t:A$.
\end{lemma}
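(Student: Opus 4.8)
The plan is to argue by induction on the derivation of $\Gamma,x:B\vdash t:A$, equivalently by structural induction on the proof-term $t$. The feature to keep in mind throughout is that the rules of Figure~\ref{figuretypingrules} manage contexts in two different ways: the rules sum, prod$(a)$, and $\wedge$-i share their context additively among all premises, whereas $\top$-e, $\bot$-e, $\Rightarrow$-e, $\wedge$-e1, $\wedge$-e2, and $\vee$-e split it multiplicatively between premises. Before applying the induction hypothesis in each case, I first locate the declaration $x:B$ among the premises according to which of these two disciplines the last rule obeys.

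The base case is the axiom rule. If the derivation ends with ax then $t=x$, the context $\Gamma$ is empty, and $A=B$; since $(u/x)x=u$ and we are given $\Delta\vdash u:B$, the conclusion $\Delta\vdash (u/x)t:A$ holds at once.

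For the inductive step I would treat the two kinds of rules separately. For an additive rule the declaration $x:B$ sits in the context shared by every premise, so I apply the induction hypothesis to each premise — replacing $x:B$ by $\Delta$ in all of them — and then reassemble the rule; because the rule shares $\Delta$ additively in its conclusion as well, the context comes out as $\Gamma,\Delta$, and the substitution distributes over the constructor, e.g.\ $(u/x)(t_1\plus t_2)=(u/x)t_1\plus(u/x)t_2$. For a multiplicative rule, linearity guarantees that $x:B$ lies in exactly one premise: I apply the induction hypothesis to that premise alone, extending its context by $\Delta$, leave the other premises untouched (there the substitution is the identity, as $x$ is not free in them), and recombine, recovering the split context of the conclusion. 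The binding rules $\Rightarrow$-i, $\wedge$-e1, $\wedge$-e2, and $\vee$-e require one extra precaution: I first $\alpha$-rename the bound variable so that it differs from $x$ and from the free variables of $u$, which lets the substitution commute with the binder.

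I expect the only delicate point to be the bookkeeping for the disjunction elimination $\vee$-e, whose two branches share the context $\Delta$ additively while the context $\Gamma$ of the scrutinee is kept apart. When $x:B$ belongs to the additive part it occurs in both branches simultaneously, so the substitution must be pushed into both continuations at once, whereas when it belongs to $\Gamma$ it occurs only in the scrutinee; verifying that the resulting contexts match the conclusion in each sub-case is exactly where the additive/multiplicative distinction has to be tracked with care. A similarly minor sub-case arises for $\bot$-e, whose conclusion admits an arbitrary weakening context, so that $x:B$ may land in that weakening part and the substitution then acts vacuously. In every instance, however, no idea beyond a careful case analysis is needed.
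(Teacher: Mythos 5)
Your proposal is correct and follows essentially the same route as the paper: a structural induction on $t$ (equivalently on the typing derivation), applying the induction hypothesis to every premise of an additively-managed rule and to the unique premise containing $x:B$ of a multiplicatively-managed one. You also correctly anticipate the two sub-cases the paper treats explicitly, namely the shared branch context of $\vee$-e and the vacuous substitution when $x:B$ falls in the weakening part of $\bot$-e.
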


\begin{proof}
  By induction on the structure of $t$. Since the deduction system is syntax
  directed, the generation lemma is trivial and will be implicitly used in the
  proof.
  \begin{itemize}
    \item If $t=x$, then $\Gamma=\varnothing$ and $A=B$.  Thus,
      $\Gamma,\Delta\vdash (u/x)t:A$ is the same as $\Delta\vdash
      u:B$, which is valid by hypothesis.

    \item The proof $t$ cannot be a variable $y$ different from $x$, 
      as such a variable $y$ is not a proof in $\Gamma, x:B$.

    \item If $t=v_1\plus v_2$, then $\Gamma,x:B\vdash v_1:A$ and
      $\Gamma,x:B\vdash v_2:A$. By the induction hypothesis,
      $\Gamma,\Delta\vdash (u/x)v_1:A$ and $\Gamma,\Delta \vdash
      (u/x)v_2:A$. Therefore, by the rule sum, $\Gamma,\Delta\vdash
      (u/x)v_1\plus(u/x)v_2:A$.  Hence, $\Gamma, \Delta \vdash
      (u/x)t:A$.

    \item If $t = a \bullet v$, then $\Gamma, x:B \vdash v:A$.  By
      the induction hypothesis, $\Gamma,\Delta\vdash (u/x)v:A$.
      Therefore, by the rule prod, $\Gamma,\Delta\vdash a \bullet
      (u/x)v:A$.  Hence, $\Gamma, \Delta \vdash (u/x)t:A$.

    \item The proof $t$ cannot be of the form
      $t=a.\star$, that is not a proof in $\Gamma, x:B$.

    \item If $t=\elimone(v_1,v_2)$, then $\Gamma=\Gamma_1,\Gamma_2$
      and there are two cases.
      \begin{itemize}
	\item If $\Gamma_1,x:B\vdash v_1:\one$ and $\Gamma_2\vdash
	  v_2:A$, then, by the induction hypothesis,
	  $\Gamma_1,\Delta\vdash (u/x)v_1:\one$ and, by the rule
	  $\one$-e, $\Gamma,\Delta\vdash \elimone((u/x)v_1,v_2):A$.
	\item If $\Gamma_1\vdash v_1:\one$ and
	  $\Gamma_2,x:B\vdash v_2:A$, then, by the induction hypothesis,
	  $\Gamma_2,\Delta\vdash (u/x)v_2:A$ and, by the rule $\one$-e,
	  $\Gamma,\Delta\vdash \elimone(v_1,(u/x)v_2):A$. 
      \end{itemize}
      Hence, $\Gamma, \Delta \vdash (u/x)t:A$.

    \item If $t=\lambda\abstr{y}v$, then $A=C\multimap D$ and
      $\Gamma,y:C,x:B\vdash v:D$. By the induction hypothesis,
      $\Gamma,\Delta,y:C\vdash (u/x)v:D$, so, by the rule $\multimap$-i,
      $\Gamma,\Delta\vdash\lambda\abstr{y}(u/x)v:A$.
      Hence, $\Gamma, \Delta \vdash (u/x)t:A$.
    \item If $t=v_1~v_2$, then $\Gamma=\Gamma_1,\Gamma_2$ and there are 
      two cases.
      \begin{itemize}
	\item If $\Gamma_1,x:B\vdash v_1:C\multimap A$
	  and $\Gamma_2\vdash v_2:C$, then, by the induction hypothesis,
	  $\Gamma_1,\Delta\vdash (u/x)v_1:C\multimap A$ and, by the rule
	  $\multimap$-e, $\Gamma,\Delta\vdash (u/x)v_1~v_2:A$. 
	\item If $\Gamma_1\vdash v_1:C\multimap A$ and
	  $\Gamma_2,x:B\vdash v_2:C$, then, by the induction hypothesis,
	  $\Gamma_2,\Delta\vdash (u/x)v_2:C$ and, by the rule $\multimap$-e,
	  $\Gamma,\Delta\vdash v_1~(u/x)v_2:A$. 
      \end{itemize}
      Hence, $\Gamma, \Delta \vdash (u/x)t:A$.

    \item 
      If $t=v_1\otimes v_2$, then $A=A_1\otimes A_2$, $\Gamma=\Gamma_1,\Gamma_2$, and there are two cases.
      \begin{itemize}
	\item If $\Gamma_1,x:B\vdash v_1:A_1$ and $\Gamma_2\vdash v_2:A_2$, then, by the induction hypothesis, $\Gamma_1,\Delta\vdash (u/x)v_1:A_1$, and, by the rule $\otimes$-i, $\Gamma,\Delta\vdash (u/x)v_1\otimes v_2:A$.
	\item If $\Gamma_1\vdash v_1:A_1$ and $\Gamma_2,x:B\vdash v_2:A_2$, this case is analogous to the previous one.
      \end{itemize}
      Hence, $\Gamma,\Delta\vdash (u/x)t:A$.
    \item 
      If $t=\elimtens(v_1.\abstr{yz}v_2)$, then
      $\Gamma=\Gamma_1,\Gamma_2$ and there are two cases.
      \begin{itemize}
	\item If $\Gamma_1,x:B\vdash v_1:C\otimes D$ and
	  $\Gamma_2,y:C,z:D\vdash v_2:A$. By the induction hypothesis,
	  $\Gamma_1,\Delta\vdash (u/x)v_1:C\otimes D$ and, by the rule $\otimes$-e,
	  $\Gamma,\Delta\vdash \elimtens((u/x)v_1,\abstr{yz}v_2):A$.

	\item If $\Gamma_1 \vdash v_1:C\otimes D$ and
	  $\Gamma_2,y:C,z:D,x:B\vdash v_2:A$. By the induction
	  hypothesis, $\Gamma_2,\Delta,y:C,z:D\vdash (u/x)v_2:A$ and, by the rule
	  $\otimes$-e,
	  $\Gamma,\Delta\vdash\elimtens(v_1,\abstr{yz}(u/x)v_2):A$.
      \end{itemize}
      Hence, $\Gamma, \Delta \vdash (u/x)t:A$.
    \item 
      If $t=\topintro$, then $A=\top$, and since $(u/x)t = t = \topintro$, by rule $\top$-i, $\Gamma,\Delta\vdash t:A$.

    \item If $t=\elimzero(v)$, then $\Gamma=\Gamma_1,\Gamma_2$ and
      there are two cases.
      \begin{itemize}
	\item If $\Gamma_1\vdash v:\zero$, then $x\notin FV(v)$, so
	  $(u/x)v = v$, $\Gamma_1\vdash (u/x)v:\zero$, and, by the
	  rule $\zero$-e, $\Gamma,\Delta\vdash
	  \elimzero((u/x)v):A$.

	\item If $\Gamma_1,x:B\vdash v:\zero$, then, by the
	  induction hypothesis, $\Gamma_1,\Delta \vdash
	  (u/x)v:\zero$, and, by the rule $\zero$-e,
	  $\Gamma,\Delta\vdash \elimzero((u/x)v):A$.
      \end{itemize}
      Hence, $\Gamma, \Delta \vdash (u/x)t:A$.

    \item If $t=\pair{v_1}{v_2}$, then $A=A_1\with A_2$ and
      $\Gamma,x:B\vdash v_1:A_1$ and $\Gamma,x:B\vdash v_2:A_2$. By the
      induction hypothesis, $\Gamma,\Delta\vdash (u/x)v_1:A_1$ and
      $\Gamma,\Delta \vdash (u/x)v_2:A_2$. Therefore, by the rule $\with$-i,
      $\Gamma,\Delta\vdash \pair{(u/x)v_1}{(u/x)v_2}:A$.  Hence, $\Gamma, \Delta
      \vdash (u/x)t:A$.

    \item If $t=\elimwith^1(v_1,\abstr{y}v_2)$, then
      $\Gamma=\Gamma_1,\Gamma_2$ and there are two cases.
      \begin{itemize}
	\item If $\Gamma_1,x:B\vdash v_1:C\with D$ and
	  $\Gamma_2,y:C \vdash v_2:A$. By the induction hypothesis,
	  $\Gamma_1,\Delta\vdash (u/x)v_1:C\with D$ and, by the rule $\with$-e,
	  $\Gamma,\Delta\vdash \elimwith^1((u/x)v_1,\abstr{y}v_2):A$.

	\item If $\Gamma_1 \vdash v_1:C\with D$ and
	  $\Gamma_2,y:C, x:B\vdash v_2:A$. By the induction
	  hypothesis, $\Gamma_2,\Delta,y:C \vdash (u/x)v_2:A$ and, by the rule
	  $\with$-e,
	  $\Gamma,\Delta\vdash\elimwith^1(v_1,\abstr{y}(u/x)v_2):A$.
      \end{itemize}
      Hence, $\Gamma, \Delta \vdash (u/x)t:A$.

    \item If $t=\elimwith^2(v_1,\abstr{y}v_2)$. The proof is analogous.

    \item If $t=\inl(v)$, then $A=C\oplus D$ and $\Gamma,x:B\vdash v:C$. By
      the induction hypothesis, $\Gamma,\Delta\vdash (u/x)v:C$ and so, by the rule
      $\oplus$-i, $\Gamma,\Delta\vdash \inl((u/x)v):A$.
      Hence, $\Gamma, \Delta \vdash (u/x)t:A$.

    \item If $t=\inr(v)$. The proof is analogous.

    \item If $t=\elimplus(v_1,\abstr{y}v_2,\abstr{z}v_3)$, then
      $\Gamma=\Gamma_1,\Gamma_2$ and there are two cases.
      \begin{itemize} \item If 
	  $\Gamma_1,x:B \vdash v_1:C\oplus D$, $\Gamma_2,y:C \vdash
	  v_2:A$, and $\Gamma_2,z:D \vdash v_3:A$. By the
	  induction hypothesis, $\Gamma_1,\Delta\vdash (u/x)v_1:C\oplus D$
	  and, by the rule $\oplus$-e, $\Gamma,\Delta\vdash
	  \elimplus((u/x)v_1,\abstr{y}v_2,\abstr{z}v_3):A$.  

	\item If $\Gamma_1\vdash v_1:C\oplus D$, $\Gamma_2,y:C,x:B\vdash
	  v_2:A$, and $\Gamma_2,z:D,x:B\vdash v_3:A$.  By the
	  induction hypothesis, $\Gamma_2,\Delta,y:C,\vdash
	  (u/x)v_2:A$ and $\Gamma_2,\Delta,z:D\vdash
	  (u/x)v_3:A$. By the rule $\oplus$-e,
	  $\Gamma,\Delta\vdash\elimplus(v_1,\abstr{y}(u/x)v_2,\abstr{z}(u/x)v_3):A$.
      \end{itemize}
      Hence, $\Gamma, \Delta \vdash (u/x)t:A$.
      \qedhere
  \end{itemize}
\end{proof}

\begin{theorem}[Subject reduction]
  \label{th:subjectreduction}
  If $\Gamma \vdash t:A$ and $t \lra u$, then $\Gamma \vdash u:A$.
\end{theorem}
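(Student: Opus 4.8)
The plan is to reduce to the case of a root reduction and then close under context. Writing $\lra$ as the contextual closure of the rules of Figure~\ref{figureductionrules}, the congruence cases are routine: every typing rule types its immediate subterms independently, in subcontexts fixed only by the types occurring in the conclusion, so replacing a subterm by another of the same type in the same context leaves the whole derivation well-typed. Hence it suffices to verify, for each of the fourteen rewrite rules applied at the root, that the right-hand side admits the same type and context as the left-hand side. Throughout I would use that the system is syntax-directed, so that inversion (the generation lemma) is immediate: the shape of a proof-term determines the last rule of any derivation typing it.

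The six cut-elimination rules follow directly from the Substitution Lemma~\ref{lem:substitution}. For $(\lambda\abstr{x}t)~u \lra (u/x)t$, inverting $\Gamma \vdash (\lambda\abstr{x}t)~u:B$ splits $\Gamma = \Gamma_1,\Gamma_2$ with $\Gamma_1 \vdash \lambda\abstr{x}t:A\Rightarrow B$ and $\Gamma_2 \vdash u:A$; inverting the abstraction gives $\Gamma_1,x:A \vdash t:B$, and the Substitution Lemma yields $\Gamma_1,\Gamma_2 \vdash (u/x)t:B$. The two $\elimand$ rules and the two $\elimor$ rules are identical in structure. For $\elimtop(a.\star,t) \lra a\bullet t$, inversion forces the context of $a.\star$ to be empty, since the rule $\top$-i$(a)$ has an empty antecedent; thus the entire context types $t$, and the rule prod$(a)$ retypes $a\bullet t$ at the same type.

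The substance of the proof lies in the eight commutation rules, where the additivity of the connectives — flagged in Remark~\ref{rem1} — is exactly what is needed. Consider $\pair{t}{u} \plus \pair{v}{w} \lra \pair{t\plus v}{u\plus w}$. Inverting the rule sum produces $\Gamma \vdash \pair{t}{u}:A\wedge B$ and $\Gamma \vdash \pair{v}{w}:A\wedge B$ in the \emph{same} context $\Gamma$; because $\wedge$-i is additive, inverting each pair gives $\Gamma \vdash t:A$, $\Gamma \vdash u:B$, $\Gamma \vdash v:A$, $\Gamma \vdash w:B$, all in $\Gamma$. The rule sum then types $\Gamma \vdash t\plus v:A$ and $\Gamma \vdash u\plus w:B$, and $\wedge$-i gives $\Gamma \vdash \pair{t\plus v}{u\plus w}:A\wedge B$. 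The abstraction case $(\lambda\abstr{x}t)\plus(\lambda\abstr{x}u) \lra \lambda\abstr{x}(t\plus u)$ is similar, both bodies living in $\Gamma,x:A$, and the nullary case ${a.\star}\plus b.\star \lra (a+b).\star$ uses again that the context must be empty. For the disjunctive commutations, inverting $\vee$-e splits the context as $\Gamma,\Delta$ with $\Gamma$ typing the scrutinee and $\Delta$ (extended by $x$ or $y$) typing the two branches; duplicating the branches keeps each $\elimor$ in $\Gamma,\Delta$, so the sum in $\elimor(t\plus u,\abstr{x}v,\abstr{y}w) \lra \elimor(t,\abstr{x}v,\abstr{y}w)\plus\elimor(u,\abstr{x}v,\abstr{y}w)$ retypes at the same $C$, and likewise the scalar factors out in $\elimor(a\bullet t,\abstr{x}v,\abstr{y}w) \lra a\bullet\elimor(t,\abstr{x}v,\abstr{y}w)$. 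The three remaining prod-commutations ($a\bullet\lambda$, $a\bullet\pair{}{}$, $a\bullet b.\star$) are easier, as prod$(a)$ is unary and raises no context-matching issue.

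I expect the $\plus$-commutations, and the conjunction case above in particular, to be the only genuine obstacle: it is precisely here that subject reduction would fail for a multiplicative conjunction, because the two pairs could split $\Gamma$ into incompatible halves, leaving $t\plus v$ untypable. The additivity of $\wedge$, $\vee$, and of the rule sum guarantees that the two summands always share one and the same context, which is exactly what allows the newly formed sums to be retyped.
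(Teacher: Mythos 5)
Your proposal is correct and follows essentially the same route as the paper: induction on the reduction relation with trivial context cases, the substitution lemma for the cut rules (the $\elimtop$ cut, as you note, instead uses the emptiness of the context of $a.\star$ and the rule prod$(a)$), and the additivity of sum, $\wedge$, and $\vee$ for the commutation rules, exactly as flagged in Remark~\ref{rem1}. Nothing essential is missing.
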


\begin{proof}
  By induction on the definition of the relation $\lra$. The context
  cases are trivial, so we focus on the reductions at top level. As the
  generation lemma is trivial, we use it implicitly in the proof.

  \begin{itemize}

    \item If $t = \elimone(a.\star,v)$ and $u = a \bullet v$, then
      $\vdash a.\star:\one$ and $\Gamma \vdash v:A$. Hence,
      $\Gamma \vdash a \bullet v:A$.

    \item If $t = (\lambda \abstr{x} v_1)~v_2$ and $u = (v_2/x)v_1$, then
      $\Gamma = \Gamma_1, \Gamma_2$, $\Gamma_1, x:B \vdash v_1:A$, and
      $\Gamma_2 \vdash v_2:B$. By Lemma~\ref{lem:substitution}, $\Gamma
      \vdash u:A$.

    \item 
      If $t=\elimtens(v_1\otimes v_2,\abstr{xy}v_3)$ and $u=(v_1/x,v_2/y)v_3$,
      then $\Gamma=\Gamma_1,\Gamma_2,\Gamma_3$, $\Gamma_1\vdash v_1:B_1$,
      $\Gamma_2\vdash v_2:B_2$, and $\Gamma_3,x:B_1,y:B_2\vdash v_3:A$.  By
      Lemma~\ref{lem:substitution}, $\Gamma_2,\Gamma_3,x:B_1\vdash (v_2/y)v_3:A$,
      and, by Lemma~\ref{lem:substitution} again, $\Gamma\vdash
      (v_1/x,v_2/y)v_3:A$. That is, $\Gamma\vdash u:A$.
    \item If $t = \elimwith^1(\pair{v_1}{v_2}, \abstr{y}v_3)$ and $u = 
      (v_1/y)v_3$, then $\Gamma = \Gamma_1, \Gamma_2$, $\Gamma_1 \vdash
      v_1:B$, $\Gamma_1 \vdash v_2:C$, and $\Gamma_2, y:B \vdash v_3:A$.
      By Lemma~\ref{lem:substitution}, $\Gamma \vdash (v_1/y)v_3:A$, that
      is $\Gamma \vdash u:A$.

    \item If $t = \elimwith^2(\pair{v_1}{v_2}), \abstr{y}v_3)$ and $u =
    (v_2/y)v_3$, the proof is analogous.

  \item If $t = \elimplus(\inl(v_1), \abstr{y} v_2, \abstr{z} v_3)$ and $u
    = (v_1/y)v_2$ then $\Gamma = \Gamma_1, \Gamma_2$, $\Gamma_1 \vdash
    v_1:B$, and $\Gamma_2, y:B \vdash v_2:A$.  By Lemma~\ref{lem:substitution},
    $\Gamma \vdash u:A$.

  \item If $t = \elimplus(\inr(v_1), \abstr{y} v_2, \abstr{z} v_3)$ and $u
    = (v_1/z)v_3$, the proof is analogous.

  \item If $t = {a.\star} \plus b.\star$ and $u = (a+b).\star$, then $A
    = \one$ and $\Gamma$ is empty. Thus, $\Gamma \vdash u:A$.

  \item If $t = \lambda \abstr{x} v_1 \plus \lambda \abstr{x} v_2$ and
    $u = \lambda \abstr{x} (v_1 \plus v_2)$ then $A = B \multimap C$,
    $\Gamma, x:B \vdash v_1:C$, and $\Gamma, x:B \vdash v_2:C$.  Thus,
    $\Gamma \vdash u:A$.

  \item 
    If $t=\elimtens(v_1\plus v_2,\abstr{xy}v_3)$ and
    $u=\elimtens(v_1,\abstr{xy}v_3)\plus\elimtens(v_2,\abstr{xy}v_3)$, then
    $\Gamma = \Gamma_1, \Gamma_2$, $\Gamma_1 \vdash v_1:B \otimes C$,
    $\Gamma_1 \vdash v_2:B \otimes C$, and $\Gamma_2, x:B,y:C \vdash v_3:A$.
    Hence, $\Gamma \vdash u:A$.
  \item 
    If $t=\topintro\plus\topintro$ and $u=\topintro$, then
    $\Gamma\vdash\topintro :A$, that is, $\Gamma\vdash u:A$.
  \item If $t = \pair{v_1}{v_2} \plus \pair{v_3}{v_4}$ and $u =
    \pair{v_1 \plus v_3}{v_2 \plus v_4}$ then $A = B \with C$, $\Gamma
    \vdash v_1:B$, $\Gamma \vdash v_2:C$, and $\Gamma \vdash v_3:B$,
    $\Gamma \vdash v_4:C$. Thus, $\Gamma \vdash u:A$.

  \item If $t = \elimplus(v_1 \plus v_2,\abstr{x}v_3,\abstr{y}v_4)$ and $u
    = \elimplus(v_1,\abstr{x}v_3,\abstr{y}v_4) \plus
    \elimplus(v_2,\abstr{x}v_3,\abstr{y}v_4)$ then $\Gamma = \Gamma_1,
    \Gamma_2$, $\Gamma_1 \vdash v_1:B \oplus C$, $\Gamma_1 \vdash v_2:B
    \oplus C$, $\Gamma_2, x:B \vdash v_3:A$, $\Gamma_2, y:C \vdash v_4:A$.
    Hence, $\Gamma \vdash u:A$.

  \item If $t = a \bullet b.\star$ and $u = (a \times b).\star$, then $A
    = \one$, $\Gamma$ is empty. Thus, $\Gamma \vdash u:A$.

  \item If $t = a \bullet \lambda \abstr{x} v$ and $u = \lambda
    \abstr{x} a \bullet v$, then $A = B \multimap C$ and $\Gamma, x:B
    \vdash v:C$. Thus, $\Gamma \vdash u:A$.

  \item 
    If $t=\elimtens(a\bullet v_1,\abstr{xy}v_2)$ and
    $u=a\bullet\elimtens(v_1,\abstr{xy}v_2)$, then $\Gamma = \Gamma_1,
    \Gamma_2$, $\Gamma_1 \vdash v_1:B \otimes C$ and, $\Gamma_2, x:B,y:C
    \vdash v_2:A$. Thus, $\Gamma \vdash u:A$. 
  \item 
    If $t=a\bullet\topintro$ and $u=\topintro$, then
    $\Gamma\vdash\topintro:A$. Thus, $\Gamma\vdash u:A$.
  \item If $t = a \bullet \pair{v_1}{v_2}$ and $u = \pair{a \bullet
    v_1}{a \bullet v_2}$, then $A = B \with C$, $\Gamma \vdash v_1:B$,
    and $\Gamma \vdash v_2:C$. Thus, $\Gamma \vdash u:A$.

  \item If $t = \elimplus(a \bullet v_1,\abstr{x}v_2,\abstr{y}v_3)$ and $u
    = a \bullet \elimplus(v_1,\abstr{x}v_2,\abstr{y}v_3)$, then $\Gamma =
    \Gamma_1, \Gamma_2$, $\Gamma_1 \vdash v_1:B \oplus C$, $\Gamma_2, x:B
    \vdash v_2:A$, and $\Gamma_2, y:C \vdash v_3:A$. Thus, $\Gamma \vdash
    u:A$.  \qedhere
\end{itemize}
\end{proof}

\subsection{Confluence}

\begin{theorem}[Confluence]
  \label{th:Confluence}
The ${\mathcal L}^{\mathcal S}$-calculus is confluent, that is
whenever $u \llas t \lras v$, then there exists a $w$,
such as $u \lras w \llas v$.
\end{theorem}

\begin{proof}
  The reduction rules of Figure~\ref{figureductionrules} applied to
  well-formed proofs is left linear and has no critical pairs~\cite[Section 6]{Nipkow}.
  By
  \cite[Theorem 6.8]{Nipkow}, it is confluent.
\end{proof}

\subsection{Termination}

We now prove that the ${\mathcal L}^{\mathcal S}$-calculus strongly
terminates, that is that all reduction sequences are finite.  To
handle the symbols $\plus$ and $\bullet$ and the associated reduction
rules, we prove the strong termination of an extended reduction
system, in the spirit of Girard's ultra-reduction \cite{Girard},
whose strong termination obviously implies that of the rules of
Figure~\ref{figureductionrules}.

\begin{definition}[Ultra-reduction]
  Ultra-reduction is defined with the rules of Figure~\ref{figureductionrules},
  plus the rules
  \begin{align*}
    t \plus u & \longrightarrow  t\\
    t \plus u & \longrightarrow  u\\
    a \bullet t & \longrightarrow  t
  \end{align*}
\end{definition}

\begin{definition}[Length of reduction]
  If $t$ is a strongly terminating proof, we write $\length{t}$ for the
  maximum length of a reduction sequence issued from $t$.
\end{definition}

\begin{lemma}[Termination of a sum]
  \label{terminationsum}
  If $t$ and $u$ strongly terminate, then so does $t \plus u$. 
\end{lemma}

\begin{proof}
  We prove that all the one-step reducts of 
  $t \plus u$ strongly terminate, by induction first on 
  $\length{t} + \length{u}$ and then on the size of $t$. 

  If the reduction takes place in $t$ or in $u$ we apply the induction
  hypothesis.
  Otherwise, the reduction is at the root and the rule used is either
  \begin{align*}
    {a.\star} \plus {b.\star} &\longrightarrow  (a + b).\star\\
    (\lambda \abstr{x}t') \plus (\lambda \abstr{x}u')
    &\longrightarrow  \lambda \abstr{x}(t' \plus u')\\
    \topintro \plus \topintro &\longrightarrow \topintro\\
    \pair{t'_1}{t'_2} \plus \pair{u'_1}{u'_2}
    &\longrightarrow  \pair{t'_1 \plus u'_1}{t'_2 \plus u'_2}\\
    t \plus u &\longrightarrow t\\
    t \plus u &\longrightarrow u
  \end{align*}
  In the first case, the proof $(a + b).\star$ is irreducible, hence it
  strongly terminates. In the second, 
  by induction hypothesis, the proof $t' \plus u'$
  strongly terminates, thus so does the proof
  $\lambda \abstr{x}(t' \plus u')$.
  In the third, the proof $\topintro$ is irreducible, hence it
  strongly terminates. 
  In the fourth, 
  by induction hypothesis, the proofs
  $t'_1 \plus u'_1$ and $t'_2 \plus u'_2$
  strongly terminate, hence so does the proof
  $\pair{t'_1 \plus u'_1}{t'_2 \plus u'_2}$.
  In the fifth and the sixth, the proofs $t$ and $u$ strongly terminate. 
\end{proof}

\begin{lemma}[Termination of a product]
  \label{terminationprod}
  If $t$ strongly terminates, then so does $a \bullet t$. 
\end{lemma}

\begin{proof}
  We prove that all the one-step reducts of 
  $a \bullet t$ strongly terminate, by induction first on 
  $\length{t}$ and then on the size of $t$. 

  If the reduction takes place in $t$, we apply the induction
  hypothesis.
  Otherwise, the reduction is at the root and the rule used is either
  \begin{align*}
    a \bullet b.\star &\longrightarrow  (a \times b).\star\\
    a \bullet (\lambda \abstr{x}t') 
    &\longrightarrow  \lambda \abstr{x} a \bullet t'\\
    a \bullet \topintro &\longrightarrow  \topintro\\
    a \bullet \pair{t'_1}{t'_2} 
    &\longrightarrow  \pair{a \bullet t'_1}{a \bullet t'_2}\\
    a \bullet t &\longrightarrow t
  \end{align*}
  In the first case, the proof $(a \times b).\star$ is irreducible,
  hence it strongly terminates. In the second, by induction
  hypothesis, the proof $a \bullet t'$ strongly terminates, thus so
  does the proof $\lambda \abstr{x} a \bullet t'$.  In the third, the
  proof $\topintro$ is irreducible, hence it strongly terminates.
  In the fourth, by induction hypothesis, the proofs $a \bullet t'_1$
  and $a \bullet t'_2$ strongly terminate, hence so does the proof
  $\pair{a \bullet t'_1}{a \bullet t'_2}$. In the fifth, the proof $t$
  strongly terminates.
\end{proof}

\begin{definition}
  Let $\mathsf{ST}$ be the set of strongly terminating terms.
  We define, by induction on the proposition $A$, a set of proofs
  $\llbracket A \rrbracket$:
  \begin{align*}
    \llbracket \one \rrbracket&=\mathsf{ST}\\
    \llbracket A \multimap B \rrbracket &=\{t\in\mathsf{ST}\mid \textrm{If }t\to^*\lambda \abstr{x}u\textrm{ then for all }v \in \llbracket A \rrbracket, (v/x)u \in \llbracket B \rrbracket\}\\
    \llbracket A \otimes B \rrbracket &= \{t\in\mathsf{ST}\mid \textrm{If }t\to^*u \otimes v\textrm{ then }u \in \llbracket A \rrbracket\textrm{ and }v \in \llbracket B \rrbracket\}\\
    \llbracket \top \rrbracket &=\mathsf{ST}\\
    \llbracket \zero \rrbracket &=\mathsf{ST}\\
    \llbracket A \with B \rrbracket &=\{t\in\mathsf{ST}\mid\textrm{If }t\to^* \pair{u}{v}\textrm{ then }u \in \llbracket A \rrbracket\textrm{ and }v \in \llbracket B \rrbracket\}\\
    \llbracket A \oplus B \rrbracket &=
    \{t\in\mathsf{ST}\mid\textrm{If }t\to^*\inl(u)\textrm{ then } u \in \llbracket A \rrbracket \textrm{ and if }t\to^*\inr(v)\textrm{ then } v \in \llbracket B \rrbracket\}
  \end{align*}
\end{definition}

\begin{lemma}[Variables]
  \label{Var}
  For any $A$, the set $\llbracket A \rrbracket$ contains all the variables.
\end{lemma}

\begin{proof}
  A variable is irreducible, hence it strongly terminates. Moreover, it
  never reduces to an introduction.
\end{proof}   

\begin{lemma}[Closure by reduction]
  \label{closure}
  If $t \in \llbracket A \rrbracket$ and $t \longrightarrow^* t'$, then 
  $t' \in \llbracket A \rrbracket$.
\end{lemma}

\begin{proof}
  If $t \longrightarrow^* t'$ and $t$ strongly terminates, then $t'$
  strongly terminates.

  Furthermore, if $A$ has the form $B \multimap C$ and $t'$ reduces to
  $\lambda \abstr{x}u$, then so does $t$, hence for every $v \in \llbracket B
  \rrbracket$, $(v/x)u \in \llbracket C \rrbracket$.
  If $A$ has the form $B \otimes C$ and $t'$ reduces to $u \otimes v$,
  then so does $t$, hence $u \in \llbracket B \rrbracket$ and $v \in
  \llbracket C \rrbracket$.
  If $A$ has the form $B \with C$ and $t'$ reduces to $\pair{u}{v}$,
  then so does $t$, hence $u \in \llbracket B \rrbracket$ and $v \in
  \llbracket C \rrbracket$.
  If $A$ has the form $B \oplus C$ and $t'$ reduces to $\inl(u)$, then so
  does $t$, hence $u \in \llbracket B \rrbracket$.
  And if $A$ has the
  form $B \oplus C$ and $t'$ reduces to $\inr(v)$, then so does $t$, hence
  $v \in \llbracket C \rrbracket$.
\end{proof}

\begin{lemma}[Girard's lemma]
  \label{CR3}
  Let $t$ be a proof that is not an introduction, 
  such that all the one-step reducts of $t$
  are in $\llbracket A \rrbracket$. Then $t \in \llbracket A \rrbracket$.
\end{lemma}

\begin{proof}
  Let $t, t_2, \dots$ be a reduction sequence issued from $t$. If it has a
  single element, it is finite. Otherwise, we have $t \longrightarrow
  t_2$. As $t_2 \in \llbracket A \rrbracket$, it strongly terminates and
  the reduction sequence is finite. Thus, $t$ strongly terminates.

  Furthermore, if $A$ has the form $B \multimap C$ and $t
  \longrightarrow^* \lambda \abstr{x}u$, then let $t , t_2, ..., t_n$ be a
  reduction sequence from $t$ to $\lambda \abstr{x}u$.  As $t_n$ is an
  introduction and $t$ is not, $n \geq 2$. Thus, $t \longrightarrow t_2
  \longrightarrow^* t_n$. We have $t_2 \in \llbracket A \rrbracket$,
  thus for all $v \in \llbracket B \rrbracket$, $(v/x)u \in \llbracket C
  \rrbracket$.

  If $A$ has the form $B \otimes C$ and $t \longrightarrow^* u \otimes v$, then let $t , t_2, ..., t_n$ be a reduction sequence
  from $t$ to $u \otimes v$.  As $t_n$ is an introduction and
  $t$ is not, $n \geq 2$. Thus, $t \longrightarrow t_2 \longrightarrow^*
  t_n$. We have $t_2 \in \llbracket A \rrbracket$, thus $u \in
  \llbracket B \rrbracket$ and $v \in \llbracket C \rrbracket$.

  If $A$ has the form $B \with C$ and $t \longrightarrow^* \pair{u}{v}$,
  the proof is similar.

  If $A$ has the form $B \oplus C$ and $t \longrightarrow^* \inl(u)$, then let $t,
  t_2,\dots, t_n$ be a reduction sequence from $t$ to $\inl(u)$.  As
  $t_n$ is an introduction and $t$ is not, $n \geq 2$.  Thus, $t
  \longrightarrow t_2 \longrightarrow^* t_n$. We have $t_2 \in
  \llbracket A \rrbracket$, thus $u \in \llbracket B \rrbracket$.

If $A$ has the form $B \oplus C$ and $t \longrightarrow^*
 \inr(v)$, the proof is similar.
 \qedhere
\end{proof}

In Lemmas~\ref{sum} to~\ref{elimplus}, we prove the adequacy of each proof constructor. 

\begin{lemma}[Adequacy of $\plus$]
  \label{sum}
  If $t_1 \in \llbracket A \rrbracket$ and $t_2 \in \llbracket A
  \rrbracket$, then $t_1 \plus t_2 \in \llbracket A \rrbracket$.
\end{lemma}

\begin{proof}
  By induction on $A$.
  The proofs $t_1$ and $t_2$ strongly terminate.  Thus, by
  Lemma~\ref{terminationsum}, the proof $t_1 \plus t_2$ strongly
  terminates.
  Furthermore:

  \begin{itemize}
    \item If the proposition $A$ has the form $B \multimap C$, and $t_1
      \plus t_2 \lras \lambda \abstr{x} v$ then either $t_1 \lras
      \lambda \abstr{x} u_1$, $t_2 \lras \lambda \abstr{x} u_2$, and $u_1
      \plus u_2 \lras v$, or $t_1 \lras \lambda \abstr{x} v$, or $t_2
      \lras \lambda \abstr{x} v$.

      In the first case, as $t_1$ and $t_2$ are in $\llbracket A
      \rrbracket$, for every $w$ in $\llbracket B \rrbracket$, $(w/x)u_1
      \in \llbracket C \rrbracket$ and $(w/x)u_2 \in \llbracket C
      \rrbracket$.  By induction hypothesis, $(w/x)(u_1 \plus u_2) =
      (w/x)u_1 \plus (w/x)u_2 \in \llbracket C \rrbracket$ and by
      Lemma~\ref{closure}, $(w/x)v \in \llbracket C \rrbracket$.

      In the second and the third, as $t_1$ and $t_2$ are in $\llbracket A
      \rrbracket$, for every $w$ in $\llbracket B \rrbracket$, $(w/x)v \in
      \llbracket C \rrbracket$.

    \item If the proposition $A$ has the form $B \otimes C$, and $t_1
      \plus t_2 \lras v \otimes v'$ then $t_1 \lras v \otimes v'$, or
      $t_2 \lras v \otimes v'$.  As $t_1$ and $t_2$ are in $\llbracket
      A \rrbracket$, $v \in \llbracket B \rrbracket$ and $v' \in
      \llbracket C \rrbracket$.

    \item If the proposition $A$ has the form $B \with C$, and $t_1 \plus t_2
      \lras \pair{v}{v'}$ then $t_1 \lras \pair{u_1}{u'_1}$, $t_2 \lras
      \pair{u_2}{u'_2}$, $u_1 \plus u_2 \lras v$, and $u'_1 \plus u'_2
      \lras v'$, or $t_1 \lras \pair{v}{v'}$, or $t_2 \lras
      \pair{v}{v'}$.

      In the first case, as $t_1$ and $t_2$ are in $\llbracket A
      \rrbracket$, $u_1$ and $u_2$ are in $\llbracket B \rrbracket$ and
      $u'_1$ and $u'_2$ are in $\llbracket C \rrbracket$.  By induction
      hypothesis, $u_1 \plus u_2 \in \llbracket B \rrbracket$ and $u'_1
      \plus u'_2 \in \llbracket C \rrbracket$ and by
      Lemma~\ref{closure}, $v \in \llbracket B \rrbracket$ and $v'
      \in \llbracket C \rrbracket$.

      In the second and the third, as $t_1$ and $t_2$ are in $\llbracket A
      \rrbracket$, $v \in \llbracket B \rrbracket$ and $v' \in \llbracket
      C \rrbracket$.

    \item If the proposition $A$ has the form $B \oplus C$, and $t_1 \plus
      t_2 \lras \inl(v)$ then $t_1 \lras \inl(v)$ or $t_2 \lras
      \inl(v)$.  As $t_1$ and $t_2$ are in $\llbracket A \rrbracket$, $v
      \in \llbracket B \rrbracket$.

      The proof is similar if $t_1 \plus t_2 \lras \inr(v)$.  \qedhere
  \end{itemize}
\end{proof}

\begin{lemma}[Adequacy of $\bullet$]
  \label{prod}
  If $t \in \llbracket A \rrbracket$, then $a \bullet t \in \llbracket A
  \rrbracket$.
\end{lemma}

\begin{proof}
  By induction on $A$.  The proof $t$ strongly terminates.  Thus, by
  Lemma~\ref{terminationprod}, the proof $a \bullet t$ strongly
  terminates.  Furthermore:

  \begin{itemize}
    \item If the proposition $A$ has the form $B \multimap C$, and $a
      \bullet t \lras \lambda \abstr{x} v$ then either $t \lras \lambda
      \abstr{x} u$ and $a \bullet u \lras v$, or $t \lras \lambda
      \abstr{x} v$.

      In the first case, as $t$ is in $\llbracket A \rrbracket$, for
      every $w$ in $\llbracket B \rrbracket$, $(w/x)u \in \llbracket C
      \rrbracket$.  By induction hypothesis, $(w/x) (a \bullet u) = a
      \bullet (w/x)u \in \llbracket C \rrbracket$ and by
      Lemma~\ref{closure}, $(w/x)v \in \llbracket C \rrbracket$.

      In the second, as $t$ is in $\llbracket A \rrbracket$, for every $w$
      in $\llbracket B \rrbracket$, $(w/x)v \in \llbracket C \rrbracket$.

    \item If the proposition $A$ has the form $B \otimes C$, and $a
      \bullet t \lras v \otimes v'$ then $t \lras v \otimes v'$.  As $t$
      is in $\llbracket A \rrbracket$, $v \in \llbracket B \rrbracket$ and
      $v' \in \llbracket C \rrbracket$.

    \item If the proposition $A$ has the form $B \with C$, and $a \bullet t
      \lras \pair{v}{v'}$ then $t \lras \pair{u}{u'}$, $a \bullet u
      \lras v$, and $a \bullet u' \lras v'$, or $t \lras \pair{v}{v'}$.

      In the first case, as $t$ is in $\llbracket A \rrbracket$, $u$ is in
      $\llbracket B \rrbracket$ and $u'$ is in $\llbracket C \rrbracket$.
      By induction hypothesis, $a \bullet u \in \llbracket B \rrbracket$
      and $a \bullet u' \in \llbracket C \rrbracket$ and by
      Lemma~\ref{closure}, $v \in \llbracket B \rrbracket$ and $v'
      \in \llbracket C \rrbracket$.

      In the second, as $t$ is in $\llbracket A \rrbracket$, $v \in
      \llbracket B \rrbracket$ and $v' \in \llbracket C \rrbracket$.

    \item If the proposition $A$ has the form $B \oplus C$, and $a \bullet t
      \lras \inl(v)$ then $t \lras \inl(v)$.
      Then, by Lemma~\ref{closure},
      $\inl(v) \in \llbracket A \rrbracket$ hence, $v \in \llbracket
      B \rrbracket$.

      The proof is similar if $a \bullet t \lras \inr(v)$.
      \qedhere
  \end{itemize}
\end{proof}

\begin{lemma}[Adequacy of $a.\star$]
  \label{star}
  We have $a.\star \in \llbracket \one \rrbracket$.
\end{lemma}

\begin{proof}
  As $a.\star$ is irreducible, it strongly terminates, hence
  $a.\star \in \llbracket \one \rrbracket$. 
\end{proof}

\begin{lemma}[Adequacy of $\lambda$]
  \label{abstraction}
  If, for all $u \in \llbracket A \rrbracket$, $(u/x)t \in \llbracket B
  \rrbracket$, then $\lambda \abstr{x}t \in \llbracket A \multimap B
  \rrbracket$.
\end{lemma}

\begin{proof}
  By Lemma~\ref{Var}, $x \in \llbracket A \rrbracket$, thus
  $t = (x/x)t \in \llbracket B \rrbracket$. Hence, $t$ strongly
  terminates.  Consider a reduction sequence issued from $\lambda
  \abstr{x}t$.  This sequence can only reduce $t$ hence it is finite. Thus,
  $\lambda \abstr{x}t$ strongly terminates.

  Furthermore, if $\lambda \abstr{x}t \longrightarrow^* \lambda \abstr{x}t'$, then
  $t \lras t'$.  Let $u \in \llbracket A \rrbracket$,
  $(u/x)t \lras (u/x)t'$.
  As $(u/x)t \in \llbracket B
  \rrbracket$, by Lemma~\ref{closure}, $(u/x)t' \in
  \llbracket B \rrbracket$.
\end{proof}

\begin{lemma}[Adequacy of $\otimes$]
  \label{tensor}
  If $t_1 \in \llbracket A \rrbracket$ and $t_2 \in \llbracket B
  \rrbracket$, then $t_1 \otimes t_2 \in \llbracket A \otimes B
  \rrbracket$.
\end{lemma}

\begin{proof}
  The proofs $t_1$ and $t_2$ strongly terminate. Consider a reduction
  sequence issued from $t_1 \otimes t_2$.  This sequence can only reduce
  $t_1$ and $t_2$, hence it is finite.  Thus, $t_1 \otimes t_2$ strongly
  terminates.

  Furthermore, if $t_1 \otimes t_2 \longrightarrow^* t'_1 \otimes t'_2$,
  then $t_1 \lras t'_1$ and $t_2 \lras t'_2$.  By Lemma~\ref{closure},
  $t'_1 \in \llbracket A \rrbracket$ and $t'_2 \in
  \llbracket B \rrbracket$.
\end{proof}

\begin{lemma}[Adequacy of $\topintro$]
  \label{unit}
  We have $\topintro \in \llbracket \top \rrbracket$.
\end{lemma}

\begin{proof}
  As $\topintro$ is irreducible, it strongly terminates, hence
  $\topintro \in \llbracket \top \rrbracket$. 
\end{proof}

\begin{lemma}[Adequacy of $\pair{.}{.}$]
  \label{pair}
  If $t_1 \in \llbracket A \rrbracket$ and $t_2 \in \llbracket B
  \rrbracket$, then $\pair{t_1}{t_2} \in \llbracket A \with B
  \rrbracket$.
\end{lemma}

\begin{proof}
  The proofs $t_1$ and $t_2$ strongly terminate. Consider a reduction
  sequence issued from $\pair{t_1}{t_2}$.  This sequence can only
  reduce $t_1$
  and $t_2$, hence it is finite.  Thus, $\pair{t_1}{t_2}$
  strongly terminates.

  Furthermore, if $\pair{t_1}{t_2} \longrightarrow^*
  \pair{t'_1}{t'_2}$, then $t_1 \lras t'_1$ and $t_2 \lras t'_2$.
  By Lemma~\ref{closure}, $t'_1 \in \llbracket A \rrbracket$ and $t'_2
  \in \llbracket B \rrbracket$.
\end{proof}

\begin{lemma}[Adequacy of $\inl$]
  \label{inl}
  If $t \in \llbracket A \rrbracket$, then $\inl(t) \in \llbracket A
  \oplus B \rrbracket$.
\end{lemma}

\begin{proof}
  The proof $t$ strongly terminates. Consider a reduction sequence
  issued from $\inl(t)$.  This sequence can only reduce $t$, hence it is
  finite.  Thus, $\inl(t)$ strongly terminates.

  Furthermore, if $\inl(t) \longrightarrow^* \inl(t')$, then $t \lras t'$.  By Lemma~\ref{closure}, $t' \in \llbracket A
  \rrbracket$. And $\inl(t)$ never reduces to $\inr(t')$. 
\end{proof}

\begin{lemma}[Adequacy of $\inr$]
  \label{inr}
  If $t \in \llbracket B \rrbracket$, then $\inr(t) \in \llbracket A
  \oplus B \rrbracket$.
\end{lemma}

\begin{proof}
  Similar to the proof of Lemma~\ref{inl}.  
\end{proof}

\begin{lemma}[Adequacy of $\elimone$]
  \label{elimone}
  If $t_1 \in \llbracket \one \rrbracket$ and $t_2 \in \llbracket C \rrbracket$, 
  then $\elimone(t_1,t_2) \in \llbracket C \rrbracket$.
\end{lemma}

\begin{proof}
  The proofs $t_1$ and $t_2$ strongly terminate.  We prove, by
  induction on $\length{t_1} + \length{t_2}$, that $\elimone(t_1,t_2)
  \in \llbracket C \rrbracket$.  Using Lemma~\ref{CR3}, we only
  need to prove that every of its one step reducts is in $\llbracket C
  \rrbracket$.  If the reduction takes place in $t_1$ or $t_2$, then we
  apply Lemma~\ref{closure} and the induction hypothesis.

  Otherwise, the proof $t_1$ is $a.\star$ and the
  reduct is $a \bullet t_2$. We conclude with Lemma~\ref{prod}.
\end{proof}

\begin{lemma}[Adequacy of application]
  \label{application}
  If $t_1 \in \llbracket A \multimap B \rrbracket$ and $t_2 \in
  \llbracket A \rrbracket$, then $t_1~t_2 \in \llbracket B
  \rrbracket$.
\end{lemma}

\begin{proof}
  The proofs $t_1$ and $t_2$ strongly terminate. We prove, by induction
  on $\length{t_1} + \length{t_2}$, that $t_1~t_2 \in \llbracket B \rrbracket$. Using
  Lemma~\ref{CR3}, we only need to prove that every of its one
  step reducts is in $\llbracket B \rrbracket$.  If the reduction takes
  place in $t_1$ or in $t_2$, then we apply Lemma~\ref{closure}
  and the induction hypothesis.

  Otherwise, the proof $t_1$ has the form $\lambda \abstr{x}u$ and the reduct
  is $(t_2/x)u$.  As $\lambda \abstr{x}u \in \llbracket A \multimap B
  \rrbracket$, we have $(t_2/x)u \in \llbracket B \rrbracket$.
\end{proof}

\begin{lemma}[Adequacy of $\elimtens$]
  \label{elimtens}
  If $t_1 \in \llbracket A \otimes B \rrbracket$ and for all $u$ in
  $\llbracket A \rrbracket$, for all $v$ in $\llbracket B \rrbracket$,
  $(u/x,v/y)t_2 \in \llbracket C \rrbracket$, then $\elimtens(t_1,
  \abstr{xy}t_2) \in \llbracket C \rrbracket$.
\end{lemma}

\begin{proof}
  By Lemma~\ref{Var}, $x \in \llbracket A \rrbracket$ and $y \in
  \llbracket B \rrbracket$, thus $t_2 = (x/x,y/y)t_2 \in \llbracket C
  \rrbracket$.  Hence, $t_1$ and $t_2$ strongly terminate.  We prove, by
  induction on $\length{t_1} + \length{t_2}$, that $\elimtens(t_1,
  \abstr{xy}t_2) \in \llbracket C \rrbracket$.  Using
  Lemma~\ref{CR3}, we only need to prove that every of its one
  step reducts is in $\llbracket C \rrbracket$.  If the reduction takes
  place in $t_1$ or $t_2$, then we apply Lemma~\ref{closure} and the
  induction hypothesis. Otherwise, either:
  \begin{itemize}
    \item The proof $t_1$ has the form $w_2 \otimes w_3$ and the reduct is
      $(w_2/x,w_3/y)t_2$. As
      $w_2 \otimes w_3 \in \llbracket A \otimes B \rrbracket$, we
      have $w_2 \in \llbracket A \rrbracket$
      and $w_3 \in \llbracket B \rrbracket$. 
      Hence, $(w_2/x,w_3/y)t_2 \in
      \llbracket C \rrbracket$.

    \item The proof $t_1$ has the form $t_1' \plus t''_1$ and the
      reduct is $\elimtens(t'_1, \abstr{xy}t_2) \plus
      \elimtens(t''_1, \abstr{xy}t_2)$. As $t_1
      \longrightarrow t'_1$ with an ultra-reduction rule, we have by
      Lemma~\ref{closure}, $t'_1 \in \llbracket A \otimes B
      \rrbracket$.  Similarly, $t''_1 \in \llbracket A \otimes B
      \rrbracket$.  Thus, by induction hypothesis, $\elimtens(t'_1,
      \abstr{xy}t_2) \in \llbracket A \otimes B \rrbracket$
      and $\elimtens(t''_1, \abstr{xy}t_2) \in \llbracket A
      \otimes B \rrbracket$.  We conclude with Lemma~\ref{sum}.

    \item The proof $t_1$ has the form $a \bullet t_1'$ and the
      reduct is $a \bullet \elimtens(t'_1, \abstr{xy}t_2)$. As $t_1
      \longrightarrow t'_1$ with an ultra-reduction rule, we have by
      Lemma~\ref{closure}, $t'_1 \in \llbracket A \oplus B
      \rrbracket$.  
      Thus, by induction hypothesis, $\elimtens(t'_1,
      \abstr{xy}t_2) \in \llbracket A \otimes B \rrbracket$.
      We conclude with Lemma~\ref{prod}.
      \qedhere
  \end{itemize}
\end{proof}

\begin{lemma}[Adequacy of $\elimzero$]
  \label{elimzero}
  If $t \in \llbracket \zero \rrbracket$, 
  then $\elimzero(t) \in \llbracket C \rrbracket$.
\end{lemma}

\begin{proof}
  The proof $t$ strongly terminates.  Consider a reduction sequence
  issued from $\elimzero(t)$.  This sequence can only reduce $t$, hence it
  is finite.  Thus, $\elimzero(t)$ strongly terminates.  Moreover, it
  never reduces to an introduction.
\end{proof}

\begin{lemma}[Adequacy of $\elimwith^1$]
  \label{elimwith1}
  If $t_1 \in \llbracket A \with  B \rrbracket$
  and, for all $u$ in $\llbracket A \rrbracket$,
  $(u/x)t_2 \in \llbracket C \rrbracket $, 
  then $\elimwith^1(t_1, \abstr{x}t_2) \in \llbracket C \rrbracket$.
\end{lemma}

\begin{proof}
  By Lemma~\ref{Var}, $x \in \llbracket A \rrbracket$
  thus $t_2 = (x/x)t_2 \in \llbracket C
  \rrbracket$.  Hence, $t_1$ and $t_2$ strongly terminate.  We prove, by
  induction on $\length{t_1} + \length{t_2}$, that $\elimwith^1(t_1, \abstr{x}t_2)
  \in \llbracket C \rrbracket$.  Using Lemma~\ref{CR3}, we only
  need to prove that every of its one step reducts is in $\llbracket C
  \rrbracket$.  If the reduction takes place in $t_1$ or $t_2$, then we
  apply Lemma~\ref{closure} and the induction hypothesis.

  Otherwise, the proof $t_1$ has the form $\pair{u}{v}$ and the
  reduct is $(u/x)t_2$.  As $\pair{u}{v} \in \llbracket A
  \with  B \rrbracket$, we have $u \in \llbracket A \rrbracket$.
  Hence, $(u/x)t_2 \in \llbracket C \rrbracket$.
\end{proof}

\begin{lemma}[Adequacy of $\elimwith^2$]
  \label{elimwith2}
  If $t_1 \in \llbracket A \with  B \rrbracket$ and,
  for all $u$ in $\llbracket B \rrbracket$,
  $(u/x)t_2 \in \llbracket C \rrbracket $, 
  then $\elimwith^2(t_1, \abstr{x}t_2) \in \llbracket C \rrbracket$.
\end{lemma}

\begin{proof}
  Similar to the proof of Lemma~\ref{elimwith1}.
\end{proof}

\begin{lemma}[Adequacy of $\elimplus$]
  \label{elimplus}
  If $t_1 \in \llbracket A \oplus B \rrbracket$, for all $u$ in $\llbracket A
  \rrbracket$, $(u/x)t_2 \in \llbracket C \rrbracket $, and, for all $v$
  in $\llbracket B \rrbracket$, $(v/y)t_3 \in \llbracket C \rrbracket $,
  then $\elimplus(t_1, \abstr{x}t_2, \abstr{y}t_3) \in \llbracket C \rrbracket$.
\end{lemma}

\begin{proof}
  By Lemma~\ref{Var}, $x \in \llbracket A \rrbracket$, thus $t_2 =
  (x/x)t_2 \in \llbracket C \rrbracket$. In the same way, $t_3 \in
  \llbracket C \rrbracket$.  Hence, $t_1$, $t_2$, and $t_3$ strongly
  terminate.  We prove, by induction on $\length{t_1} + \length{t_2} + \length{t_3}$,
  that $\elimplus(t_1, \abstr{x}t_2,
  \abstr{y}t_3) \in \llbracket C \rrbracket$.  Using Lemma~\ref{CR3}, we
  only need to prove that every of its one step reducts
  is in $\llbracket C \rrbracket$.  If the reduction takes place in
  $t_1$, $t_2$, or $t_3$, then we apply Lemma~\ref{closure} and
  the induction hypothesis. Otherwise, either:
  \begin{itemize}
    \item The proof $t_1$ has the form $\inl(w_2)$ and the reduct is
      $(w_2/x)t_2$. As $\inl(w_2) \in \llbracket A \oplus B \rrbracket$, we
      have $w_2 \in \llbracket A \rrbracket$.  Hence, $(w_2/x)t_2 \in
      \llbracket C \rrbracket$.

    \item The proof $t_1$ has the form $\inr(w_3)$ and the reduct is
      $(w_3/x)t_3$. As $\inr(w_3) \in \llbracket A \oplus B \rrbracket$, we
      have $w_3 \in \llbracket B \rrbracket$.  Hence, $(w_3/x)t_3 \in
      \llbracket C \rrbracket$.

    \item The proof $t_1$ has the form $t_1' \plus t''_1$ and the
      reduct is $\elimplus(t'_1, \abstr{x}t_2, \abstr{y}t_3) \plus
      \elimplus(t''_1, \abstr{x}t_2, \abstr{y}t_3)$. As $t_1
      \longrightarrow t'_1$ with an ultra-reduction rule, we have by
      Lemma~\ref{closure}, $t'_1 \in \llbracket A \oplus B
      \rrbracket$.  Similarly, $t''_1 \in \llbracket A \oplus B
      \rrbracket$.  Thus, by induction hypothesis, $\elimplus(t'_1,
      \abstr{x}t_2, \abstr{y}t_3) \in \llbracket A \oplus B \rrbracket$
      and $\elimplus(t''_1, \abstr{x}t_2, \abstr{y}t_3) \in \llbracket A
      \oplus B \rrbracket$.  We conclude with Lemma~\ref{sum}.

    \item The proof $t_1$ has the form $a \bullet t_1'$ and the
      reduct is $a \bullet \elimplus(t'_1, \abstr{x}t_2, \abstr{y}t_3)$. As $t_1
      \longrightarrow t'_1$ with an ultra-reduction rule, we have by
      Lemma~\ref{closure}, $t'_1 \in \llbracket A \oplus B
      \rrbracket$.  
      Thus, by induction hypothesis, $\elimplus(t'_1,
      \abstr{x}t_2, \abstr{y}t_3) \in \llbracket A \oplus B \rrbracket$.
      We conclude with Lemma~\ref{prod}.
      \qedhere
  \end{itemize}
\end{proof}

\begin{theorem}[Adequacy]
  Let $\Gamma\vdash t:A$ and
  $\sigma$ be a substitution mapping each variable $x:B\in\Gamma$ to an element
  of $\llbracket B \rrbracket$, then $\sigma t \in \llbracket A
  \rrbracket$.
\end{theorem}

\begin{proof}
  By induction on the structure of $t$.

  If $t$ is a variable, then, by definition of $\sigma$, $\sigma t \in
  \llbracket A \rrbracket$.  For the sixteen other proof constructors,
  we use the Lemmas~\ref{sum} to~\ref{elimplus}.  As all cases
  are similar, we just give a few examples.

  \begin{itemize}
    \item If $t = \pair{u}{v}$, where $u$ is a proof of $C$ and $v$ a proof of
      $D$, then, by induction hypothesis, $\sigma u \in \llbracket C
      \rrbracket$ and $\sigma v \in \llbracket D \rrbracket$.  Hence, by
      Lemma~\ref{pair}, $\pair{\sigma u}{\sigma v} \in \llbracket C \with  D
      \rrbracket$, that is $\sigma t \in \llbracket A \rrbracket$.

    \item If $t = \elimwith^1(u_1,\abstr{x}u_2)$, where
      $u_1$ is a proof of $C \with  D$ and $u_2$ a proof of $A$, then, by
      induction hypothesis, $\sigma u_1 \in
      \llbracket C \with  D \rrbracket$, for all $v$ in $\llbracket C
      \rrbracket$, $(v/x)\sigma u_2 \in \llbracket A \rrbracket$. Hence,
      by Lemma~\ref{elimwith1},
      $\elimwith^1(\sigma u_1,\abstr{x} \sigma u_2)
      \in \llbracket A \rrbracket$, that is $\sigma t \in \llbracket A
      \rrbracket$.
      \qedhere
  \end{itemize}
\end{proof}

\begin{corollary}[Termination]
  \label{termination}
  Let $\Gamma\vdash t:A$. Then,
  $t$ strongly terminates.
\end{corollary}

\begin{proof}
  Let $\sigma$ be the substitution mapping each variable $x:B\in\Gamma$ 
  to itself. Note that, by Lemma~\ref{Var}, this variable is an
  element of $\llbracket B \rrbracket$.  Then, $t = \sigma t$ is an
  element of $\llbracket A \rrbracket$. Hence, it strongly terminates.
\end{proof}

\subsection{Introduction}

We now prove that closed irreducible proofs end with an introduction rule,
except those of propositions of the form $A \otimes B$ and
$A \oplus B$, that are linear combinations of 
proofs ending with an introduction rule.

\begin{theorem}[Introduction]
  \label{introductions}
  Let $t$ be a closed irreducible proof of $A$.
  \begin{itemize}
    \item If $A$ has the form $\one$, then $t$ has the form $a.\star$.

    \item If $A$ has the form $B \multimap C$, then $t$ has the form
      $\lambda \abstr{x}u$.

    \item If $A$ has the form $B \otimes C$, then $t$ has the form $u \otimes v$,
      $u \plus v$, or $a \bullet u$.

    \item If $A$ has the form $\top$, then $t$ is $\topintro$.

    \item The proposition $A$ is not $\zero$.

    \item If $A$ has the form $B \with  C$, then $t$ has the form
      $\pair{u}{v}$.

    \item If $A$ has the form $B \oplus C$, then $t$ has the form $\inl(u)$,
      $\inr(u)$, $u \plus v$, or $a \bullet u$.
  \end{itemize}
\end{theorem}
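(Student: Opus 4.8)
The plan is to argue by induction on the structure of the closed irreducible proof $t$, establishing all five assertions simultaneously. Since $t$ is closed it cannot be a variable, so $t$ is an introduction, an elimination, or one of the two remaining forms $t_1 \plus t_2$ and $a \bullet t_1$. The first step is to rule out the eliminations, that is, to show that no closed irreducible proof is an elimination. For each elimination the major premise is a proper subterm, hence irreducible, and it is moreover closed: the context of $t$ is empty, and the typing rules split it into the empty context on each branch, so the premise carrying the eliminated connective is typed in the empty context. The induction hypothesis then fixes the shape of that major premise according to its type — $a.\star$ for $\top$, a $\lambda$-abstraction for $\Rightarrow$, a pair for $\wedge$, and one of $\inl$, $\inr$, a sum, or a prod for $\vee$ — and in every case this exhibits exactly the left-hand side of one of the first six reduction rules, or, in the $\elimor(t_1,\ldots)$ case with $t_1$ a sum or prod, one of the two $\elimor$ commutation rules, contradicting irreducibility. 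The subcase $\elimbot(t_1)$ is vacuous, since the induction hypothesis already asserts that there is no closed irreducible proof of $\bot$.

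Once eliminations are excluded, $t$ is an introduction, a sum, or a prod, and I would conclude by case analysis on $A$. If $t$ is an introduction, its outermost constructor determines $A$: the only introduction of type $\top$ is $a.\star$, of type $B \Rightarrow C$ a $\lambda$-abstraction, of type $B \wedge C$ a pair, and of type $B \vee C$ an $\inl$ or an $\inr$; no introduction has type $\bot$. It then remains to treat the forms $t_1 \plus t_2$ and $a \bullet t_1$, where both immediate subterms are again closed and irreducible of the same type $A$, so the induction hypothesis applies to them. For $A \in \{\top, B \Rightarrow C, B \wedge C\}$ those subterms must be $\top$-introductions, abstractions, or pairs, whence $t$ matches the left-hand side of one of the commutation rules ${a.\star} \plus b.\star$, $(\lambda \abstr{x}t)\plus(\lambda\abstr{x}u)$, $\pair{t}{u}\plus\pair{v}{w}$, $a\bullet b.\star$, and so on, again contradicting irreducibility; and for $A = \bot$ no such subterm exists. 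This leaves $A = B \vee C$, where $t_1 \plus t_2$ and $a \bullet t_1$ survive as exactly the additional admissible shapes $u \plus v$ and $a \bullet u$ recorded in the statement.

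The conceptual crux — and the only place where anything beyond bookkeeping occurs — is the disjunction. The reduction rules of Figure \ref{figureductionrules} are deliberately arranged so that sum and prod commute with the \emph{introductions} of $\top$, $\Rightarrow$, and $\wedge$, but with the \emph{elimination} of $\vee$. Consequently $a \bullet \inl(u)$ and $\inl(u) \plus \inr(v)$ carry no redex, which is precisely why the disjunction clause must allow the extra forms $u \plus v$ and $a \bullet u$, whereas for the other connectives the analogous configurations reduce and are thereby excluded. The main thing to get right is therefore not a hard estimate but the faithful matching of each closed irreducible candidate against the redex patterns of Figure \ref{figureductionrules}, together with the routine verification that closedness of $t$ propagates to the relevant subterms.
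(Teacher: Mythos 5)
Your proof is correct, but it takes a different route from the paper: the paper's entire proof is a transfer argument, observing that every $\mathcal{L}^{\mathcal S}$-proof is also an $\odot^{\mathcal S}$-proof with the same irreducible form, and then invoking the introduction theorem already established for the $\odot^{\mathcal S}$-calculus in the earlier work. You instead give a self-contained simultaneous structural induction, and you get the case analysis right at every point that matters: closedness forces the major premise of each elimination to be closed (since the split contexts $\Gamma$ and $\Delta$ must both be empty), the induction hypothesis then pins its shape so that one of the six cut rules or one of the two $\elimor$ commutation rules fires, the $\elimbot$ case is dispatched by the $\bot$ clause of the hypothesis, and the sum/prod cases reduce for $\top$, $\Rightarrow$, $\wedge$ via the commutation rules while surviving exactly for $\vee$, which is why the disjunction clause lists the extra forms. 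What your approach buys is independence from the prior paper and an explicit exhibition of which reduction rule kills each candidate shape; what the paper's approach buys is brevity and the reuse of a result that also covers termination. The only implicit ingredient you rely on is that reduction is closed under contexts (so subterms of irreducible terms are irreducible), which the paper assumes throughout; it would be worth one sentence to state it.
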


\begin{proof}
  By induction on the structure of $t$.

  We first remark that, as the proof $t$ is closed, it is not a
  variable. Then, we prove that it cannot be an elimination.
  \begin{itemize}

    \item If $t = \elimone(u,v)$, then $u$ is a closed
      irreducible proof of $\one$, hence, by induction
      hypothesis, it has the form $a.\star$
      and the proof $t$ is reducible.

    \item If $t = u~v$, then $u$ is a closed  irreducible proof of $B
      \multimap A$, hence, by induction hypothesis, it has the form
      $\lambda \abstr{x}u_1$ and the proof $t$ is reducible.

    \item If $t = \elimtens(u,\abstr{xy}v)$, then $u$ is a closed
      irreducible proof of $B \otimes C$, hence, by induction hypothesis, it
      has the form $u_1 \otimes u_2$, $u_1 \plus u_2$, or $a \bullet
      u_1$ and the proof $t$ is reducible.

    \item If $t = \elimzero(u)$, then $u$ is a closed irreducible proof of
      $\zero$ and, by induction hypothesis, no such proof exists.

    \item If $t = \elimwith^1(u,\abstr{x}v)$, then $u$ is a closed
      irreducible proof of $B \with  C$, hence, by induction
      hypothesis, it has the form $\pair{u_1}{u_2}$
      and the proof $t$ is reducible.

    \item If $t = \elimwith^2(u,\abstr{x}v)$, then $u$ is a closed
      irreducible proof of $B \with  C$, hence, by induction
      hypothesis, it has the form $\pair{u_1}{u_2}$
      and the proof $t$ is reducible.

    \item If $t = \elimplus(u,\abstr{x}v,\abstr{y}w)$, then $u$ is a closed
      irreducible proof of $B \oplus C$, hence, by induction hypothesis, it
      has the form $\inl(u_1)$, $\inr(u_1)$, $u_1 \plus u_2$, or $a \bullet
      u_1$ and the proof $t$ is reducible.

  \end{itemize}

  Hence, $t$ is an introduction, a sum, or a product.

  It $t$ has the form $a.\star$, then $A$ is $\one$.  If it has the form
  $\lambda \abstr{x}u$, then $A$ has the form $B \multimap C$.
  If it
  has the form $u \otimes v$, then $A$ has the form $B \otimes C$.
  If it
  is $\topintro$, then $A$ is $\top$.
  If it
  has the form $\pair{u}{v}$, then $A$ has the form $B \with  C$.
  If it has
  the form $\inl(u)$ or $\inr(u)$, then $A$ has the form $B \oplus C$.
  We prove that, if it has the form $u \plus v$ or $a \bullet u$, $A$ has
  the form $B \otimes C$ or $B \oplus C$.

  \begin{itemize}
    \item 
      If $t = u \plus v$, then 
      the proofs $u$ and $v$ are two closed and irreducible proofs of
      $A$. If $A = \one$ then, by induction hypothesis, they both have the form
      $a.\star$
      and the proof $t$ is reducible.
      If $A$ has the form $B \multimap C$ then, by
      induction hypothesis, they are both abstractions and the proof $t$ is
      reducible.
      If $A = \top$ then, by induction hypothesis, they both are 
      $\topintro$
      and the proof $t$ is reducible.
      If $A = \zero$ then, they are
      irreducible proofs of $\zero$ and, by induction hypothesis, no such
      proofs exist.
      If $A$ has the form $B \with  C$, then, by induction
      hypothesis, they are both pairs and the proof $t$ is reducible.
      Hence, 
      $A$ has the form $B \otimes C$ or $B \oplus C$.

    \item
      If $t = a \bullet u$, then 
      the proofs $u$ is a closed and irreducible proof of
      $A$. If $A = \one$ then, by induction hypothesis, $u$ has the form $b.\star$
      and the proof $t$ is reducible.
      If $A$ has the form $B \multimap C$ then, by
      induction hypothesis, it is an abstraction and the proof $t$ is
      reducible.
      If $A = \top$ then, by induction hypothesis, it is $\topintro$
      and the proof $t$ is reducible.
      If $A = \zero$ then, it is 
      an irreducible proof of $\zero$ and, by induction hypothesis, no such
      proof exists.
      If $A$ has the form $B \with  C$, then, by induction
      hypothesis, it is a pair and the proof $t$ is reducible.
      Hence,
      $A$ has the form $B \otimes C$ or $B \oplus C$.
      \qedhere
  \end{itemize}
\end{proof}

\begin{remark}
  We reap here the benefit of commuting, when possible, the interstitial
  rules with the introduction rules, as closed irreducible proofs of
  $\one$, $A \multimap B$, $\top$ and $A \with B$ are genuine
  introductions.

  Those of $A \otimes B$ and $A \oplus B$ are linear combinations of
  introductions.  But $u \otimes v \plus u' \otimes v'$ is not
  convertible to $u' \otimes v' \plus u \otimes v$, $u \otimes v \plus u
  \otimes v$ is not convertible to $2 \bullet (u \otimes v)$, $u_1
  \otimes v \plus u_2 \otimes v$ is not convertible to $(u_1 \plus u_2)
  \otimes v$, etc. Thus, the proof of $A \otimes B$ are formal, rather
  than genuine, linear combinations of pairs formed with a closed
  irreducible proof of $A$ and a closed irreducible proof of $B$.  Such
  a set still need to be quotiented by a proper equivalence relation to
  provide the tensor product of the two semi-modules
  \cite{DiazcaroMalherbe22}.
\end{remark}

\begin{lemma}[Disjunction]
  If the proposition $A \oplus B$ has a closed proof, then $A$ has a
  closed proof or $B$ has a closed proof.
\end{lemma}

\begin{proof}
  Consider a closed proof of $A \oplus B$ and its irreducible form $t$.
  We prove, by induction on the structure of $t$, that $A$ has a
  closed proof or $B$ has a closed proof. By Theorem~\ref{introductions}, $t$
  has the form $\inl(u)$, $\inr(u)$, $u \plus
  v$, or $a \bullet u$.  If it has the form $\inl(u)$, $u$ is a closed
  proof of $A$. If it has the form $\inr(u)$, $u$ is a closed proof of
  $B$. If it has the form $u \plus v$ or $a \bullet u$, $u$ is a closed
  irreducible proof of $A \oplus B$. Thus, by induction hypothesis, $A$
  has a closed proof or $B$ has a closed proof.
\end{proof}

\section{Vectors and matrices}
\label{secvectorsmatrices}

From now on, we take the set of scalars $\mathcal S$ to be a field, rather than just a semi-ring, to aid the intuition with vector spaces. However, all the results are also valid for semi-modules over the semi-ring $\mathcal S$, except those considering the additive inverse (namely, Definition~\ref{def:additiveinverse} and item~\ref{vect:negative} of Lemma~\ref{vecstructure}).

\subsection{Vectors}
\label{secvectors}

As there is one rule $\one$-i for each scalar $a$, there is one closed
irreducible proof $a.\star$ for each scalar $a$.  Thus, the closed irreducible
proofs $a.\star$ of $\one$ are in one-to-one correspondence with the elements
of ${\mathcal S}$.  Therefore, the proofs $\pair{a.\star}{b.\star}$ of $\one
\with  \one$ are in one-to-one with the elements of ${\mathcal S}^2$, the
proofs $\pair{\pair{a.\star}{b.\star}}{c.\star}$ of $(\one \with  \one) \with 
\one$, and also the proofs $\pair{a.\star}{\pair{b.\star}{c.\star}}$ of $\one
\with  (\one \with  \one)$, are in one-to-one correspondence with the elements
of ${\mathcal S}^3$, etc.

Hence, as any vector space of finite dimension $n$ is isomorphic to
${\mathcal S}^n$, we have a way to express the vectors of any
${\mathcal S}$-vector space of finite dimension.  Yet, choosing an
isomorphism between a vector space and ${\mathcal S}^n$ amounts to
choosing a basis in this vector space, thus the expression of a vector
depends on the choice of a basis. This situation is analogous to that
of matrix formalisms. Matrices can represent vectors and linear
functions, but the matrix representation is restricted to finite
dimensional vector spaces, and the representation of a vector depends
on the choice of a basis. A change of basis in the vector space is
reflected by the use of a transformation matrix.

\begin{definition}[${\mathcal V}$]
  The set ${\mathcal V}$ is inductively defined as follows: $\one \in
  {\mathcal V}$, and if $A$ and $B$ are in ${\mathcal V}$, then so is $A
  \with  B$.
\end{definition}

We now show that if $A \in {\mathcal V}$, then the set of closed
irreducible proofs of $A$ has a vector space structure.

\begin{definition}[Zero vector]
  \label{def:zerovector}
  If $A \in {\mathcal V}$, we define the proof $0_A$ of $A$ by induction
  on $A$.  If $A = \one$, then $0_A = 0.\star$.  If $A = A_1 \with  A_2$,
  then $0_A = \pair{0_{A_1}}{0_{A_2}}$.
\end{definition}

\begin{definition}[Additive inverse]\label{def:additiveinverse}
  If $A \in {\mathcal V}$, and $t$ is a proof of $A$, we define the
  proof $- t$ of $A$ by induction on $A$.  If $A = \one$, then $t$
  reduces to $a.\star$, we let $- t = (-a).\star$. If $A = A_1 \with 
  A_2$, $t$ reduces to $\pair{t_1}{t_2}$ where $t_1$ is a proof of $A_1$
  and $t_2$ of $A_2$. We let $-t = \pair{- t_1}{- t_2}$.
\end{definition}

\begin{lemma} \label{vecstructure}
  If $A \in {\mathcal V}$ and $t$, $t_1$, $t_2$, and $t_3$ are closed proofs of
  $A$, then

  \begin{multicols}{2}
    \begin{enumerate}
      \item $(t_1 \plus t_2) \plus t_3 \equiv t_1 \plus (t_2 \plus t_3)$
      \item $t_1 \plus t_2 \equiv t_2 \plus t_1$
      \item $t \plus 0_A \equiv t$
      \item\label{vect:negative} $t \plus - t \equiv 0_A$
      \item $a \bullet b \bullet t \equiv (a \times b) \bullet t$
      \item $1 \bullet t \equiv t$
      \item $a \bullet (t_1 \plus t_2) \equiv a \bullet t_1 \plus a \bullet t_2$
      \item $(a + b) \bullet t \equiv a \bullet t \plus b \bullet t$
    \end{enumerate}
  \end{multicols}
\end{lemma}

\begin{proof}
  ~
  \begin{enumerate}
    \item By induction on $A$. If $A = \one$, then $t_1$, $t_2$, and $t_3$
      reduce respectively to $a.\star$, $b.\star$, and $c.\star$. We have
      $$(t_1 \plus t_2) \plus t_3 \lras ((a + b) + c).\star = (a + (b +
      c)).\star \llas t_1 \plus (t_2 \plus t_3)$$
      If $A = A_1 \with  A_2$, then $t_1$, $t_2$, and $t_3$ reduce
      respectively to $\pair{u_1}{v_1}$, $\pair{u_2}{v_2}$, and
      $\pair{u_3}{v_3}$.  Using the induction hypothesis, we have
      \begin{align*}
	(t_1 \plus t_2) \plus t_3 \lras &\ 
	\pair{(u_1 \plus u_2) \plus u_3}{(v_1 \plus v_2) \plus v_3}\\
	\equiv &\ 
	\pair{u_1 \plus (u_2 \plus u_3)}{v_1 \plus (v_2 \plus v_3)}
	\llas t_1 \plus (t_2 \plus t_3)
      \end{align*}

    \item By induction on $A$.  If $A = \one$, then $t_1$ and $t_2$ reduce
      respectively to $a.\star$ and $b.\star$. We have
      $$t_1 \plus t_2 \lras (a + b).\star = (b + a).\star \llas t_2 \plus t_1$$
      If $A = A_1 \with  A_2$, then $t_1$ and $t_2$ reduce respectively to
      $\pair{u_1}{v_1}$ and $\pair{u_2}{v_2}$.  Using the induction
      hypothesis, we have
      $$t_1 \plus t_2 \lras \pair{u_1 \plus u_2}{v_1 \plus v_2} \equiv
      \pair{u_2 \plus u_1}{v_2 \plus v_1} \llas t_2 \plus t_1$$

    \item By induction on $A$. If $A = \one$, then $t$ reduces to
      $a.\star$. We have
      $$t \plus 0_A \lras (a + 0).\star = a.\star \llas t$$
      If $A = A_1 \with  A_2$, then $t$ reduces to $\pair{u}{v}$.  Using
      the induction hypothesis, we have
      $$t \plus 0_A \lras \pair{u \plus 0_{A_1}}{v \plus 0_{A_2}}
      \equiv \pair{u}{v} \llas t$$

    \item By induction on $A$.  If $A = \one$, then $t$ reduces to
      $a.\star$.  We have
      $$t \plus -t \lras {a.\star} \plus (-a).\star \lra (a+(-a)).\star
      = 0.\star = 0_A$$
      If $A = A_1 \with  A_2$, then $t$ reduces to $\pair{u}{v}$.  Using
      the induction hypothesis, we have
      $$t \plus - t \lras \pair{u \plus - u}{v \plus - v} \equiv
      \pair{0_{A_1}}{0_{A_2}} = 0_A$$

    \item By induction on $A$.  If $A = \one$, then $t$ reduces to
      $c.\star$. We have
      $$a \bullet b \bullet t \lras (a \times (b \times c)).\star
      = ((a \times b) \times c).\star \llas (a \times b) \bullet t$$
      If $A = A_1 \with  A_2$, then $t$ reduces to $\pair{u}{v}$. Using
      the induction hypothesis, we have
      $$a \bullet b \bullet t \lras \pair{a \bullet b \bullet u}{a \bullet b
      \bullet v} \equiv \pair{(a \times b) \bullet u}{(a \times b) \bullet v}
      \llas (a \times b) \bullet t$$

    \item By induction on $A$.  If $A = \one$, then $t$ reduces to
      $a.\star$. We have
      $$1 \bullet t \lras(1 \times a).\star = a.\star \llas t$$
      If $A = A_1 \with  A_2$, then $t$ reduces to $\pair{u}{v}$. Using
      the induction hypothesis, we have
      $$1 \bullet t \lras \pair{1 \bullet u}{1 \bullet v} \equiv \pair{u}{v}
      \llas t$$

    \item By induction on $A$.  If $A = \one$, then $t_1$ and $t_2$ reduce
      respectively to $b.\star$ and $c.\star$. We have
      $$a \bullet (t_1 \plus t_2) \lras (a \times (b + c)).\star
      = (a \times b + a \times c).\star \llas  a \bullet t_1 \plus a \bullet t_2$$
      If $A = A_1 \with  A_2$, then $t_1$ and $t_2$ reduce respectively to
      $\pair{u_1}{v_1}$ and $\pair{u_2}{v_2}$.  Using the induction
      hypothesis, we have
      \begin{align*}
      a \bullet (t_1 \plus t_2) \lras &\ \pair{a \bullet (u_1 \plus u_2)}{a
      \bullet (v_1 \plus v_2)}\\
      \equiv &\ \pair{a \bullet u_1 \plus a
      \bullet u_2}{a \bullet v_1 \plus a \bullet v_2} \llas a \bullet
      t_1 \plus a \bullet t_2
      \end{align*}

    \item By induction on $A$.  If $A = \one$, then $t$ reduces to
      $c.\star$. We have
      $$(a + b) \bullet t \lras ((a + b) \times c).\star =
      (a \times c + b \times c).\star \llas a \bullet t \plus b \bullet t$$
      If $A = A_1 \with  A_2$, then $t$ reduces to $\pair{u}{v}$.  Using
      the induction hypothesis, we have
      \begin{align*}
        (a + b) \bullet t 
        \lras~&\pair{(a + b) \bullet u}{(a + b) \bullet v}\\
        \equiv~& \pair{a \bullet u \plus b \bullet u}{a \bullet v \plus b \bullet v}
        \llas a \bullet t \plus b \bullet t
        \tag*{\qedhere}
      \end{align*}
  \end{enumerate}
\end{proof}

\begin{definition}[Dimension of a proposition in ${\mathcal V}$]
  To each proposition $A \in {\mathcal V}$, we associate a positive
  natural number $d(A)$, which is the number of occurrences of the
  symbol $\one$ in $A$: $d(\one) = 1$ and $d(B
  \with  C) = d(B) + d(C)$.
\end{definition}

If $A \in {\mathcal V}$ and $d(A) = n$, then the closed irreducible proofs
of $A$ and the vectors of ${\mathcal S}^n$ are in one-to-one
correspondence: to each closed irreducible proof $t$ of $A$, we
associate a vector $\underline{t}$ of ${\mathcal S}^n$ and to each
vector ${\bf u}$ of ${\mathcal S}^n$, we associate a closed
irreducible proof $\overline{\bf u}^A$ of $A$.

\begin{definition}[One-to-one correspondence]
  \label{onetoone}
  Let $A \in {\mathcal V}$ with $d(A) = n$.  To each closed irreducible
  proof $t$ of $A$, we associate a vector $\underline{t}$ of ${\mathcal
  S}^n$ as follows.
  \begin{itemize}
    \item
      If $A = \one$, then $t = a.\star$. We let $\underline{t} =
      \left(\begin{smallmatrix} a \end{smallmatrix}\right)$.

    \item
      If $A = A_1 \with  A_2$, then $t = \pair{u}{v}$.  We let
      $\underline{t}$ be the vector with two blocks $\underline{u}$ and
      $\underline{v}$: $\underline{t} = \left(\begin{smallmatrix}
        \underline{u}\\\underline{v} \end{smallmatrix}\right)$.
      Remind that, using the block notation, if ${\bf u} =
      \left(\begin{smallmatrix} 1\\2 \end{smallmatrix}\right)$
      and ${\bf v} =
      \left(\begin{smallmatrix} 3 \end{smallmatrix}\right)$,
      then 
      $\left(\begin{smallmatrix}
        {\bf u}\\{\bf v} \end{smallmatrix}\right) =
        \left(\begin{smallmatrix}
        1\\2\\3 \end{smallmatrix}\right)$ and not
       $\left(\begin{smallmatrix}
          \left(\begin{smallmatrix} 1\\2 \end{smallmatrix}\right)
          \\
          \left(\begin{smallmatrix} 3 \end{smallmatrix}\right)
        \end{smallmatrix}\right)$.
  \end{itemize}

  To each vector ${\bf u}$ of ${\mathcal S}^n$, we associate a closed
  irreducible proof $\overline{\bf u}^A$ of $A$.

  \begin{itemize}
    \item If $n = 1$, then ${\bf u} = \left(\begin{smallmatrix}
      a \end{smallmatrix}\right)$. We let $\overline{\bf u}^A = a.\star$.

    \item If $n > 1$, then $A = A_1 \with  A_2$, let $n_1$ and $n_2$ be
      the dimensions of $A_1$ and $A_2$.  Let ${\bf u}_1$ and ${\bf u}_2$
      be the two blocks of ${\bf u}$ of $n_1$ and $n_2$ lines, so ${\bf u}
      = \left(\begin{smallmatrix} {\bf u}_1\\ {\bf
      u}_2\end{smallmatrix}\right)$.  We let $\overline{\bf u}^A =
      \pair{\overline{{\bf u}_1}^{A_1}}{\overline{{\bf u}_2}^{A_2}}$.
  \end{itemize}
\end{definition}

We extend the definition of $\underline{t}$ to any closed proof of
$A$, $\underline{t}$ is by definition $\underline{t'}$ where $t'$ is
the irreducible form of $t$.

The next lemmas show that the symbol $\plus$ expresses the sum of
vectors and the symbol $\bullet$, the product of a vector by a scalar.

\begin{lemma}[Sum of two vectors]
  \label{parallelsum}
  Let $A \in {\mathcal V}$, and $u$ and $v$ be two closed proofs of $A$.
  Then, $\underline{u \plus v} = \underline{u} + \underline{v}$.
\end{lemma}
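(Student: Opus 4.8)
The plan is to argue by induction on the structure of the proposition $A \in {\mathcal V}$. Before starting, I would record that $\underline{\cdot}$ is well defined on an arbitrary closed proof: by Theorem~\ref{th:SN} every closed proof has an irreducible form, by Theorem~\ref{th:Confluence} this form is unique, and by Theorem~\ref{introductions} it has exactly the shape required by Definition~\ref{onetoone}, namely $a.\star$ when $A = \top$ and a pair $\pair{\cdot}{\cdot}$ when $A = A_1 \wedge A_2$. I would also use the elementary fact that the irreducible form of a pair $\pair{s}{t}$ is $\pair{s'}{t'}$, where $s'$ and $t'$ are the irreducible forms of $s$ and $t$: a bare pair carries no top-level redex, so all reductions take place independently inside the two components.

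For the base case $A = \top$, the closed proofs $u$ and $v$ reduce to $a.\star$ and $b.\star$ for some scalars $a$ and $b$. Using the congruence rules and then the rule ${a.\star} \plus b.\star \lra (a+b).\star$, I would get $u \plus v \lras {a.\star} \plus b.\star \lra (a+b).\star$, so the irreducible form of $u \plus v$ is $(a+b).\star$. Hence $\underline{u \plus v} = (a+b) = \underline{u} + \underline{v}$.

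For the inductive step $A = A_1 \wedge A_2$, the proofs $u$ and $v$ reduce to pairs $\pair{u_1}{u_2}$ and $\pair{v_1}{v_2}$, where $u_i$ and $v_i$ are closed proofs of $A_i$. Using the commutation rule $\pair{t}{u} \plus \pair{v}{w} \lra \pair{t \plus v}{u \plus w}$, I would obtain $u \plus v \lras \pair{u_1}{u_2} \plus \pair{v_1}{v_2} \lra \pair{u_1 \plus v_1}{u_2 \plus v_2}$. By the fact recorded above and confluence, the irreducible form of $u \plus v$ is therefore $\pair{w_1}{w_2}$ with $w_i$ the irreducible form of $u_i \plus v_i$, so that $\underline{u \plus v} = \left(\begin{smallmatrix}\underline{u_1 \plus v_1}\\\underline{u_2 \plus v_2}\end{smallmatrix}\right)$. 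Applying the induction hypothesis to each $A_i$ gives $\underline{u_i \plus v_i} = \underline{u_i} + \underline{v_i}$, whence $\underline{u \plus v} = \left(\begin{smallmatrix}\underline{u_1} + \underline{v_1}\\\underline{u_2} + \underline{v_2}\end{smallmatrix}\right) = \left(\begin{smallmatrix}\underline{u_1}\\\underline{u_2}\end{smallmatrix}\right) + \left(\begin{smallmatrix}\underline{v_1}\\\underline{v_2}\end{smallmatrix}\right) = \underline{u} + \underline{v}$.

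The computation itself is routine; the one point that deserves care, and which I would single out as the main obstacle, is the justification that passing to irreducible forms commutes with pairing and with the $\plus$ on each component. This is exactly where confluence is needed: it guarantees that the particular reduction sequence chosen above (commute $\plus$ inward, then normalise the components) reaches the same irreducible proof as any other, so that $\underline{u \plus v}$ is unambiguously the blockwise sum. Everything else is a direct unfolding of Definition~\ref{onetoone}.
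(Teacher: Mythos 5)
Your proof is correct and follows essentially the same route as the paper's: induction on $A$, using the rule ${a.\star}\plus b.\star \lra (a+b).\star$ in the base case and the commutation rule $\pair{t}{u}\plus\pair{v}{w}\lra\pair{t\plus v}{u\plus w}$ together with the induction hypothesis in the step. The only difference is that you spell out the well-definedness of $\underline{\cdot}$ via termination, confluence, and the introduction property, which the paper leaves implicit.
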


\begin{proof}
  By induction on $A$.

  \begin{itemize}
    \item  If $A = \one$, then $u \lras a.\star$, $v \lras b.\star$,
      $\underline{u} =
      \left(\begin{smallmatrix} a \end{smallmatrix}\right)$,
      $\underline{v} =
      \left(\begin{smallmatrix} b \end{smallmatrix}\right)$.
      Thus, $\underline{u \plus v} = \underline{{a.\star} \plus {b.\star}} =
      \underline{(a + b).\star}
      = \left(\begin{smallmatrix} a + b \end{smallmatrix}\right)
      = \left(\begin{smallmatrix} a \end{smallmatrix}\right)
      + \left(\begin{smallmatrix} b \end{smallmatrix}\right)
      = \underline {u} + \underline{v}$.

    \item 
      If $A = A_1 \with  A_2$, then $u \lras \pair{u_1}{u_2}$, $v \lra
      \pair{v_1}{v_2}$, $\underline{u} = \left(\begin{smallmatrix}
	  \underline{u_1} \\ \underline{u_2}
      \end{smallmatrix}\right)$ and $\underline{v} = \left(\begin{smallmatrix} \underline{v_1} \\ \underline{v_2} \end{smallmatrix}\right)$.
      Thus, using the induction hypothesis, $\underline{u \plus v} =
      \underline{\pair{u_1}{u_2} \plus \pair{v_1}{v_2}} =
      \underline{\pair{u_1 \plus v_1}{u_2 \plus v_2}} =
      \left(\begin{smallmatrix} \underline{u_1 \plus v_1} \\ \underline{u_2
	  \plus v_2}
      \end{smallmatrix}\right)
      = 
      \left(\begin{smallmatrix} \underline{u_1} + \underline{v_1} \\
	  \underline{u_2} + \underline{v_2}
      \end{smallmatrix}\right)
      =
      \left(\begin{smallmatrix} \underline{u_1} \\
	  \underline{u_2}
      \end{smallmatrix}\right)
      +
      \left(\begin{smallmatrix} \underline{v_1} \\
	  \underline{v_2}
      \end{smallmatrix}\right) = 
      \underline{u} + \underline{v}$.
      \qedhere
  \end{itemize}
\end{proof}

\begin{lemma}[Product of a vector by a scalar]
  \label{parallelprod}
  Let $A \in {\mathcal V}$ and $u$ be a closed proof of $A$.  Then,
  $\underline{a \bullet u} = a \underline{u}$.
\end{lemma}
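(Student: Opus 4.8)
The plan is to prove $\underline{a \bullet u} = a\underline{u}$ by structural induction on $A \in \mathcal{V}$, following exactly the template established in the proof of Lemma~\ref{parallelsum} for the sum. The proof of the sum case relied on three ingredients that all have direct analogues here: the introduction property (Theorem~\ref{introductions}) to pin down the shape of the irreducible form, the reduction rules for $\bullet$ from Figure~\ref{figureductionrules}, and the recursive definition of $\underline{(\cdot)}$ from Definition~\ref{onetoone}. Since $\underline{t}$ is defined as $\underline{t'}$ for the irreducible form $t'$, I may freely reduce $u$ to its normal form before computing.

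First I would treat the base case $A = \top$. Here $u \lras a'.\star$ for some scalar $a'$, so $\underline{u} = \left(\begin{smallmatrix} a' \end{smallmatrix}\right)$. Then $a \bullet u \lras a \bullet a'.\star \lra (a \times a').\star$ by the reduction rule $a \bullet b.\star \lra (a \times b).\star$, whence $\underline{a \bullet u} = \left(\begin{smallmatrix} a \times a' \end{smallmatrix}\right) = a\left(\begin{smallmatrix} a' \end{smallmatrix}\right) = a\underline{u}$. For the inductive step $A = A_1 \wedge A_2$, the introduction property forces $u \lras \pair{u_1}{u_2}$ with $u_1$ a closed proof of $A_1$ and $u_2$ a closed proof of $A_2$, so $\underline{u} = \left(\begin{smallmatrix} \underline{u_1} \\ \underline{u_2} \end{smallmatrix}\right)$. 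Using the reduction rule $a \bullet \pair{t}{u} \lra \pair{a \bullet t}{a \bullet u}$, I get $a \bullet u \lras \pair{a \bullet u_1}{a \bullet u_2}$, so that
$$\underline{a \bullet u} = \left(\begin{smallmatrix} \underline{a \bullet u_1} \\ \underline{a \bullet u_2} \end{smallmatrix}\right) = \left(\begin{smallmatrix} a\underline{u_1} \\ a\underline{u_2} \end{smallmatrix}\right) = a\left(\begin{smallmatrix} \underline{u_1} \\ \underline{u_2} \end{smallmatrix}\right) = a\underline{u},$$
where the second equality is the two applications of the induction hypothesis to $u_1$ and $u_2$, and the third is the defining property of scalar multiplication acting blockwise on a column vector of $\mathcal{S}^n$.

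I do not expect any serious obstacle here; this lemma is genuinely routine once the sum lemma is in hand, and the argument is an almost verbatim transcription of it with $\plus$ replaced by $\bullet$ and vector addition replaced by scalar multiplication. The only point requiring a little care is the implicit appeal to confluence and termination (Theorems~\ref{th:Confluence} and \ref{th:SN}): these guarantee that $u$ has a unique irreducible form, so that $\underline{u}$ is well defined, and that reducing $a \bullet u$ through the commutation rules lands on the normal form $\pair{a \bullet u_1}{a \bullet u_2}$ whose associated vector I am computing. Subject reduction (Theorem~\ref{th:subjectreduction}) ensures throughout that the subterms $u_1, u_2$ remain correctly typed proofs of $A_1, A_2$, so the induction hypothesis applies to them.
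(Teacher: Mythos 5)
Your proof is correct and follows essentially the same route as the paper's: induction on $A$, reducing $u$ to its irreducible form ($a'.\star$ or a pair), applying the commutation rules $a \bullet b.\star \lra (a\times b).\star$ and $a \bullet \pair{t}{u} \lra \pair{a\bullet t}{a\bullet u}$, and invoking the induction hypothesis blockwise. The extra remarks on confluence, termination, and subject reduction are sound justifications that the paper leaves implicit.
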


\begin{proof}
  By induction on $A$. 

  \begin{itemize}
    \item If $A = \one$, then $u \lras b.\star$, $\underline{u} =
      \left(\begin{smallmatrix} b \end{smallmatrix}\right)$, Thus
      $\underline{a \bullet u} = \underline{a \bullet b.\star} =
      \underline{(a \times b).\star} = \left(\begin{smallmatrix} a \times
      b \end{smallmatrix}\right) = a \left(\begin{smallmatrix}
      b \end{smallmatrix}\right) = a \underline {u}$.

    \item If $A = A_1 \with  A_2$, then $u \lras \pair{u_1}{u_2}$,
      $\underline{u} =
      \left(\begin{smallmatrix} \underline{u_1} \\ \underline{u_2}
      \end{smallmatrix}\right)$.
      Thus, using the induction hypothesis, $\underline{a \bullet u} =
      \underline{a \bullet \pair{u_1}{u_2}} = \underline{\pair{a \bullet
      u_1}{a \bullet u_2}} = \left(\begin{smallmatrix} \underline{a
	  \bullet u_1} \\ \underline{a \bullet u_2}
      \end{smallmatrix}\right)
      = 
      \left(\begin{smallmatrix} a \underline{u_1}  \\
	  a \underline{u_2} 
      \end{smallmatrix}\right)
      =
      a \left(\begin{smallmatrix} \underline{u_1} \\
	  \underline{u_2}
      \end{smallmatrix}\right)
      = a \underline{u}$
      \qedhere
  \end{itemize}
\end{proof}

\begin{remark}
  We have seen that the rules
  \[
    \begin{array}{r@{\,}l@{\qquad\qquad}r@{\,}l}
      {a.\star} \plus b.\star & \longrightarrow  (a+b).\star
      &
      a \bullet b.\star & \longrightarrow  (a \times b).\star\\
      \pair{t}{u} \plus \pair{v}{w} & \longrightarrow  \pair{t \plus v}{u \plus w}
      &
      a \bullet \pair{t}{u} & \longrightarrow  \pair{a \bullet t}{a \bullet u}
    \end{array}
  \]
  are commutation rules between the interstitial rules, sum and prod,
  and introduction rules $\one$-i and $\with $-i.

  Now, these rules appear to be also vector calculation rules.
\end{remark}

\subsection{Matrices}
\label{secmatrices}

We now want to prove that if $A, B \in {\mathcal V}$ with $d(A) = m$
and $d(B) = n$, and $F$ is a linear function from ${\mathcal S}^m$ to
${\mathcal S}^n$, then there exists a closed proof $f$ of $A
\multimap B$ such that, for all vectors ${\bf u} \in {\mathcal S}^m$,
$\underline{f~\overline{\bf u}^A} = F({\bf u})$.  This can
equivalently be formulated as the fact that if $M$ is a matrix with
$m$ columns and $n$ lines, then there exists a closed proof $f$ of $A
\multimap B$ such that for all vectors ${\bf u} \in {\mathcal S}^m$,
$\underline{f~\overline{\bf u}^A} = M {\bf u}$.

A similar theorem has been proved in \cite{odot} for a non-linear
calculus. The proof of the following theorem is just a check that the
construction given there verifies the linearity constraints of the
${\mathcal L}^{\mathcal S}$-calculus.

\begin{theorem}[Matrices]
  \label{matrices}
  Let $A, B \in {\mathcal V}$ with $d(A) = m$ and $d(B) = n$ and let $M$
  be a matrix with $m$ columns and $n$ lines, then there exists a closed
  proof $t$ of $A \multimap B$ such that, for all vectors ${\bf u} \in
  {\mathcal S}^m$, $\underline{t~\overline{\bf u}^A} = M {\bf u}$.
\end{theorem}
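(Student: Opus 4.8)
The plan is to construct the required proof explicitly as $t=\lambda\abstr{x}g$, where $x:A\vdash g:B$ is built by a double induction: an outer induction on the structure of $B$ (which governs the number of output rows $n$) and, in its base case, an inner induction on the structure of $A$ (which governs the number of input columns $m$). Since the correctness equation $\underline{t~\overline{\bf u}^A}=M{\bf u}$ for the analogous construction has already been established for the $\odot^{\mathcal S}$-calculus in \cite[long version]{odot}, and an ${\mathcal L}^{\mathcal S}$-proof is in particular an $\odot^{\mathcal S}$-proof with the same reducts, I would reuse that computation and concentrate the work on checking that each term I write is typable under the stricter linear rules of Figure~\ref{figuretypingrules}.

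For the outer induction, if $B=B_1\wedge B_2$ I split $M$ into its row blocks, obtain by the induction hypothesis bodies $x:A\vdash g_1:B_1$ and $x:A\vdash g_2:B_2$ sharing the same variable $x$, and set $g=\pair{g_1}{g_2}$. If $B=\top$ then $M$ is a single row $(M_1,\dots,M_m)$ and I run the inner induction on $A$: when $A=\top$ I take $g=\elimtop(x,M_1.\star)$, so that $g\{a.\star\}$ reduces to $(a\times M_1).\star$; when $A=A_1\wedge A_2$ I split $M$ into its column blocks $(M'\mid M'')$, obtain bodies $y:A_1\vdash g':\top$ and $z:A_2\vdash g'':\top$, and set $g=\elimand^1(x,\abstr{y}g')\plus\elimand^2(x,\abstr{z}g'')$. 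Then $t=\lambda\abstr{x}g$ is closed of type $A\Rightarrow B$, and the correctness of each step follows from the reduction rules together with Lemmas~\ref{parallelsum} and \ref{parallelprod}.

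The subtle point --- and the reason the theorem is not immediate in a linear setting --- is precisely the management of the single hypothesis $x:A$, which must be consumed exactly once along every branch. The construction is arranged so that $x$ is duplicated only across the premises of rules that are additive, namely $\wedge$-i (in the case $B=B_1\wedge B_2$) and sum (in the case $A=A_1\wedge A_2$): in both rules the two premises carry the same context $\Gamma=\{x:A\}$, so reusing $x$ is legitimate. Conversely, the rules that genuinely consume a resource, $\top$-e and $\wedge$-e, split their contexts multiplicatively, and here I must verify that the variable they bind is the only free variable of the right premise --- that is, that $g'$ uses only $y$, that $g''$ uses only $z$, and that $M_1.\star$ is closed --- so that the left premise contains $x$ alone and the contexts recombine to exactly $\{x:A\}$. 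This is where the choice of an additive conjunction and an additive sum rule (Remark~\ref{rem1}) is essential: a multiplicative conjunction would forbid the context sharing that lets a single input be read into several coordinates of the output, and the construction would break down. I expect this bookkeeping of linear contexts, rather than any computation, to be the only real obstacle.
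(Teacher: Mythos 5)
Your construction is correct and well-typed, but it takes a genuinely different route from the paper. The paper performs a single induction on $A$ only: in the base case $A=\top$ it encodes the whole column $M$ as the proof $\overline{M}^B$ of $B$ and takes $t=\lambda\abstr{x}\elimtop(x,\overline{M}^B)$, so that $\elimtop(a.\star,\overline{M}^B)\lra a\bullet\overline{M}^B$ and Lemma~\ref{parallelprod} gives $aM=M{\bf u}$ in one step; in the inductive case it splits $M$ into column blocks exactly as you do and uses the same $\elimand^1(x,\abstr{y}(t_1\,y))\plus\elimand^2(x,\abstr{z}(t_2\,z))$ pattern with Lemma~\ref{parallelsum}. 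You instead add an outer induction on $B$ that splits $M$ into row blocks via the additive $\wedge$-i, and only then run the induction on $A$, bottoming out at individual scalar entries $\elimtop(x,M_1.\star)$. Your version never needs Lemma~\ref{parallelprod} (only the definition of $\underline{\,\cdot\,}$ on pairs and Lemma~\ref{parallelsum}) and makes the linear-resource bookkeeping maximally local --- each $\elimtop$ and $\elimand^i$ visibly consumes its one hypothesis, and the only sharing of $x$ happens across the two additive rules, exactly as you observe and exactly as Remark~\ref{rem1} requires. The price is a larger proof term (one $\elimand$-spine per output coordinate rather than one per input coordinate) and a two-level induction where one suffices; the paper's trick of packaging an entire column as $\overline{M}^B$ is what collapses your outer induction. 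Both arguments are sound, and your diagnosis that the context management, not the computation, is the real content of the theorem matches the paper's own framing.
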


\begin{proof}
  By induction on $A$.
  \begin{itemize}

    \item If $A = \one$, then $M$ is a matrix of one column and
      $n$ lines. Hence, it is also a vector of $n$ lines.
      We take
      $$t = \lambda \abstr{x} \elimone(x,\overline{M}^B)$$
      Let ${\bf u} \in {\mathcal S}^1$, ${\bf u}$ has the form
      $\left(\begin{smallmatrix}  a \end{smallmatrix}\right)$ and 
      $\overline{\bf u}^A = a.\star$.

      Then, using
      Lemma~\ref{parallelprod}, we have $\underline{t~\overline{\bf u}^A} 
      = \underline{\elimone(\overline{\bf u}^A,\overline{M}^B)}
      = \underline{\elimone(a.\star,\overline{M}^B)}
      = \underline{a \bullet \overline{M}^B}
      = a \underline{\overline{M}^B} = a M = M
      \left(\begin{smallmatrix}  a\end{smallmatrix}\right) =
      M {\bf u}$.

    \item If $A = A_1 \with  A_2$, then let $d(A_1) = m_1$
      and $d(A_2) = m_2$. Let $M_1$ and $M_2$ be the two
      blocks of $M$ of $m_1$ and $m_2$ columns, so $M =
      \left(\begin{smallmatrix} M_1 & M_2\end{smallmatrix}\right)$.

      By induction hypothesis, there exist closed proofs $t_1$ and $t_2$ of
      the propositions $A_1 \multimap B$ and $A_2 \multimap B$ such
      that, for all vectors ${\bf u}_1 \in {\mathcal S}^{m_1}$ and ${\bf
      u}_2 \in {\mathcal S}^{m_2}$, we have $\underline{t_1~\overline{{\bf
      u}_1}^{A_1}} = M_1 {\bf u}_1$ and $\underline{t_2~\overline{{\bf
      u}_2}^{A_ 2}} = M_2 {\bf u}_2$.  We take
      $$t = \lambda \abstr{x} (\elimwith^1(x,\abstr{y} (t_1~y)) \plus \elimwith^2(x, \abstr{z} (t_2~z)))$$
      Let ${\bf u} \in {\mathcal S}^m$, and ${\bf u}_1$ and ${\bf u}_2$ be
      the two blocks of $m_1$ and $m_2$ lines of ${\bf u}$, so ${\bf u} =
      \left(\begin{smallmatrix} {\bf u}_1 \\ {\bf
      u}_2 \end{smallmatrix}\right)$, and $\overline{\bf u}^A =
      \pair{\overline{{\bf u}_1}^{A_1}}{\overline{{\bf u}_2}^{A_ 2}}$.

      Then, using Lemma~\ref{parallelsum}, $\underline{t~\overline{\bf
      u}^A} = \underline{\elimwith^1(\pair{\overline{{\bf
	  u}_1}^{A_1}}{\overline{{\bf u}_2}^{A_ 2}}, \abstr{y}
	(t_1~y)) \plus \elimwith^2(\pair{\overline{{\bf
	  u}_1}^{A_1}}{\overline{{\bf u}_2}^{A_ 2}}, \abstr{z}
      (t_2~z))}
      = \underline{(t_1~\overline{{\bf u}_1}^{A_1}) \plus (t_2~\overline{{\bf u}_2}^{A_ 2})}
      = \underline{t_1~\overline{{\bf u}_1}^{A_1}} + \underline{t_2~\overline{{\bf u}_2}^{A_ 2}}
      = M_1 {\bf u}_1 + M_2 {\bf u}_2
      = \left(\begin{smallmatrix} M_1 & M_2 \end{smallmatrix}\right)
      \left(\begin{smallmatrix} {\bf u}_1 \\ {\bf u}_2  \end{smallmatrix}\right)
      = M {\bf u}$.
      \qedhere
  \end{itemize}
\end{proof}

\begin{remark}
  In the proofs $\elimone(x,\overline{M}^B)$, $\elimwith^1(x,\abstr{y}
  (t_1~y))$, and $\elimwith^2(x,\abstr{z} (t_2~z))$, the variable $x$ occurs in
  one argument of the symbols $\elimone$, $\elimwith^1$, and $\elimwith^2$, but
  not in the other.  In contrast, in the proof $\elimwith^1(x,\abstr{y}
  (t_1~y)) \plus \elimwith^2(x, \abstr{z} (t_2~z))$, it occurs in both
  arguments of the symbol $\plus$.  Thus, these proofs are well-formed proofs in the
  system of Figure~\ref{figuretypingrules}.
\end{remark}

\begin{remark}
  The rules
  \begin{align*}
    \elimone(a.\star,t) & \longrightarrow  a \bullet t
    & \elimwith^1(\pair{t}{u}, \abstr{x}v) & \longrightarrow  (t/x)v \\
    (\lambda \abstr{x}t)~u  & \longrightarrow  (u/x)t
    &
    \elimwith^2(\pair{t}{u}, \abstr{x}v) & \longrightarrow  (u/x)v
  \end{align*}
  were introduced as cut reduction rules.

  Now, these rules appear to be also matrix calculation rules.
\end{remark}

\begin{example}[Matrices with two columns and two lines]
  The matrix
  $\left(\begin{smallmatrix} a & c\\b & d \end{smallmatrix}\right)$
  is expressed as the proof
  $$t = \lambda \abstr{x} \elimwith^1(x,\abstr{y}
  \elimone(y,\pair{a.\star}{b.\star})) \plus \elimwith^2(x,\abstr{z}
  \elimone(z,\pair{c.\star}{d.\star}))$$

  Then
  \begin{align*}
    t~\pair{e.\star}{f.\star} & \lra
    \elimwith^1(\pair{e.\star}{f.\star},\abstr{y} \elimone(y,\pair{a.\star}{b.\star}))
    \plus
    \elimwith^2(\pair{e.\star}{f.\star},\abstr{z} \elimone(z,\pair{c.\star}{d.\star}))\\
    & \lras \elimone(e.\star,\pair{a.\star}{b.\star}) \plus
    \elimone(f.\star,\pair{c.\star}{d.\star})\\
    & \lras e \bullet \pair{a.\star}{b.\star} \plus
    f \bullet \pair{c.\star}{d.\star}\\
    & \lras
    \pair{(a \times e).\star}{(b \times e).\star}
    \plus 
    \pair{(c \times f).\star}{(d \times f).\star}\\
    & \lras
    \pair{(a \times e + c \times f).\star}{(b \times e + d \times f).\star}
  \end{align*}
\end{example}

\section{Linearity}
\label{seclinearity}

\subsection{Observational equivalence}\label{sec:obseq}

We now prove the converse: if $A, B \in {\mathcal V}$, then each closed
proof $t$ of $A \multimap B$ expresses a linear function, that is
that if $u_1$ and $u_2$ are closed proofs of $A$, then
$$t~(u_1 \plus u_2) \equiv t~u_1 \plus t~u_2
\qquad\qquad\textrm{and}\qquad\qquad
t~(a \bullet u_1) \equiv a \bullet t~u_1$$
The fact that we want all proofs of $\one \multimap \one$ to be
linear functions from ${\mathcal S}$ to ${\mathcal S}$ explains why
the rule sum must be additive. If it were multiplicative, the
proposition $\one \multimap \one$ would have the proof $g =
\lambda \abstr{x}{{x} \plus {1.\star}}$ that is not linear as
$g~({1.\star} \plus {1.\star}) \lras 3.\star \not\equiv 4.\star
\llas {(g~1.\star)} \plus {(g~1.\star)}$.

In fact, instead of proving that for all closed proofs $t$ of $A
\multimap B$
$$t~(u_1 \plus u_2) \equiv t~u_1 \plus t~u_2
\qquad\qquad\textrm{and}\qquad\qquad t~(a \bullet u_1) \equiv a
\bullet t~u_1$$ it is more convenient to prove the equivalent
statement that for a proof $t$ of $B$ such that $x:A \vdash t:B$
$$t\{u_1 \plus u_2\} \equiv t\{u_1\} \plus t\{u_2\}
\qquad\qquad\textrm{and}\qquad\qquad
t\{a \bullet u_1\} \equiv a \bullet t\{u_1\}$$

We can attempt to generalize this statement and prove that
these properties hold for all proofs, whatever the
proved proposition. But this generalization is too strong for two reasons.
First, as we have seen, the sum rule does not commute with the
introduction rules of $\otimes$ and $\oplus$. Thus, if, for example,
$A = \one$, $B = \one \oplus \one$, and $t = \inl(x)$, we have
$$t\{{1.\star} \plus 1.\star\} \lra \inl({2.\star})
\qquad\qquad\textrm{and}\qquad\qquad t\{1.\star\} \plus t\{1.\star\}
= \inl(1.\star) \plus \inl(1.\star)$$ and these two irreducible proofs
are different.  Second, the introduction rule of $\multimap$
introduces a free variable. Thus, if, for example, $A = \one$, $B =
(\one \multimap \one) \multimap \one$, and $t = \lambda \abstr{y} y~x$,
we have 
$$t~\{{1.\star} \plus 2.\star\} \lra \lambda \abstr{y} y~3.{\star}
\qquad\qquad\textrm{and}\qquad\qquad t\{1.\star\} \plus t\{2.\star\} \lra
\lambda \abstr{y} {(y~1.\star)} \plus {(y~2.\star)}$$
and these two irreducible proofs are different.

Note however that, although the proofs $\inl({2.\star})$ and
$\inl(1.\star) \plus \inl(1.\star)$ of $\one \oplus \one$ are
different, if we put them in the context
$\elimplus(\_,\abstr{x}x,\abstr{y}y)$, then both proofs
$\elimplus(\inl({1.\star} \plus 1.\star),\abstr{x}x,\abstr{y}y)$ and
$\elimplus(\inl(1.\star) \plus \inl(1.\star),\abstr{x}x,\abstr{y}y)$
reduce to $2.\star$. In the same way, although the proofs $\lambda
\abstr{y} y~3.\star$ and $\lambda \abstr{y} {(y~1.\star)} \plus
{(y~2.\star)}$ are different, if we put them in the context
$\_~\lambda \abstr{z} z$, then both proofs $(\lambda \abstr{y}
y~3.\star)~\lambda \abstr{z} z$ and $(\lambda \abstr{y}
{(y~1.\star)} \plus {(y~2.\star)})~\lambda \abstr{z} z$ reduce
to $3.\star$.  This leads us to introduce a notion of observational
equivalence.

\begin{definition}[Observational equivalence]\label{def:obseq}
  Two proofs $t_1$ and $t_2$ of a proposition $B$ are {\it
  observationally equivalent}, $t_1 \sim t_2$, if for all propositions
  $C$ in $\mathcal{V}$ and for all proofs $c$ such that $\_:B \vdash
  c:C$, we have
  $$c\{t_1\} \equiv c\{t_2\}$$
\end{definition}

We want to prove that for all proofs $t$ such that $x:A \vdash t:B$
and for all closed proofs $u_1$ and $u_2$ of $A$, we have
$$t\{u_1 \plus u_2\} \sim t\{u_1\} \plus t\{u_2\}
\qquad\qquad\textrm{and}\qquad\qquad
t\{a\bullet u_1\} \sim a \bullet t\{u_1\}$$

However, a proof of this property by induction on $t$ does not go
through and, to prove it, we first prove, in Theorem \ref{linearity},
that for all proofs $t$ such that $x:A \vdash t:B$, $B \in
\mathcal{V}$, for all closed proofs $u_1$ and $u_2$ of $A$
$$t\{u_1 \plus u_2\} \equiv t\{u_1\} \plus t\{u_2\}
\qquad\qquad\textrm{and}\qquad\qquad t\{a\bullet u_1\} \equiv a
\bullet t\{u_1\}$$ and we deduce the result for observational
equivalence in Corollary \ref{corollary0}.

\subsection{Measure of a proof}

The proof of Theorem \ref{linearity} proceeds by induction on the
measure of the proof $t$, and the first part of this proof is the
definition of such a measure function $\mu$.  Our goal could be to
build a measure function such that if $t$ is proof of $B$ in a context
$\Gamma, x:A$ and $u$ is a proof of $A$, then $\mu((u/x)t) = \mu(t) +
\mu(u)$. This would be the case, for the usual notion of size, if $x$
had exactly one occurrence in $t$. But, due to additive connectives,
the variable $x$ may have zero, one, or several occurrences in $t$.

First, as the rule $\zero$-e is additive, it may happen that
$\elimzero(t)$ is a proof in the context $\Gamma, x:A$, and $x$ has no
occurrence in $t$.  Thus, we lower our expectations to $\mu((u/x)t)
\leq \mu(t) + \mu(u)$, which is sufficient to prove the linearity
theorem.

Then, as the rules $\plus$, $\with $-i, and $\oplus$-e rules are
additive, if $u \plus v$ is proof of $B$ in a context $\Gamma, {x:A}$,
$x$ may occur both in $u$ and in $v$. And the same holds for the
proofs $\pair{u}{v}$, and $\elimplus(t,\abstr{x}u,\abstr{y}v)$. In these
cases, we modify the definition of the measure function and take $\mu(t
\plus u) = 1 + \max(\mu(t), \mu(u))$, instead of $\mu(t \plus u) = 1 +
\mu(t) + \mu(u)$, etc., making the function $\mu$ a mix between a size
function and a depth function.  
This leads to the following definition.

\begin{definition}[Measure of a proof]
  \label{measureofaproof}~
  \begin{align*}
    \mu(x) &= 0 &
    \mu(\topintro) &= 1\\
    \mu(t \plus u) &= 1 + \max(\mu(t), \mu(u)) & 
    \mu(\elimzero(t)) &= 1 + \mu(t)\\
    \mu(a \bullet t) &= 1 + \mu(t) & 
    \mu(\pair{t}{u}) &= 1 + \max(\mu(t), \mu(u))\\
    \mu(a.\star) &= 1 &
    \mu(\elimwith^1(t,\abstr{y}u)) &= 1 + \mu(t) + \mu(u)\\
    \mu(\elimone(t,u)) &= 1 + \mu(t) + \mu(u) &
    \mu(\elimwith^2(t,\abstr{y}u)) &= 1 + \mu(t) + \mu(u)\\
    \mu(\lambda \abstr{x} t) &= 1 + \mu(t) &
    \mu(\inl(t)) &= 1 + \mu(t)\\
    \mu(t~u) &= 1 + \mu(t) + \mu(u) &
    \mu(\inr(t)) &= 1 + \mu(t)\\
    \mu(t \otimes u) &= 1 + \mu(t) + \mu(u) &
    \mu(\elimplus(t,\abstr{y}u,\abstr{z}v)) &= 1 + \mu(t) + \max(\mu(u), \mu(v)) \\
    \mu(\elimtens(t,\abstr{xy} u)) &= 1 + \mu(t) + \mu(u) & 
  \end{align*}
\end{definition}

\begin{lemma}\label{lem:msubst}
If $\Gamma, x:A \vdash t:B$ and $\Delta \vdash u:A$ then
  $\mu((u/x)t) \leq \mu(t)+\mu(u)$.
\end{lemma}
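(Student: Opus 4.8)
The plan is to prove Lemma~\ref{lem:msubst} by structural induction on the proof-term $t$, exactly paralleling the case analysis of the substitution lemma (Lemma~\ref{lem:substitution}), but tracking the inequality $\mu((u/x)t) \leq \mu(t) + \mu(u)$ instead of typability. Since the type system is syntax-directed, each syntactic form of $t$ corresponds to a unique deduction rule, and the crucial information recorded by the typing judgment is \emph{where the variable $x$ may occur}: in a multiplicative rule $x$ occurs in exactly one premise, whereas in an additive rule ($\plus$, $\wedge$-i, $\vee$-e) it may occur in several premises at once. The size function $\mu$ has been designed precisely so that additive rules use $\max$ while multiplicative rules use $+$, and the proof amounts to verifying that this design is compatible with substitution in every case.

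First I would dispose of the base cases. If $t = x$, then $(u/x)t = u$ and $\mu(t) = \mu(x) = 0$, so $\mu((u/x)t) = \mu(u) = \mu(t) + \mu(u)$. The form $t = a.\star$ is not well-typed in a context containing $x:A$, so it does not arise. For the inductive cases governed by multiplicative rules I would use the additive decomposition of the context: if $t = v_1~v_2$ (or $\elimtop$, $\elimand^i$, etc.), the hypothesis $\Gamma_1,\Gamma_2,x:A \vdash t:B$ forces $x$ to live in exactly one of the two subproofs, say $v_1$, so $(u/x)t = ((u/x)v_1)~v_2$ and the induction hypothesis gives $\mu((u/x)v_1) \leq \mu(v_1)+\mu(u)$; then
\[
\mu((u/x)t) = 1 + \mu((u/x)v_1) + \mu(v_2) \leq 1 + \mu(v_1) + \mu(u) + \mu(v_2) = \mu(t) + \mu(u).
\]
The symmetric subcase where $x$ occurs in $v_2$ is identical.

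The interesting cases are the additive ones, where I must check that replacing $+$ by $\max$ in the definition of $\mu$ still yields the inequality even when $x$ occurs in both branches. Consider $t = v_1 \plus v_2$ with $x$ occurring in \emph{both} $v_1$ and $v_2$: then $(u/x)t = (u/x)v_1 \plus (u/x)v_2$, and using the induction hypothesis on each branch,
\[
\mu((u/x)t) = 1 + \max(\mu((u/x)v_1), \mu((u/x)v_2)) \leq 1 + \max(\mu(v_1)+\mu(u), \mu(v_2)+\mu(u)).
\]
Since $\max(p+c, q+c) = \max(p,q)+c$, the right-hand side equals $1 + \max(\mu(v_1),\mu(v_2)) + \mu(u) = \mu(t)+\mu(u)$, as required. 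The same computation handles $\pair{v_1}{v_2}$, and a slightly longer but entirely analogous one handles $\elimor(v_1,\abstr{y}v_2,\abstr{z}v_3)$, where the eliminand $v_1$ contributes additively (via $+$) and the two branches $v_2, v_3$ contribute through the $\max$. For $t = \elimbot(v)$ I would note that $x$ might not occur in $v$ at all, in which case $(u/x)t = t$ and the inequality is immediate since $\mu(u) \geq 0$; if $x$ does occur in $v$, the induction hypothesis combined with $\mu(u)\geq 0$ again suffices.

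The main obstacle, and the whole point of the lemma, is the interaction between additivity and the choice of $\max$: the identity $\max(p+c,q+c) = \max(p,q)+c$ is what makes the bound survive duplication of $x$ across additive branches, and it is exactly this that would \emph{fail} for a pure depth function (where one does not add the eliminand's size in cases like $\elimor$), justifying the remark that $\mu$ must be the stated hybrid of size and depth. Everything else is routine bookkeeping over the rules of Figure~\ref{figuretypingrules}, mirroring the structure of Lemma~\ref{lem:substitution}.
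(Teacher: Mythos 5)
Your proof is correct and follows essentially the same route as the paper's: structural induction on $t$, splitting into the base case $t=x$, the impossible case $a.\star$, the multiplicative cases where $x$ lives in exactly one premise, and the additive cases where the identity $\max(p+c,q+c)=\max(p,q)+c$ carries the bound through, with $\elimbot$ handled separately when $x$ does not occur. The only quibble is your closing aside: the reason a pure depth function fails is not this lemma (which would still hold with $\leq$) but the later Lemma~\ref{strengtheningmsubst}, which needs the exact equality $\mu(K\{t\})=\mu(K)+\mu(t)$ for elimination contexts.
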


\begin{proof}
  By induction on $t$.
  \begin{itemize}
    \item If $t$ is a variable, then $\Gamma$ is empty, $t = x$, 
      $(u/x)t = u$ and $\mu(t) = 0$.
      Thus, $\mu((u/x)t) = \mu(u) = \mu(t)+\mu(u)$.

    \item If $t = t_1 \plus t_2$, then $\Gamma, x:A \vdash t_1:B$,
      $\Gamma, x:A \vdash t_2:B$.  Using the induction hypothesis, we get 
      $\mu((u/x)t)
      = 1 +
      \max(\mu((u/x)t_1),\mu((u/x)t_2)) \leq 1 + \max(\mu(t_1) + \mu(u),
      \mu(t_2) + \mu(u))
      = \mu(t) + \mu(u)$.

    \item If $t = a \bullet t_1$, then $\Gamma, x:A \vdash t_1:B$.
      Using the induction hypothesis, we get 
      $\mu((u/x)t)
      = 1 + \mu((u/x)t_1)
      \leq 1 + \mu(t_1) + \mu(u) = \mu(t) + \mu(u)$.

    \item The proof $t$ cannot be of the form
      $a.\star$, that is not a proof in $\Gamma, x:A$.

      \item If $t = \elimone(t_1,t_2)$, then
      $\Gamma = \Gamma_1, \Gamma_2$ and there are two cases.
      \begin{itemize}
	\item 
	  If $\Gamma_1, x:A \vdash t_1:\one$ and $\Gamma_2 \vdash t_2:B$,
	  then, using the induction hypothesis, we get
	  $\mu((u/x)t)
	  = 1 + \mu((u/x)t_1) + \mu(t_2)
	  \leq 1 + \mu(t_1) + \mu(u) + \mu(t_2)
	  = \mu(t) + \mu(u)$.

	\item 
	  If $\Gamma_1 \vdash t_1:\one$ and $\Gamma_2, x:A \vdash t_2:B$,
	  then, using the induction hypothesis, we get
	  $\mu((u/x)t)
	  = 1 + \mu(t_1) + \mu((u/x)t_2)
	  \leq 1 + \mu(t_1) + \mu(t_2) + \mu(u)
	  = \mu(t) + \mu(u)$.
      \end{itemize}

    \item If $t = \lambda \abstr{y} t_1$,
we apply the same method as for the case $t = a \bullet t_1$.

    \item If $t = t_1~t_2$, we apply the same method as for the
      case $t = \elimone(t_1,t_2)$.

    \item If $t = t_1 \otimes t_2$, we apply the same method as for the
      case $t = \elimone(t_1,t_2)$.

    \item If $t = \elimtens(t_1, \abstr{yz} t_2)$,
      we apply the same method as for the
      case $t = \elimone(t_1,t_2)$.

    \item If $t = \topintro$, then
      $\mu((u/x)t)
      = 1
      \leq 1 + \mu(u) 
      = \mu(t) + \mu(u)$.

    \item If $t = \elimzero(t_1)$, then
      $\Gamma = \Gamma_1, \Gamma_2$ and there are two cases.
      \begin{itemize}
	\item 
	  If $\Gamma_1, x:A \vdash t_1:\zero$,
we apply the same method as for the case $t = a \bullet t_1$.

	\item 
	  If $\Gamma_1 \vdash t_1:\zero$, then, we get $\mu((u/x)t) = \mu(t)
	  \leq \mu(t) + \mu(u)$.
      \end{itemize}

    \item If $t = \pair{t_1}{t_2}$, we apply the same method as for the
      case $t = t_1 \plus t_2$.

    \item   
      If $t = \elimwith^1(t_1,\abstr{y}t_2)$, we apply the same method
      as for the case $t = \elimone(t_1,t_2)$.

    \item   
      If $t = \elimwith^2(t_1,\abstr{y}t_2)$, 
      we apply the same method as for the
      case $t = \elimone(t_1,t_2)$.

    \item If $t = \inl(t_1)$ or $t = \inr(t_1)$,
     we apply the same method as for the case $t = a \bullet t_1$.

    \item   
      If $t = \elimplus(t_1,\abstr{y}t_2,\abstr{z}t_3)$ then
      $\Gamma = \Gamma_1, \Gamma_2$ and there are two cases.
      \begin{itemize}
	\item 
	  If $\Gamma_1, x:A \vdash t_1:C_1 \oplus C_2$,
	  $\Gamma_2, y:C_1 \vdash t_2:A$,
	  $\Gamma_2, z:C_2 \vdash t_3:A$, then using the induction hypothesis,
	  we get
	  $\mu((u/x)t)
	  = 1 + \mu((u/x)t_1) + \max(\mu(t_2),\mu(t_3))
	  \leq 
	  1 + \mu(t_1) + \mu(u) + \max(\mu(t_2),\mu(t_3))
	  = \mu(t) + \mu(u)$.

	\item 
	  If $\Gamma_1 \vdash t_1:C_1 \oplus C_2$,
	  $\Gamma_2, y:C_1, x:A \vdash t_2:A$,
	  $\Gamma_2, z:C_2, x:A \vdash t_3:A$, then using the induction hypothesis,
	  we get
	  $\mu((u/x)t)
	  = 1 + \mu(t_1) + \max(\mu((u/x)t_2),\mu((u/x)t_3))
	  \leq 
	  1 + \mu(t_1) + \max(\mu(t_2) + \mu(u),\mu(t_3) + \mu(u))
	  =
	  1 + \mu(t_1) + \max(\mu(t_2),\mu(t_3)) + \mu(u)
	  = \mu(t) + \mu(u)$.
	  \qedhere
      \end{itemize}      
  \end{itemize}
\end{proof}

\begin{example}
  \label{inferieurstrict}
  Let $t = \elimzero(y)$ and $u = 1.\star$.
  We have $y:\zero, x:\one \vdash t:C$, $\mu(t) = 1$, $\mu(u) = 1$ and
  $\mu((u/x)t) = 1$. Thus,
  $\mu((u/x)t) \leq \mu(t) + \mu(u)$. 
\end{example}

As a corollary, we get a similar measure preservation theorem for
reduction.

\begin{lemma}
  \label{lem:mured}
  If $\Gamma \vdash t:A$ and $t \lra u$, then $\mu(t) \geq \mu(u)$.
\end{lemma}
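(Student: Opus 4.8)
The plan is to argue by induction on the derivation of $t \lra u$, mirroring the structure of the subject reduction proof. First I would dispatch the congruence (context) cases. Every clause of Definition~\ref{sizeofaproof} expresses the size of a compound proof as $1$ plus either a sum or a maximum of the sizes of its immediate subterms, so $\mu$ is monotone in each argument position. Hence, if the redex lies inside a subterm, the induction hypothesis gives a decrease of that subterm's size, and monotonicity of $+$ and $\max$ propagates it to $\mu(t) \geq \mu(u)$.

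It then remains to check the fourteen top-level rules of Figure~\ref{figureductionrules}, which fall into three groups. The scalar-collapse rules ${a.\star} \plus b.\star \lra (a+b).\star$ and $a \bullet b.\star \lra (a \times b).\star$, together with $\elimtop(a.\star,t) \lra a \bullet t$, strictly decrease $\mu$: for the first, $\mu(\text{lhs}) = 1 + \max(1,1) = 2 > 1 = \mu(\text{rhs})$, and for the $\elimtop$ rule, $\mu(\text{lhs}) = 2 + \mu(t) > 1 + \mu(t) = \mu(\text{rhs})$. The six commutation rules that push $\plus$ and $\bullet$ past $\lambda$, $\pair{\cdot}{\cdot}$, and $\elimor$ preserve $\mu$ exactly; this is precisely where the use of $\max$ (rather than $+$) in the clauses for $\plus$, $\pair{\cdot}{\cdot}$, and $\elimor$ pays off, since it makes the two sides collapse to the same expression. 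For instance, both sides of $(\lambda \abstr{x}t) \plus (\lambda \abstr{x}u) \lra \lambda \abstr{x}(t \plus u)$ evaluate to $2 + \max(\mu(t),\mu(u))$, and similarly for the $\pair{\cdot}{\cdot}$ and $\elimor$ variants. Each such check is a one-line computation.

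The genuinely interesting cases are the cut-elimination rules that introduce a substitution: $(\lambda \abstr{x}t)~u \lra (u/x)t$, the two $\elimand$ rules, and the two $\elimor$ injection rules. For these I would bound the size of the right-hand side using Lemma~\ref{lem:msubst}. For $\beta$-reduction, $\mu((\lambda \abstr{x}t)~u) = 2 + \mu(t) + \mu(u)$ while $\mu((u/x)t) \leq \mu(t) + \mu(u)$, so the inequality holds with room to spare. For $\elimand^1(\pair{t}{u},\abstr{x}v) \lra (t/x)v$ one has $\mu(\text{lhs}) = 2 + \max(\mu(t),\mu(u)) + \mu(v) \geq \mu(t) + \mu(v) \geq \mu((t/x)v)$, using $\max(\mu(t),\mu(u)) \geq \mu(t)$ and Lemma~\ref{lem:msubst}; the $\elimand^2$ and the two $\elimor$ cases are handled identically, the outer $\max$ or $+$ always leaving enough slack.

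The main difficulty has in fact already been absorbed into Lemma~\ref{lem:msubst}: it is exactly the mismatch between the additive connectives, where a bound variable may occur zero or several times, and a naive additive size function that forced the mixed size/depth definition of $\mu$. Once that lemma is available, this corollary is routine, and the only point needing a little care is to confirm that no top-level rule can \emph{increase} $\mu$ — which the case analysis above settles, with all commutation rules size-preserving and the remaining rules strictly decreasing.
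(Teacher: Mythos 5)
Your proof is correct and follows essentially the same route as the paper's: the context cases are dispatched by monotonicity of $\mu$, the commutation rules are checked to preserve $\mu$ exactly thanks to the $\max$ clauses, and the cut-reduction rules involving substitution are bounded via Lemma~\ref{lem:msubst}. The individual computations match those in the paper, so nothing further is needed.
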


\begin{proof}
  By induction on $t$. The context cases are trivial because the
  functions used to define $\mu(t)$ in function of $\mu$ of the
  subterms of $t$ are monotone. We check the rules one by one, using
  Lemma~\ref{lem:msubst}.

  \begin{itemize}
    \item $\mu(\elimone(a.\star,t)) = 2 + \mu(t) > 1 + \mu(t) = \mu(a
      \bullet t)$
    \item $\mu((\lambda \abstr{x}t)~u) = 2 + \mu(t) + \mu(u) > \mu(t) +
      \mu(u) \geq \mu((u/x)t)$
    \item $\mu(\elimtens(u \otimes v,\abstr{x y}w))
      = 2 + \mu(u) + \mu(v) + \mu(w)
      > \mu(u) + \mu(v) + \mu(w)
      \geq \mu(u) + \mu((v/y)w)
      \geq  \mu((u/x)(v/y)w)
      = \mu((u/x,v/y)w)$
      as $x$ does not occur in $v$
    \item $\mu(\elimwith^1(\pair{t}{u}, \abstr{x}v)) = 2 +
      \max(\mu(t),\mu(u)) + \mu(v) > \mu(t) + \mu(v) \geq \mu((t/x)v)$
    \item $\mu(\elimwith^2(\pair{t}{u}, \abstr{x}v)) = 2 +
      \max(\mu(t),\mu(u)) + \mu(v) > \mu(u) + \mu(v) \geq \mu((u/x)v)$
    \item $\mu(\elimplus(\inl(t),\abstr{x}v,\abstr{y}w)) = 2 + \mu(t) +
      \max(\mu(v), \mu(w)) > \mu(t) + \mu(v) \geq \mu((t/x)v)$
    \item $\mu(\elimplus(\inr(t),\abstr{x}v,\abstr{y}w)) = 2 + \mu(t) +
      \max(\mu(v),\mu(w)) > \mu(t) + \mu(w) \geq \mu((t/y)w)$
    \item $\mu({a.\star} \plus b.\star) = 2 > 1 = \mu((a+b).\star)$
    \item $\mu((\lambda \abstr{x}t) \plus (\lambda \abstr{x}u)) = 1 + \max
      (1 + \mu(t), 1 + \mu(u)) = 2 + \max(\mu(t),\mu(u)) = \mu(\lambda
      \abstr{x}(t \plus u))$
    \item $\mu(\elimtens(t \plus u,\abstr{x y}v)) = 2 + \max(\mu(t),
      \mu(u)) + \mu(v) = 1 + \max(1 + \mu(t) + \mu(v), 1 + \mu(u) +
      \mu(v)) = \mu(\elimtens(t,\abstr{x y}v) \plus \elimtens(u,\abstr{x
      y}v))$
    \item $\mu(\topintro \plus \topintro) = 2 > 1 = \mu(\topintro)$
    \item $\mu(\pair{t}{u} \plus \pair{v}{w}) = 1 + \max (1 +
      \max(\mu(t),\mu(u)), 1 + \max(\mu(v),\mu(w)))$\\ $= 2 + \max
      (\mu(t),\mu(u),\mu(v),\mu(w)) = 1 + \max(1+ \max(\mu(t),\mu(v)), 1 +
      \max(\mu(u),\mu(w))) = \mu(\pair{t \plus v}{u \plus w})$
    \item $\mu(\elimplus(t \plus u,\abstr{x}v,\abstr{y}w)) = 2 +
      \max(\mu(t),\mu(u)) + \max(\mu(v), \mu(w)) = 1 + \max(1 + \mu(t) +
      \max(\mu(v), \mu(w)), 1 + \mu(u) + \max(\mu(v), \mu(w))) =
      \mu(\elimplus(t,\abstr{x}v,\abstr{y}w) \plus
      \elimplus(u,\abstr{x}v,\abstr{y}w))$
    \item $\mu(a \bullet b.\star) = 2 > 1 = \mu((a \times b).\star)$
    \item $\mu(a \bullet \lambda \abstr{x} t) = 2 + \mu(t) = \mu(\lambda
      \abstr{x} a \bullet t)$
    \item $\mu(\elimtens(a \bullet t,\abstr{x y}v)) = 2 + \mu(t) + \mu(v) =
      a \bullet \elimtens(t,\abstr{x y}v)$
    \item $\mu(a \bullet \topintro) = 2 > 1 = \mu(\topintro)$
    \item $\mu(a \bullet \pair{t}{u}) = 2 + \max(\mu(t),\mu(u)) = 1 +
      \max(1 + \mu(t),1 + \mu(u)) = \mu(\pair{a \bullet t}{a \bullet u})$

    \item
      $\mu(\elimplus(a \bullet t,\abstr{x}v,\abstr{y}w)) =
      2 + \mu(t) + \max(\mu(v),\mu(w)) =
      \mu(a \bullet \elimplus(t,\abstr{x}v,\abstr{y}w))$
      \qedhere
  \end{itemize} 
\end{proof}

\subsection{Elimination contexts}

The second part of the proof is a standard generalization of the
notion of head variable. In the $\lambda$-calculus, we can decompose a
term $t$ as a sequence of applications $t = u~v_1~\ldots~v_n$, with
terms $v_1, \dots, v_n$ and a term $u$, which is not an
application.  Then $u$ may either be a variable, in which case it is
the head variable of the term, or an abstraction.

Similarly, any proof in the ${\mathcal L}^{\mathcal
S}$-calculus can be decomposed into a sequence of elimination rules,
forming an elimination context, and a proof $u$ that is either a
variable, an introduction, a sum, or a product.

\begin{definition}[Elimination context]
  An elimination context is a proof with a single free variable, written
  $\_$, that is a proof in the language
  \[
    K = \_
    \mid \elimone(K,u)
    \mid K~u
    \mid \elimtens(K,\abstr{x y}v)
    \mid \elimzero(K)
    \mid \elimwith^1(K,\abstr{x}r)
    \mid \elimwith^2(K,\abstr{x}r)
    \mid \elimplus(K,\abstr{x}r,\abstr{y}s)
  \]
  where $u$ is a closed proof,
  $FV(v) = \{x,y\}$,
  $FV(r) \subseteq \{x\}$, and $FV(s) \subseteq \{y\}$.

\end{definition}

In the case of elimination contexts, Lemma~\ref{lem:msubst} can be
strengthened.

\begin{lemma}
  \label{strengtheningmsubst}
  $\mu(K\{t\}) = \mu(K) + \mu(t)$
\end{lemma}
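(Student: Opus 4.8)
The plan is to prove the equality by structural induction on the elimination context $K$, exploiting two features of its grammar. First, the hole $\_$ occurs \emph{exactly once} in $K$; second, at every level it sits in the \emph{principal} (leftmost) argument of an elimination — the first argument of $\elimtop$, $\elimbot$, application, $\elimand^1$, $\elimand^2$, and $\elimor$. In the definition of $\mu$, this principal argument always contributes through an addition, never through a $\max$. The remaining arguments (the $u$ in $\elimtop(K,u)$, the argument of an application, and the branches $\abstr{x}r$, $\abstr{y}s$) are closed or depend only on their locally bound variables, so they contain no occurrence of $\_$ and are left untouched by the substitution $K\{t\}$. This is precisely the structural fact that upgrades the inequality of Lemma~\ref{lem:msubst} to an equality here.

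For the base case $K = \_$ we have $K\{t\} = t$ and, since $\_$ is treated as a variable, $\mu(K) = 0$; hence $\mu(K\{t\}) = \mu(t) = \mu(K) + \mu(t)$. For the inductive step I would run through the productions of the grammar, and the additive cases are uniform. Taking $K = \elimtop(K',u)$ as a template: since $u$ is closed the substitution commutes, $K\{t\} = \elimtop(K'\{t\},u)$, and by the definition of $\mu$ together with the induction hypothesis $\mu(K\{t\}) = 1 + \mu(K'\{t\}) + \mu(u) = 1 + (\mu(K')+\mu(t)) + \mu(u) = \mu(K)+\mu(t)$. The cases $\elimbot(K')$, $K'~w$, $\elimand^1(K',\abstr{x}r)$, and $\elimand^2(K',\abstr{x}r)$ close identically, each placing the recursive context $K'$ in an additive slot.

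The one case deserving explicit mention is disjunction elimination, $K = \elimor(K',\abstr{x}r,\abstr{y}s)$, whose size uses a $\max$: recall $\mu(\elimor(a,\abstr{x}b,\abstr{y}c)) = 1 + \mu(a) + \max(\mu(b),\mu(c))$. The point is that the $\max$ ranges over the branches $r$ and $s$, which are free of $\_$, while the hole remains inside the additively combined principal argument $K'$. Thus $\mu(K\{t\}) = 1 + \mu(K'\{t\}) + \max(\mu(r),\mu(s)) = 1 + \mu(K') + \mu(t) + \max(\mu(r),\mu(s)) = \mu(K) + \mu(t)$, and the equality is preserved.

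I do not expect a genuine obstacle: once the "single occurrence, always in an additive slot" invariant is stated, every case closes by a one-line computation. The only conceptual content is recognising \emph{why} equality holds — there is no branching through which the hole could be duplicated (which would force a strict inequality on the left-hand side) and no $\max$ into which it could be absorbed (which would do the same). Making this observation precise is the whole substance of the argument.
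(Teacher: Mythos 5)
Your proof is correct and follows essentially the same route as the paper's: a structural induction on $K$ with the base case $K=\_$, a uniform computation for the additive cases, and the observation that in $\elimor(K',\abstr{x}r,\abstr{y}s)$ the $\max$ only ranges over the branches, so the hole's contribution stays additive. The explicit framing of the ``single occurrence, always in an additive slot'' invariant is a nice touch but does not change the substance of the argument.
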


\begin{proof}
  By induction on $K$

  \begin{itemize}
    \item If $K = \_$, then $\mu(K) = 0$ and $K\{t\} = t$.  We have
      $\mu(K\{t\}) = \mu(t) = \mu(K) + \mu(t)$.

    \item If $K = \elimone(K_1,u)$ then $K\{t\} = \elimone(K_1\{t\},u)$.
      We have, by
      induction hypothesis, $\mu(K\{t\}) = 1 + \mu(K_1\{t\}) + \mu(u)
      = 1 + \mu(K_1) + \mu(t) + \mu(u) = \mu(K) + \mu(t)$.

    \item If $K = K_1~u$ then $K\{t\} = K_1\{t\}~u$.  We have, by
      induction hypothesis, $\mu(K\{t\}) = 1 + \mu(K_1\{t\}) + \mu(u) = 1 +
      \mu(K_1) +   \mu(t) + \mu(u) = \mu(K) + \mu(t)$.

    \item 
      If $K = \elimtens(K_1,\abstr{xy}v)$, then $K\{t\} =
      \elimtens(K_1\{t\},\abstr{xy}v)$.  We have, by induction
      hypothesis, $\mu(K\{t\}) = 1 + \mu(K_1\{t\}) + \mu(v) = 1 +
      \mu(K_1) + \mu(t) + \mu(v) = \mu(K) + \mu(t)$.

    \item If $K = \elimzero(K_1)$, then $K\{t\} = \elimzero(K_1\{t\})$. We
      have, by induction hypothesis, $\mu(K\{t\}) = 1 + \mu(K_1\{t\})= 1 +
      \mu(K_1) + \mu (t) = \mu(K) + \mu(t)$.

    \item If $K = \elimwith^1(K_1,\abstr{x}r)$, then $K\{t\} =
      \elimwith^1(K_1\{t\},\abstr{x}r)$. We have, by induction hypothesis,
      $\mu(K\{t\}) = 1 + \mu(K_1\{t\}) + \mu(r) = 1 + \mu(K_1) + \mu (t) +
      \mu(r) = \mu(K) + \mu(t)$.

      The same holds if $K = \elimwith^2(K_1,\abstr{y}s)$.

    \item 
      If $K = \elimplus(K_1,\abstr{x}r,\abstr{y}s)$, then $K\{t\} =
      \elimplus(K_1\{t\},\abstr{x}r,\abstr{y}s)$.  We have, by induction
      hypothesis, $\mu(K\{t\}) = 1 + \mu(K_1\{t\}) + \max(\mu(r), \mu(s)) = 1 +
      \mu(K_1) + \mu(t) + \max(\mu(r), \mu(s)) = \mu(K) + \mu(t)$.
      \qedhere
  \end{itemize}
\end{proof}

Note that in Example~\ref{inferieurstrict}, $t = \elimzero(y)$ is not
an elimination context as $\_$ does not occur in it.  Note also that
the proof of Lemma~\ref{strengtheningmsubst} uses the fact that the
function $\mu$ is a mix between a size function and a depth function. The
corresponding lemma holds neither for the size function nor for the
depth function.

\begin{lemma}[Decomposition of a proof]
  \label{elim}
  If $t$ is an irreducible proof such that $x:C \vdash t:A$, then there
  exist an elimination context $K$, a proof $u$, and a proposition $B$,
  such that $\_:B \vdash K:A$, $x:C \vdash u:B$, $u$ is either the
  variable $x$, an introduction, a sum, or a product, and $t = K\{u\}$.
\end{lemma}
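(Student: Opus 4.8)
The plan is to argue by induction on the structure of the irreducible proof $t$, following the last rule of its typing derivation. The base cases are immediate: if $t$ is a variable, an introduction, a sum, or a product, then no elimination needs to be peeled off, so I take $K = \_$, $B = A$, and $u = t$. Indeed $\_:A \vdash \_:A$ holds by the rule ax, the judgement $x:C \vdash u:A$ is exactly the hypothesis, $u$ has one of the required shapes, and $t = K\{u\}$. (The introduction $a.\star$ cannot actually occur here, since it is not typable in a nonempty context, but this causes no trouble.)

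The substantial cases are those in which $t$ is an elimination. I treat $t = \elimand^1(t_1,\abstr{y}t_2)$ in detail; the eliminations $\elimtop$, $\elimbot$, the application, $\elimand^2$, and $\elimor$ are handled in exactly the same way. The last rule is $\wedge$-e1, so the context splits as $\Gamma,\Delta = x:C$ with $\Gamma \vdash t_1:B_1 \wedge B_2$ and $\Delta,y:B_1 \vdash t_2:A$. Since the whole context is the single variable $x$, either $\Gamma = x:C$ with $\Delta$ empty, or $\Gamma$ empty with $\Delta = x:C$. My claim is that the second alternative is impossible, so that the major premise $t_1$ always carries the whole context.

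This claim is the crux, and the only delicate point. If $\Gamma$ were empty, then $t_1$ would be a closed proof; being a subterm of the irreducible proof $t$, it is itself irreducible (reduction is closed under context), so by Theorem~\ref{introductions} it would be the introduction of its type, here $t_1 = \pair{v}{w}$. But then $t = \elimand^1(\pair{v}{w},\abstr{y}t_2)$ would be a redex, contradicting the irreducibility of $t$. The same reasoning disposes of every elimination: a closed irreducible major premise is, by Theorem~\ref{introductions}, the introduction matching the eliminator — $a.\star$ under $\elimtop$, $\lambda\abstr{y}v$ under an application, $\pair{v}{w}$ under $\elimand^2$ — and thus creates a redex, while for $\elimbot$ the theorem forbids a closed irreducible proof of $\bot$ outright. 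The disjunction needs slightly more care, since Theorem~\ref{introductions} permits a closed irreducible proof of $B_1 \vee B_2$ to be $\inl(v)$, $\inr(v)$, a sum $v \plus w$, or a product $a \bullet v$; but each of these four shapes turns $\elimor(t_1,\abstr{y}t_2,\abstr{z}t_3)$ into a redex, using respectively the two $\vee$ reduction rules and the two rules commuting $\plus$ and $\bullet$ with $\vee$-e. Hence in every elimination case the major premise carries the full context $x:C$.

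Once $x:C \vdash t_1:B_1 \wedge B_2$ is established, $t_1$ is a strictly smaller irreducible proof, so the induction hypothesis yields an elimination context $K_1$, a head $u$, and a proposition $B$ with $\_:B \vdash K_1:B_1 \wedge B_2$, $x:C \vdash u:B$, $u$ of the required form, and $t_1 = K_1\{u\}$. Because $\Delta$ is empty, the minor premises are closed or have their only free variable bound — here $y:B_1 \vdash t_2:A$, so $FV(t_2) \subseteq \{y\}$ — which is precisely the side condition the grammar of elimination contexts demands for placing the eliminator around the hole. I therefore set $K = \elimand^1(K_1,\abstr{y}t_2)$, which is an elimination context satisfying $\_:B \vdash K:A$ by the rule $\wedge$-e1, and I have $t = \elimand^1(K_1\{u\},\abstr{y}t_2) = K\{u\}$, closing the case. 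The main obstacle is exactly the crux above: it is the interplay between the linear splitting of contexts, the introduction property (Theorem~\ref{introductions}), and irreducibility that forces the head into the major premise of each elimination; the remainder is a routine propagation of the induction hypothesis through the elimination rules.
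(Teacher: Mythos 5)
Your proposal is correct and follows essentially the same route as the paper: induction on the structure of $t$, with the key step being that the major premise of each elimination cannot be closed, since a closed irreducible proof would, by Theorem~\ref{introductions}, have a shape (an introduction, or for the disjunction also a sum or a product) that turns the elimination into a redex, contradicting irreducibility. Your treatment of the $\elimor$ case, where all four possible shapes must be excluded via the two cut rules and the two commutation rules, matches the paper's argument exactly.
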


\begin{proof}
  By induction on the structure of $t$.

\begin{itemize}
    \item If $t$ is the variable $x$, an introduction, a sum, or a
     product, we take $K = \_$, $u = t$, and $B = A$.

    \item If $t = \elimone(t_1,t_2)$, then $t_1$ is not a closed proof as
      otherwise it would be a closed irreducible proof of $\one$, hence,
      by Theorem \ref{introductions}, it would be an introduction and $t$
      would not be irreducible. Thus, by the inversion property, $x:C
      \vdash t_1:\one$ and $\vdash t_2:A$.

      By induction hypothesis, there exist $K_1$, $u_1$ and $B_1$ such
      that $\_:B_1 \vdash K_1:\one$, $x:C \vdash u_1:B_1$, and $t_1 =
      K_1\{u_1\}$.  We take $u = u_1$, $K = \elimone(K_1,t_2)$, and $B =
      B_1$.  We have $\_:B \vdash K:A$, $x:C \vdash u:B$, and $K\{u\} =
      \elimone(K_1\{u_1\},t_2) = t$.
      
    \item If $t = t_1~t_2$,
          we apply the same method as for the case $t = \elimone(t_1,t_2)$.

    \item If $t = \elimtens(t_1,\abstr{yz}t_2)$, then $t_1$ is not a
      closed proof as otherwise it would be a closed irreducible proof of a
      multiplicative conjunction $\otimes$, hence,
      by Theorem \ref{introductions}, it would be an introduction, a sum, or a
      product, and $t$ would not be irreducible. Thus, by the inversion
      property, $x:C \vdash t_1:D_1 \otimes D_2$ and $y:D_1, z:D_2 \vdash
      t_2:A$.

      By induction hypothesis, there exist $\_:B_1 \vdash K_1:C \otimes
      D$, $x:C \vdash u_1:B_1$, and $t_1 = K_1\{u_1\}$.  We take $u =
      u_1$, $K = \elimtens(K_1,\abstr{y z}t_2)$, and $B = B_1$.  We have
      $\_:B \vdash K:A$, $x:C \vdash u:B$, and $K\{u\} =
      \elimtens(K_1\{u_1\},\abstr{y z}t_2) = t$.

    \item If $t = \elimzero(t_1)$, then,
      by Theorem \ref{introductions},
      $t_1$ is not a closed proof as there is
      no closed irreducible proof of $\zero$. Thus, by the inversion
      property, $x:C \vdash t_1:\zero$.

      By induction hypothesis, there exist $K_1$, $u_1$, and $B_1$ such
      that $\_:B_1 \vdash K_1:\zero$, $x:C \vdash u_1:B_1$, and $t_1 =
      K_1\{u_1\}$.  We take $u = u_1$, $K = \elimzero(K_1)$, and $B =
      B_1$.  We have $\_:B, \vdash K:A$, $x:C \vdash u:B$, and $K\{u\} =
      \elimzero(K_1\{u_1\}) = t$.

    \item If $t = \elimwith^1(t_1,\abstr{y}t_2)$ or
      $t = \elimwith^2(t_1,\abstr{y}t_2)$,
      we apply the same method as for the case $t = \elimone(t_1,t_2)$.

    \item If $t =  \elimplus(t_1,\abstr{y}t_2,\abstr{z}t_3)$, we apply the same
      method as for the case $t = \elimtens(t_1,\abstr{yz}t_2)$. \qedhere

\end{itemize}
\end{proof}

A final lemma shows that
we can always
decompose an elimination context $K$ different from~$\_$ into a
smaller elimination context $K_1$ and a last elimination rule $K_2$. 
This is similar to the fact that we can always decompose a non-empty list into
a smaller list and its last element.

\begin{lemma}[Decomposition of an elimination context]
  \label{horrible}
  If $K$ is an elimination context such that $\_:A \vdash K:B$
  and $K \neq \_$, then $K$ has the form $K_1\{K_2\}$
  where $K_1$ is an elimination context and
  $K_2$ is an elimination context formed with a single
  elimination rule, that is the elimination rule of the top symbol of $A$.
\end{lemma}

\begin{proof}
  As $K$ is not $\_$, it has the form $K = L_1\{L_2\}$.
  If $L_2 = \_$, we take $K_1 = \_$, $K_2 = L_1$ and, as the proof is
  well-formed, $K_2$ must be an elimination of the top symbol of $A$.
  Otherwise, by induction hypothesis, $L_2$ has the form $L_2 = K'_1\{K'_2\}$,
  and $K'_2$ is an elimination of the top symbol of $A$.
  Hence, $K = L_1\{K'_1\{K'_2\}\}$. We take $K_1 = L_1\{K'_1\}$, $K_2 = K'_2$.
\end{proof}

\subsection{Linearity}

We now have the tools to prove the linearity theorem expressing that if $A$
is a proposition, $B$ is a proposition of ${\mathcal V}$, $t$ is a proof
such that $x:A \vdash t:B$, and $u_1$ and $u_2$ are two closed proofs
of $A$, then
$$t\{u_1 \plus u_2\} \equiv t\{u_1\} \plus t\{u_2\}
\qquad\qquad\textrm{and}\qquad\qquad
t\{a \bullet u_1\} \equiv a \bullet t\{u_1\}$$

We proceed by induction on the measure $\mu(t)$ of the proof $t$, but
the case analysis is non-trivial. Indeed, when $t$ is an elimination,
for example when $t = t_1~t_2$, the variable $x$ must occur in $t_1$,
and we would like to apply the induction hypothesis to this proof.
But we cannot because $t_1$ is a proof of an implication, that is not
in $\mathcal{V}$.  This leads us to first decompose the proof $t$ into
a proof of the form $K\{t'\}$ where $K$ is an elimination context and
$t'$ is either the variable $x$, an introduction, a sum, or a product,
and analyse the different possibilities for $t'$. The cases where $t'$
is an introduction, a sum or a product are easy, but the case where it
is the variable $x$, that is where $t = K\{x\}$, is more complex.
Indeed, in this case, we need to prove
$$K\{u_1 \plus u_2\} \equiv K\{u_1\} \plus K\{u_2\}
\qquad\textrm{and}\qquad
K\{a \bullet u_1\} \equiv a \bullet K\{u_1\}$$
and this leads to a second case analysis where we consider the last elimination
rule of $K$ and how it interacts with $u_1$ and $u_2$.

For example, when $K = K_1 \{\elimwith^1(\_,\abstr{y}r)\}$, then $u_1$
and $u_2$ are closed proofs of an additive conjunction $\with $, thus they
reduce to two pairs $\pair{u_{11}}{u_{12}}$ and
$\pair{u_{21}}{u_{22}}$, and $K\{u_1 \plus u_2\}$ reduces to $K_1
\{r\}\{u_{11} \plus u_{21}\}$. So, we need to apply the induction
hypothesis to the irreducible form of $K_1 \{r\}$. To prove that this
proof is smaller than $t$, we need Lemma~\ref{lem:mured} (hence
Lemma~\ref{lem:msubst}) and Lemma~\ref{strengtheningmsubst}.

In fact, this general case has several exceptions: the cases of the
elimination of the multiplicative conjunction $\otimes$ and of the
additive disjunction $\oplus$ are simplified because, the sum commutes
with the elimination rules and not the introduction rules of these
connectives. The case of the elimination of the connective $\zero$ is
simplified because it is empty.  The case of the elimination of the
connective $\one$ is simplified because no substitution occurs in $r$
in this case. The case of the elimination of the implication is
simplified because this rule is just the modus ponens and not the
generalized elimination rule of this connective. Thus, the only remaining
cases are those of the elimination rules of the additive conjunction
$\with $.

\begin{theorem}[Linearity]
  \label{linearity}
  If $A$ is a proposition, $B$ is proposition of ${\mathcal V}$, $t$ is
  a proof such that $x:A \vdash t:B$, and $u_1$ and $u_2$ and two closed
  proofs of $A$, then
  $$t\{u_1 \plus u_2\} \equiv t\{u_1\} \plus t\{u_2\}
  \qquad\qquad\textrm{and}\qquad\qquad
  t\{a \bullet u_1\} \equiv a \bullet t\{u_1\}$$
\end{theorem}

\begin{proof}
  Without loss of generality, we can assume that $t$ is irreducible.
  We proceed by induction on $\mu(t)$. 

  Using Lemma \ref{elim}, the term $t$ can be decomposed as $K\{t'\}$
  where $t'$ is either the variable $x$, an introduction, a sum, or a
  product.

  \begin{itemize}
    \item 
      If $t'$ is an introduction, as $t$ is irreducible, $K = \_$ and
      $t'$ is a proof of $B \in
      {\mathcal V}$, $t'$ is either $a.\star$ or $\pair{t_1}{t_2}$. However,
      since $a.\star$ is not a proof in $x:A$, it is $\pair{t_1}{t_2}$.
      Using the induction hypothesis with $t_1$ and with $t_2$
      ($\mu(t_1) < \mu(t')$, $\mu(t_2) < \mu(t')$), 
      we get
      \begin{align*}
	t\{u_1 \plus u_2\}
	\equiv
	\pair{t_1\{u_1\} \plus t_1\{u_2\}}{t_2\{u_1\} \plus t_2\{u_2\}}
	\lla 
	t\{u_1\} \plus t\{u_2\}
      \end{align*}
      And
      \begin{align*}
	t\{a \bullet u_1\}
	\equiv \pair{a \bullet t_1\{u_1\}}{a \bullet t_2\{u_1\}}
	&
	\lla
	a \bullet t\{u_1\}
      \end{align*}

    \item If $t' = t_1 \plus t_2$, then using the induction hypothesis
      with $t_1$, $t_2$, and $K$ ($\mu(t_1) < \mu(t)$, $\mu(t_2) <
      \mu(t)$, and $\mu(K) < \mu(t)$) and Lemma~\ref{vecstructure} (1.,
      2., and 7.), we get
      \begin{align*}
	t\{u_1 \plus u_2\}
&	\equiv K\{(t_1\{u_1\} \plus t_1\{u_2\})
	\plus (t_2\{u_1\} \plus t_2\{u_2\})\}\\
	&\equiv 
	K\{(t_1\{u_1\} \plus t_2\{u_1\})
	\plus (t_1\{u_2\} \plus t_2\{u_2\})\}
	\equiv  
	t\{u_1\} \plus t\{u_2\}
      \end{align*}
      And 
      \begin{align*}
	t\{a \bullet u_1\}
&	\equiv K\{a \bullet t_1\{u_1\} \plus a \bullet t_2\{u_1\}\}
	\equiv K\{a \bullet (t_1\{u_1\} \plus t_2\{u_1\})\}
	\equiv 
	a \bullet t\{u_1\}
      \end{align*}

    \item
      If $t' = b \bullet t_1$, then using the induction hypothesis
      with $t_1$ and $K$ ($\mu(t_1) < \mu(t)$, $\mu(K) < \mu(t)$) and
      $K$ and Lemma~\ref{vecstructure} (7.~and 5.), we get
      \begin{align*}
	t\{u_1 \plus u_2\}
	\equiv K\{b \bullet (t_1 \{u_1\} \plus t_1\{u_2\})\}
	& \equiv K\{b \bullet t_1 \{u_1\} \plus b \bullet t_1\{u_2\}\}
	 \equiv 
	t\{u_1\} \plus t\{u_2\}
      \end{align*}
      And 
      \begin{align*}
	t\{a \bullet u_1\}
&	\equiv K\{b \bullet a \bullet t_1 \{u_1\}\}
\equiv K\{a \bullet b \bullet t_1 \{u_1\}\}
	\equiv 
	a \bullet t\{u_1\}
      \end{align*}

    \item If $t'$ is the variable $x$, we need to prove
      $$K\{u_1 \plus u_2\} \equiv K\{u_1\} \plus K\{u_2\}
      \qquad\textrm{and}\qquad
      K\{a \bullet u_1\} \equiv a \bullet K\{u_1\}$$
      By Lemma~\ref{horrible},
      $K$ has the form $K_1\{K_2\}$ and $K_2$ is an elimination of the top symbol of $A$.
      We consider the various cases for $K_2$. 
      \begin{itemize}
	\item If $K = K_1\{\elimone(\_,r)\}$, then $u_1$ and $u_2$ are closed
	  proofs of $\one$, thus $u_1 \lras b.\star$ and $u_2\lras c.\star$.
	  Using the induction hypothesis with the proof $K_1$
	  ($\mu(K_1) < \mu(K) = \mu(t)$) and Lemma~\ref{vecstructure} (8.~and 5.)
	  \begin{align*}
	    K\{u_1 \plus u_2\}
	    &\lras K_1 \{\elimone({b.\star} \plus c.\star,r)\}
	    \lras K_1 \{(b + c) \bullet r\}
	    \equiv (b + c) \bullet K_1 \{r\}\\
	    &\equiv b \bullet K_1 \{r\} \plus c \bullet K_1 \{r\}
	    \equiv K_1 \{b \bullet r\} \plus K_1 \{c \bullet r\}\\
	    & \llas K_1 \{\elimone(b.\star,r)\} \plus K_1 \{\elimone(c.\star,r)\}
	    \llas K\{u_1\} \plus K\{u_2\}
	  \end{align*}
	  And
	  \begin{align*}
	    K\{a \bullet u_1\}
	    &\lras K_1 \{\elimone(a \bullet b.\star,r)\}
	    \lras K_1 \{(a \times b) \bullet r\}
	    \equiv (a \times b) \bullet  K_1 \{r\}\\
	   & \equiv a \bullet b \bullet  K_1 \{r\}
	    \equiv a \bullet  K_1 \{b \bullet r\}
	     \llas a \bullet K_1 \{\elimone(b.\star,r)\}
	    \llas a \bullet K\{u_1\}
	  \end{align*}

	\item
	  If $K = K_1 \{\_~s\}$, then $u_1$ and $u_2$ are closed 
	  proofs of an implication, thus 
	  $u_1 \lras \lambda \abstr{y} u'_1$
	  and
	  $u_2 \lras \lambda \abstr{y} u'_2$.
	  Using the induction hypothesis with the proof $K_1 $
	  ($\mu(K_1 ) < \mu(K) = \mu(t)$), we get
	  \begin{align*}
	    K\{u_1 \plus u_2\}
	    &\lras K_1 \{(\lambda \abstr{y} u'_1 \plus \lambda \abstr{y} u'_2)~s\}
	    \lras K_1 \{u'_1\{s\} \plus u'_2\{s\}\}\\
	    &\equiv K_1 \{u'_1\{s\}\} \plus K_1 \{u'_2\{s\}\}
	     \llas K_1 \{(\lambda \abstr{y} u'_1)~s\} \plus K_1 \{(\lambda \abstr{y} u'_2)~s\}\\
	   & \llas K\{u_1\} \plus K\{u_2\}
	  \end{align*}
	  And
	  \begin{align*}
	    K\{a \bullet u_1\}
	    &\lras K_1 \{(a \bullet \lambda \abstr{y} u'_1)~s\}
	    \lras K_1 \{a \bullet u'_1\{s\}\}\\
	   & \equiv a \bullet  K_1 \{u'_1\{s\}\}
	     \lla
	    a \bullet K_1 \{(\lambda \abstr{y} u'_1)~s\}
	    \llas a \bullet K\{u_1\}
	  \end{align*}

	\item
	  If $K = K_1 \{\elimtens(\_,\abstr{y z}r)\}$, then, using
	  the induction hypothesis with the proof $K_1 $ 
	  ($\mu(K_1 ) < \mu(K) = \mu(t)$), we get
	  \begin{align*}
	    K\{u_1 \plus u_2\}
	    \lra K_1 \{\elimtens(u_1,\abstr{y z}r)
	    \plus \elimtens(u_2,\abstr{y z}r)\}
	    &\equiv
            K\{u_1\} \plus K\{u_2\}
	  \end{align*}
	  And 
	  \begin{align*}
	    K\{a \bullet u_1\}
	    \lra K_1 \{a \bullet \elimtens(u_1,\abstr{y z}r)\}
	    &\equiv
            a \bullet K\{u_1\}
	  \end{align*}

	\item The case $K = K_1 \{\elimzero(\_)\}$ is not possible as $u_1$ would
	  be a closed proof of $\zero$ and there is no such proof.

	\item
	  If $K = K_1 \{\elimwith^1(\_,\abstr{y}r)\}$, then $u_1$ and $u_2$
	  are closed proofs of an additive conjunction $\with $, thus $u_1 \lras
	  \pair{u_{11}}{u_{12}}$ and $u_2 \lras \pair{u_{21}}{u_{22}}$.

	  Let $r'$ be the irreducible form of $K_1 \{r\}$.
	  Using the induction hypothesis with the proof $r'$
	  (because, with Lemmas~\ref{lem:mured} and~\ref{strengtheningmsubst}, we have
	    $\mu(r') \leq \mu(K_1 \{r\}) = \mu(K_1 ) + \mu(r) < \mu(K_1 ) + \mu(r) + 1
	  = \mu(K) = \mu(t)$)
	  \begin{align*}
	    K\{u_1 \plus u_2\}
	    &\lras K_1 \{\elimwith^1(\pair{u_{11}}{u_{12}} \plus \pair{u_{21}}{u_{22}}, \abstr{y}r)\}
	    \lras K_1 \{r\{u_{11} \plus u_{21}\}\}\\
	   & \lras r'\{u_{11} \plus u_{21}\}
	    \equiv r'\{u_{11}\} \plus r'\{u_{21}\}
	    \llas K_1 \{r\{u_{11}\}\} \plus K_1 \{r\{u_{21}\}\}\\
	   & \llas K_1 \{\elimwith^1(\pair{u_{11}}{u_{12}},\abstr{y}r)\}\plus K_1 \{\elimwith^1(\pair{u_{21}}{u_{22}},\abstr{y}r)\}
     \\
	    &\llas K\{u_1\} \plus K\{u_2\} 
	  \end{align*}
	  And
	  \begin{align*}
	    K\{a \bullet u_1\}
	    &\lras K_1 \{\elimwith^1(a \bullet \pair{u_{11}}{u_{12}},\abstr{y}r)\}
	    \lra^* K_1 \{r\{a \bullet u_{11}\}\}
	    \lras r'\{a \bullet u_{11}\}\\
	   & \equiv a \bullet r'\{u_{11}\}
	    \llas a \bullet K_1 \{r\{u_{11}\}\}
	    \lla a \bullet K_1 \{\elimwith^1(\pair{u_{11}}{u_{12}},\abstr{y}r)\}\\
	   & \llas a \bullet K\{u_1\}
	  \end{align*}

	\item If $K = K_1 \{\elimwith^2(\_,\abstr{y}r)\}$, the proof is similar.

	\item
	  If $K = K_1 \{\elimplus(\_,\abstr{y}r, \abstr{z}s)\}$, then, using
	  the induction hypothesis with the proof $K_1 $ 
	  ($\mu(K_1 ) < \mu(K) = \mu(t)$), we get
	  \begin{align*}
	    K\{u_1 \plus u_2\}
	    \lra K_1 \{\elimplus( u_1,\abstr{y}r, \abstr{z}s) \plus \elimplus( u_2,\abstr{y}r, \abstr{z}s)\}
	    &\equiv
            K\{u_1\} \plus K\{u_2\}
	  \end{align*}
	  And 
	  \begin{align*}
	    K\{a \bullet u_1\}
	    \lra K_1 \{a \bullet \elimplus(u_1,\abstr{y}r, \abstr{z}s)\}
	    &
	    \equiv
	    a \bullet K\{u_1\}
	    \tag*{\qedhere}
	  \end{align*}
      \end{itemize}
  \end{itemize}
\end{proof}

We can now generalize the linearity result, as explained in Section~\ref{sec:obseq}, by using the observational equivalence $\sim$ (cf.~Definition~\ref{def:obseq}).

\begin{corollary}\label{corollary0}
  If $A$ and $B$ are any propositions,
  $t$ a proof such that $x:A \vdash t:B$ and
  $u_1$ and $u_2$ two closed proofs of $A$, then  
  $$t\{u_1 \plus u_2\} \sim t\{u_1\} \plus t\{u_2\}
  \qquad\qquad\textrm{and}\qquad\qquad
  t\{a \bullet u_1\} \sim a \bullet t\{u_1\}$$
\end{corollary}

\begin{proof}
  Let $C \in {\mathcal V}$ and $c$ be a proof such that
  $\_:B \vdash c:C$. Then applying Theorem \ref{linearity} to
  the proof $c\{t\}$ we get 
  $$c\{t\{u_1 \plus u_2\}\} \equiv c\{t\{u_1\}\} \plus c\{t\{u_2\}\}
  \qquad\qquad\textrm{and}\qquad\qquad
  c\{t\{a \bullet u_1\}\} \equiv a \bullet c\{t\{u_1\}\}$$
  and applying it again to the proof $c$ we get
  $$c\{t\{u_1\} \plus t\{u_2\}\} \equiv c\{t\{u_1\}\} \plus c\{t\{u_2\}\}
  \qquad\qquad\textrm{and}\qquad\qquad
  c\{a \bullet t\{u_1\}\} \equiv a \bullet c\{t\{u_1\}\}$$
  Thus
  $$c\{t\{u_1 \plus u_2\}\} \equiv c\{t\{u_1\} \plus t\{u_2\}\}
  \qquad\qquad\textrm{and}\qquad\qquad
  c\{t\{a \bullet u_1\}\} \equiv c\{a \bullet t\{u_1\}\}$$
  that is 
  \begin{align*}
    t\{u_1 \plus u_2\} \sim t\{u_1\} \plus t\{u_2\}
    &\qquad\qquad\textrm{and}\qquad\qquad
    t\{a \bullet u_1\} \sim a \bullet t\{u_1\}
    \tag*{\qedhere}
  \end{align*}
\end{proof}

The main result, as announced in Section~\ref{sec:obseq}, showing that proofs of $A\multimap B$ are linear functions, is a direct consequence of Theorem~\ref{linearity} and Corollary~\ref{corollary0}.

\begin{corollary}
  \label{corollary1}
  Let $A$ and $B$ be propositions. 
  Let $t$ be a closed proof of $A \multimap B$ and
  $u_1$ and $u_2$ be closed proofs of $A$.

  Then, if $B \in {\mathcal V}$, we have 
  $$t~(u_1 \plus u_2) \equiv (t~u_1) \plus (t~u_2)
  \qquad\qquad\textrm{and}\qquad\qquad
  t~(a\bullet u_1) \equiv a\bullet (t~u_1)$$
  and in the general case, we have 
  $$t~(u_1 \plus u_2) \sim (t~u_1) \plus (t~u_2)
  \qquad\qquad\textrm{and}\qquad\qquad
  t~(a\bullet u_1) \sim a\bullet (t~u_1)$$
\end{corollary}

\begin{proof}
  As $t$ is a closed proof of $A \multimap B$, using
  Theorem~\ref{introductions}, it reduces to an irreducible proof of the
  form $\lambda \abstr{x} t'$.  Let $u'_1$ be the irreducible form of
  $u_1$, and $u'_2$ that of $u_2$.

  If $B \in {\mathcal V}$, using Theorem~\ref{linearity}, we have
\begin{align*}
      t~(u_1 \plus u_2) \lras t'\{u'_1 \plus u'_2\} 
      &\equiv t'\{u'_1\} \plus t'\{u'_2\} \llas (t~u_1) \plus (t~u_2)
      \\
      t~(a\bullet u_1) \lras t'\{a \bullet u'_1\} 
      &\equiv a \bullet t'\{u'_1\} \llas a\bullet (t~u_1)
\end{align*}
  In the general case, using Corollary \ref{corollary0}, we have
\begin{align*}
    t~(u_1 \plus u_2) \lras t'\{u'_1 \plus u'_2\}
    &\sim t'\{u'_1\} \plus t'\{u'_2\} \llas (t~u_1) \plus (t~u_2)
    \\
      t~(a\bullet u_1) \lras t'\{a \bullet u'_1\}
      &\sim a \bullet t'\{u'_1\} \llas a\bullet (t~u_1)
      \tag*{\qedhere}
\end{align*}
\end{proof}

Finally, the next corollary is the converse of Theorem \ref{matrices}.

\begin{corollary}
  \label{corollary2}
  Let $A, B \in {\mathcal V}$, such that $d(A) = m$ and $d(B) = n$, 
  and  $t$ be a closed proof of $A \multimap B$.
  Then the function $F$ from ${\mathcal S}^m$ to ${\mathcal S}^n$,
  defined as
  $F({\bf u}) = \underline{t~\overline{\bf u}^A}$ is linear.
\end{corollary}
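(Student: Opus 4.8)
The plan is to reduce the statement to Corollary~\ref{corollary1} by translating the algebraic operations on vectors into the syntactic operations $\plus$ and $\bullet$ on proofs, using Lemmas~\ref{parallelsum} and~\ref{parallelprod} as a dictionary between the two sides. Recall that $F({\bf u}) = \underline{t~\overline{\bf u}^A}$, so what must be established is $F({\bf u} + {\bf v}) = F({\bf u}) + F({\bf v})$ and $F(a {\bf u}) = a F({\bf u})$.

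First I would record that the map $\underline{\cdot}$ is invariant under the convertibility relation $\equiv$: since the ${\mathcal L}^{\mathcal S}$-calculus is confluent (Theorem~\ref{th:Confluence}) and terminating (Theorem~\ref{th:SN}), two convertible closed proofs have the same irreducible form, hence the same associated vector. I would also note, from the one-to-one correspondence of Definition~\ref{onetoone} together with Lemma~\ref{parallelsum}, that $\overline{\bf u}^A \plus \overline{\bf v}^A$ and $\overline{{\bf u}+{\bf v}}^A$ are two proofs of $A$ with the same associated vector $\underline{\overline{\bf u}^A} + \underline{\overline{\bf v}^A} = {\bf u} + {\bf v}$; since the correspondence is a bijection, their irreducible forms coincide, so $\overline{\bf u}^A \plus \overline{\bf v}^A \equiv \overline{{\bf u}+{\bf v}}^A$. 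The analogous identity $a \bullet \overline{\bf u}^A \equiv \overline{a{\bf u}}^A$ follows in the same way from Lemma~\ref{parallelprod}.

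For additivity, I would then chain the following: by Lemma~\ref{parallelsum}, $F({\bf u}) + F({\bf v}) = \underline{t~\overline{\bf u}^A} + \underline{t~\overline{\bf v}^A} = \underline{(t~\overline{\bf u}^A) \plus (t~\overline{\bf v}^A)}$; by Corollary~\ref{corollary1} the argument of $\underline{\cdot}$ is convertible to $t~(\overline{\bf u}^A \plus \overline{\bf v}^A)$, which by the previous paragraph is convertible to $t~\overline{{\bf u}+{\bf v}}^A$. Invariance of $\underline{\cdot}$ under $\equiv$ then yields $F({\bf u}) + F({\bf v}) = \underline{t~\overline{{\bf u}+{\bf v}}^A} = F({\bf u}+{\bf v})$. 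Homogeneity is proved identically, replacing Lemma~\ref{parallelsum} by Lemma~\ref{parallelprod} and the additive clause of Corollary~\ref{corollary1} by its scalar clause.

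The genuinely new content is confined to the two observations of the second paragraph; once $\underline{\cdot}$ is known to respect $\equiv$ and the canonical representatives are known to match the syntactic sum and scalar product up to conversion, the result is a formal consequence of Corollary~\ref{corollary1}. Accordingly, the only delicate point I anticipate is the passage from equalities of associated vectors to convertibility of proofs, which rests essentially on the bijectivity of the correspondence in Definition~\ref{onetoone}, itself underwritten by the introduction property (Theorem~\ref{introductions}) and termination.
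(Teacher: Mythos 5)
Your proposal is correct and follows essentially the same route as the paper's own proof, which is the chain $F({\bf u}+{\bf v}) = \underline{t~\overline{{\bf u}+{\bf v}}^A} = \underline{t~(\overline{\bf u}^A \plus \overline{\bf v}^A)} = \underline{(t~\overline{\bf u}^A) \plus (t~\overline{\bf v}^A)} = \underline{t~\overline{\bf u}^A} + \underline{t~\overline{\bf v}^A}$ via Corollary~\ref{corollary1} and Lemmas~\ref{parallelsum} and~\ref{parallelprod}, and likewise for scalars. You merely make explicit two steps the paper leaves implicit --- the invariance of $\underline{\cdot}$ under $\equiv$ (from confluence and termination) and the convertibility $\overline{\bf u}^A \plus \overline{\bf v}^A \equiv \overline{{\bf u}+{\bf v}}^A$ (from the bijectivity of the correspondence) --- which is a sound elaboration, not a different argument.
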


\begin{proof}
  Using Corollary~\ref{corollary1} and Lemmas~\ref{parallelsum}
  and~\ref{parallelprod}, we have
\begin{align*}
      F({\bf u} + {\bf v}) 
      = \underline{t~\overline{\bf u + \bf v}^A}
      = \underline{t~(\overline{\bf u}^A \plus \overline{\bf v}^A)} 
      &= \underline{t~\overline{\bf u}^A \plus t~\overline{\bf v}^A}
      = \underline{t~\overline{\bf u}^A} + \underline{t~\overline{\bf v}^A}
      = F({\bf u}) + F({\bf v})
      \\
      F(a {\bf u}) 
      = \underline{t~\overline{a \bf u}^A}
      = \underline{t~(a \bullet \overline{\bf u}^A)} 
      &= \underline{a\bullet t~\overline{\bf u}^A}
      = a \underline{t~\overline{\bf u}^A}
      = a F({\bf u})
      \tag*{\qedhere}
\end{align*}
\end{proof}

\begin{remark} The Theorem \ref{linearity} and its corollaries hold for
  linear proofs, but not for non-linear ones. The linearity is used,
  in an essential way, in two places. First, in the first case of the
  proof of Theorem \ref{linearity} when we remark that $a.\star$ is
  not a proof in the context $x:A$. Indeed, if $t$ could be
  $a.\star$ then linearity would be violated as $1.\star\{4.\star
  \plus 5.\star\} = 1.\star$ and $1.\star\{4.\star\} \plus
  1.\star\{5.\star\} \equiv 2.\star$.  Then, in the proof of Lemma
  \ref{elim}, we remark that when $t_1~t_2$ is a proof in the context
  $x:A$ then $x$ must occur in $t_1$, and hence it does not occur in
  $t_2$, that is therefore closed. This way, the proof $t$ eventually
  has the form $K\{u\}$ and this would not be the case if $x$ could
  occur in $t_2$ as well.
\end{remark}

\subsection{No-cloning}

In the proof language of propositional intuitionistic logic extended with
interstitial rules and scalars, but with structural rules, the cloning function
from ${\mathcal S}^2$ to ${\mathcal S}^4$, mapping $\left(\begin{smallmatrix}
a\\b\end{smallmatrix}\right)$ to $\left(\begin{smallmatrix}
a^2\\ab\\ab\\b^2\end{smallmatrix}\right)$, which is the tensor product of the
vector $\left(\begin{smallmatrix} a\\b\end{smallmatrix}\right)$ with itself,
can be expressed \cite{odot}. But the proof given there is not the proof of a
proposition of the ${\mathcal L}^{\mathcal S}$-calculus.

Moreover, by Corollary~\ref{corollary2}, no proof
of $(\one \with  \one) \multimap ((\one \with  \one) \with  (\one \with  \one))$,
in the ${\mathcal L}^{\mathcal S}$-calculus, can express this
function, because it is not linear.

\section{The \texorpdfstring{${\mathcal L} \odot^{\mathbb C}$}{L-sup-S}-calculus and its application
to quantum computing}
\label{seclss}

There are two issues in the design of a quantum programming
language. The first, addressed in this paper, is to take into account
the linearity of the unitary operators and, for instance, avoid
cloning.  The second, addressed in \cite{odot}, is to express the
information-erasure, non-reversibility, and non-determinism of the
measurement.

\subsection{The connective \texorpdfstring{$\odot$}{sup}}

In \cite{odot}, we have introduced, besides interstitial rules and
scalars, a new connective $\odot$ (read ``sup'' for ``superposition'')
and given the type $\Q_1 = \one \odot \one$ to quantum bits, that is
superpositions of bits.

As to express the superposition $\alpha . \ket 0 + \beta .  \ket 1$,
we need both $\ket 0$ and $\ket 1$, the connective $\odot$ has the
introduction rule of the conjunction.  And as the measurement in the
basis $\ket 0$, $\ket 1$ yields either $\ket 0$ or $\ket 1$, the
connective $\odot$ has the elimination rule of the disjunction.  But,
to express quantum algorithms, we also need to transform qubits, using
unitary operators and, to express these operators, we require other
elimination rules for the connective $\odot$, similar to those of the
conjunction.

Thus, the connective $\odot$ has an introduction rule $\odot$-i similar
to that of the conjunction, one elimination rule $\odot$-e similar to
that of the disjunction, used to express the information-erasing,
non-reversible, and non-deterministic quantum measurement operators,
and two elimination rules $\odot$-e1 and $\odot$-e2 similar to those
of the conjunction, used to express the information-preserving,
reversible, and deterministic unitary operators.

The $\odot^{\mathbb C}$-calculus can express
quantum algorithms, including those using measurement, but as the use
of variables is not restricted, it can also express non-linear
functions, such as cloning operators.

We can thus mix the two ideas and introduce the 
${\mathcal L} \odot^{\mathcal S}$-calculus that is an extension of the
${\mathcal L}^{\mathcal S}$-calculus with a $\odot$ connective and also
a linear restriction of the 
$\odot^{\mathcal S}$-calculus. 
The ${\mathcal L}\odot^{\mathcal S}$-calculus is obtained by adding
the symbols $\super{.}{.}$, $\elimsup^1$, $\elimsup^2$,
$\elimsup$,
the deduction rules of Figure~\ref{odot}, and the reduction rules of
Figure~\ref{odotred}, to the ${\mathcal L}^{\mathcal S}$-calculus.

\begin{figure}[t]
  $$\irule{\Gamma \vdash t:A & \Gamma \vdash u:B}
  {\Gamma \vdash \super{t}{u}:A \odot B}
  {\mbox{$\odot$-i}}$$
  $$\irule{\Gamma \vdash t:A \odot B & \Delta, x:A \vdash u:C}
  {\Gamma, \Delta \vdash \elimsup^1(t,\abstr{x}u):C}
  {\mbox{$\odot$-e1}}$$
  $$\irule{\Gamma \vdash t:A \odot B & \Delta, x:B \vdash u:C}
  {\Gamma, \Delta \vdash \elimsup^2(t,\abstr{x}u):C}
  {\mbox{$\odot$-e2}}$$
  $$\irule{\Gamma \vdash t:A \odot B & \Delta, x:A \vdash u:C & \Delta, y:B \vdash v:C}
  {\Gamma, \Delta \vdash \elimsup(t,\abstr{x}u,\abstr{y}v):C}
  {\mbox{$\odot$-e}}$$
  \caption{The deduction rules of the ${\mathcal L}\odot^{\mathcal S}$-calculus.\label{odot}}
\end{figure}

\begin{figure}[t]
  \[
    \begin{array}{r@{\,}l}
      \elimsup^1(\super{t}{u}, \abstr{x}v) & \longrightarrow  (t/x)v \\
      \elimsup^2(\super{t}{u}, \abstr{x}v) & \longrightarrow  (u/x)v \\
      \elimsup(\super{t}{u},\abstr{x}v,\abstr{y}w) & \longrightarrow  (t/x)v \\
      \elimsup(\super{t}{u},\abstr{x}v,\abstr{y}w) & \longrightarrow  (u/y)w \\
      \\
      \super{t}{u} \plus \super{v}{w} & \longrightarrow  \super{t \plus v}{u \plus w}\\
      a \bullet \super{t}{u} &\longrightarrow  \super{a \bullet t}{a \bullet u}
    \end{array}
  \]
  \caption{The reduction rules of the ${\mathcal L}\odot^{\mathcal S}$-calculus. \label{odotred}}
\end{figure}

We use the symbols $\super{.}{.}$, $\elimsup^1$ and $\elimsup^2$, to
express vectors and matrices, just like in
Section~\ref{secvectorsmatrices}, except that the conjunction $\with$
is replaced with the connective $\odot$.

As the symbol $\elimsup$ enables to express measurement operators that
are not linear, we cannot expect to have an analogue of Corollary
\ref{corollary1} for the full ${\mathcal L} \odot^{\mathcal
  S}$-calculus---more generally, we cannot expect a calculus to both
enjoy such a linearity property and express the measurement operators.
Thus, the best we can expect is a linearity property for the
restriction of the ${\mathcal L} \odot^{\mathcal S}$-calculus,
excluding the $\elimsup$ symbol.  But, this result is a trivial
consequence of Corollary \ref{corollary1}, as if the $\odot$-e rule is
excluded, the connective $\odot$ is just a copy of the additive
conjunction $\with $. So, we shall not give a full proof of this theorem.

In the same way, the subject reduction proof of the ${\mathcal L}
\odot^{\mathcal S}$-calculus is similar to the proof of Theorem
\ref{th:subjectreduction} and the strong termination proof of the
${\mathcal L} \odot^{\mathcal S}$-calculus is similar to the proof of
Corollary \ref{termination}, with a few extra lemmas proving the
adequacy of the introduction and elimination symbols of the $\odot$
connective, similar to those of the strong termination proof of
\cite{odot}, so we shall not repeat this proof. In contrast, the
confluence property is lost, because the reduction rules of the
${\mathcal L} \odot^{\mathcal S}$-calculus are non-deterministic.

Thus, we shall focus in this section on an informal discussion on how
the ${\mathcal L} \odot^{\mathbb C}$-calculus can be used as a
quantum programming language.

\subsection{The \texorpdfstring{${\mathcal L}\odot^{\mathbb  C}$}{L-sup-C}-calculus
  as a quantum programming language}

We first express the vectors and matrices like in
Section~\ref{secvectorsmatrices}, except that we use the connective
$\odot$ instead of $\with $.  In particular the $n$-qubits, for $n
\geq 1$, are expressed, in the basis $\ket{0 \ldots 00}, \ket{0 \ldots
  01}, \ldots \ket{1 \ldots 11}$, as elements of ${\mathbb C}^{2^n}$,
that is as proofs of the vector proposition $\Q_n$ defined by
induction on $n$ as follows: $\Q_0 = \one$ and $\Q_{n+1} = \Q_n \odot
\Q_n$. For example, the proposition $\Q_2$ is $(\one \odot \one) \odot
(\one \odot \one)$, and the proof
$\super{\super{a.\star}{b.\star}}{\super{c.\star}{d.\star}}$
represents the vector $a . \ket{00} + b . \ket{01} + c . \ket{10} + d
. \ket{11}$. For instance, the vector $\frac{1}{\sqrt{2}} \ket{00} +
\frac{1}{\sqrt{2}}\ket{11}$ is represented by the proof
$\super{\super{\frac{1}{\sqrt{2}}.\star}{0.\star}}{\super{0.\star}{\frac{1}{\sqrt{2}}.\star}}$.

It has been shown \cite{odot} that the $\odot^{\mathbb C}$-calculus
with a reduction strategy restricting the reduction of
$\elimsup(\super{t}{u},\abstr{x}v,\abstr{y}w)$ to the cases where $t$
and $u$ are closed irreducible proofs, can be used to express quantum
algorithms.  We now show that the same holds for the ${\mathcal
  L}\odot^{\mathbb C}$-calculus.

As we have already seen how to express linear maps in the 
${\mathcal L}\odot^{\mathbb  C}$-calculus, we now turn to the expression of
the measurement operators.

\begin{definition}[Norm of a vector]
If $t$ is a closed irreducible proof of $\Q_n$, we define the
square of the norm $\|t\|^2$ of $t$ by induction on $n$.
\begin{itemize}
  \item If $n = 0$, then $t = a.\star$ and we take $\|t\|^2 = |a|^2$.
  \item  If $n = n'+1$, then $t = \super{u_1}{u_2}$ and we take
    $\|t\|^2 = \|u_1\|^2 + \|u_2\|^2$.
\end{itemize}
\end{definition}

We take the convention that any closed irreducible proof $u$ of
$\Q_{n}$, expressing a non-zero vector $\underline{u} \in
{\mathbb C}^{2^n}$, is an alternative expression of the $n$-qubit
$\frac{\underline{u}}{\|\underline{u}\|}$.  For example, the qubit
$\frac{1}{\sqrt{2}} . \ket{0} + \frac{1}{\sqrt{2}} . \ket{1}$ is
expressed as the proof
$\super{\frac{1}{\sqrt{2}}.\star}{\frac{1}{\sqrt{2}}.\star}$, but also
as the proof $\super{1.\star}{1.\star}$.

\begin{definition}[Probabilistic reduction]
  Probabilities are assigned to the
non-deterministic reductions of closed proofs of the form
$\elimsup(u,\abstr{x}v,\abstr{y}w)$ as follows.
A proof of the form
$\elimsup(\super{u_1}{u_2},\abstr{x}v,\abstr{y}w)$ where $u_1$ and
$u_2$ are closed irreducible proofs of $\Q_n$ reduces to
$(u_1/x)v$ with probability
$\frac{\|u_1\|^2}{\|u_1\|^2 +
\|u_2\|^2}$ and to 
$(u_2/y)w$ with probability
$\frac{\|u_2\|^2}{\|u_1\|^2 + \|u_2\|^2}$,
when $\|u_1\|^2$ and $\|u_2\|^2$ are not both
$0$.
When $\|u_1\|^2 = \|u_2\|^2 = 0$, or $u_1$ and $u_2$ are proofs of
propositions of a different form, we assign any probability, for
example $\tfrac{1}{2}$, to both reductions.
\end{definition}

\begin{definition}[Measurement operator]
If $n$ is a non-zero natural number, we define the measurement
operator $\pi_n$, measuring the first qubit of an $n$-qubit, as the
proof
$$\pi_n = \lambda \abstr{x} \elimsup(x, \abstr{y}
\super{y}{0_{\Q_{n-1}}},\abstr{z} \super{0_{\Q_{n-1}}}{z})$$
of the proposition $\Q_{n} \multimap \Q_{n}$,
where the proof $0_{\Q_{n-1}}$
  is given in Definition \ref{def:zerovector}.
\end{definition}

\begin{remark}
If $t$ is a closed irreducible proof of $\Q_{n}$ of
the form $\super{u_1}{u_2}$, such that $\|t\|^2 = \|u_1\|^2 +
\|u_2\|^2 \neq 0$, expressing the state of an $n$-qubit, then the
proof $\pi_n~t$ of the proposition $\Q_{n}$ reduces, with
probabilities $\tfrac{\|u_1\|^2}{\|u_1\|^2 + \|u_2\|^2}$ and
$\tfrac{\|u_2\|^2}{\|u_1\|^2 + \|u_2\|^2}$ to
$\super{u_1}{0_{\Q_{n-1}}}$ and to $\super{0_{\Q_{n-1}}}{u_2}$, that are the
states of the $n$-qubit, after the partial measure of the first qubit.
\end{remark}

The measurement operator $\pi_n$ returns the 
states of the $n$-qubit, after the partial measure of the first qubit.
We now show that it is also possible to return the classical result of
the measure, that is a Boolean.

\begin{definition}[Booleans]
Let ${\bf 0}$ and ${\bf 1}$ be the closed proofs of the 
proposition $\B = \one \oplus \one$
${\bf 0}=\inl(1.\star)$ and ${\bf
  1}=\inr(1.\star)$.
\end{definition}

As we do not have a weakening rule, we cannot define 
this measurement operator as
$$\lambda \abstr{x} \elimsup(x, \abstr{y}{\bf 0},\abstr{z}{\bf 1})$$
that maps all proofs of the form
$\super{u_1}{u_2}$ to ${\bf 0}$ or ${\bf 1}$ with the same probabilities as
above.
So, we continue to consider proofs modulo renormalization, that
is that any proof of the form $a \bullet {\bf 0}$ also
represents the Boolean ${\bf 0}$ and any proof of the form $b \bullet
{\bf 1}$ also represents the Boolean ${\bf 1}$.

\begin{definition}[Classical measurement operator]
If $n$ is a non-zero natural number, we define the measurement
operator $\pi'_n$, 
as the proof
$$
\pi'_n = \lambda x.\elimsup(x,\abstr{y}{\delta^{\Q_{n-1}}(y,{\bf 0})},\abstr{z}{\delta^{\Q_{n-1}}(z,{\bf 1})})
$$
where $\delta^{\Q_{n}}$ is defined as 
$$
\delta^{\Q_{n}}(x,{\bf b}) =
\left\{
  \begin{array}{ll}
    \elimone(x,{\bf b}) & \textrm{if }n=0\\
    \elimsup(x,\abstr{y}{\delta^{\Q_{n-1}}(y,{\bf b})},\abstr{z}{\delta^{\Q_{n-1}}(z,{\bf b})}) & \textrm{if }n>0
  \end{array}
\right.
$$
\end{definition}

\begin{remark}
If $t$ is a closed irreducible proof of $\Q_{n}$ of the form
$\super{u_1}{u_2}$, such that $\|t\|^2 = \|u_1\|^2 + \|u_2\|^2 \neq
0$, expressing the state of an $n$-qubit, then the proof $\pi'_n~t$ of
the proposition $\B$ reduces, with the same probabilities as above, 
to $a \bullet {\bf 0}$ or $b \bullet {\bf 1}$.  The scalars $a$ and
$b$ may vary due to the probabilistic nature of the operator
$\elimsup$, but they are $0$ only with probability $0$.
\end{remark}

\begin{example}
The operator
$$\pi'_1 = \lambda \abstr{x} \elimsup(x,\abstr{y}{\elimone(y,\bf 0)},\abstr{z}{\elimone(z,
\bf 1)})
$$
applied to the proof $\super{a.\star}{b.\star}$ yields
$a\bullet\bf 0$ or $b\bullet\bf 1$ with probability $\frac{|a|^2}{|a|^2+|b|^2}$ or $\frac{|b|^2}{|a|^2+|b|^2}$ respectively.

The operator
$$\pi'_2 = \lambda \abstr{x} \elimsup(x,\abstr{y}{\delta^{\Q_1}(y,\bf 0)},\abstr{z}{\delta^{\Q_1}(z, \bf 1)})$$
applied to $\super{\super{a.\star}{b.\star}}{\super{c.\star}{d.\star}}$
reduces to
$\delta^{\Q_1}([a.\star,b.\star],\bf 0)$
or to 
$\delta^{\Q_1}([c.\star,d.\star],\bf 1)$
with probabilities $\frac{|a|^2 + |b|^2}{|a|^2 + |b|^2 + |c|^2 + |d|^2}$
and $\frac{|c|^2 + |d|^2}{|a|^2 + |b|^2 + |c|^2 + |d|^2}$.

Then the first proof   
$$\delta^{\Q_1}([a.\star,b.\star],\bf 0)$$
always
reduces to ${\bf 0}$ modulo some scalar multiplication, precisely to
$a \bullet {\bf 0}$ with probability 
$\frac{|a|^2}{|a|^2 + |b|^2}$
and
$b \bullet {\bf 0}$ with probability 
$\frac{|b|^2}{|a|^2 + |b|^2}$.
In the same way, the second always
reduces to {\bf 1} modulo some scalar multiplication. 
\end{example}

\subsection{Deutsch's algorithm}

We have given in \cite{odot}, a proof that expresses Deutsch's algorithm.
We update it here to the linear case.

As above, let ${\bf 0} = \inl(1.\star)$ and ${\bf 1} = \inr(1.\star)$
be closed irreducible proofs of $\B = \one \oplus \one$.

For each proposition $A$, and pair of closed terms, $u$ and $v$, of
type $A$, we have a test operator, that is a proof of $\B \multimap A$
$$\test_{u,v} = \lambda \abstr{x} 
\elimplus(x,\abstr{w_1}\elimone(w_1,u),\abstr{w_2}\elimone(w_2,v))$$
Then 
$\test_{u,v}~{\bf 0} \longrightarrow 1 \bullet u$ and $\test_{u,v}~{\bf
1} \longrightarrow 1 \bullet v$.
Deutsch's algorithm is the proof of
$(\B \multimap \B) \multimap \B$
$$\mbox{\it Deutsch}
=
\lambda \abstr{f}\pi'_2 ((\overline{H\otimes I})~(U~f~\overline{\ket{+-}}))$$
where $\overline{H \otimes I}$ is the proof of $\Q_2 \multimap Q_2$ corresponding to
the matrix
$$
\frac 1{\sqrt 2}\left( \begin{smallmatrix}
    1 & 0 & 1 & 0 \\
    0 & 1 & 0 & 1 \\
    1 & 0 & -1 & 0 \\
    0 & 1 & 0 & -1 
\end{smallmatrix}\right)
$$
as in Theorem \ref{matrices}, 
except that the
conjunction $\with $ is replaced with the connective $\odot$.
$U$ is the proof of $(\B \multimap \B) \multimap \Q_2 \multimap Q_2$  
\begin{align*}
  U = \lambda \abstr{f} \lambda \abstr{t}
  &\elimsup^1
  (
    t,\abstr{x}
    (
      \elimsup^1(x,\abstr{z_0}M_0~z_0)
      \plus
      \elimsup^2(x,\abstr{z_1}M_1~z_1)
    )
  )
  \\
  \plus~&
  \elimsup^2
  (
    t,
    \abstr{y}
    (
      \elimsup^1(y,\abstr{z_2}~M_2~z_2)
      \plus
      \elimsup^2(y,\abstr{z_3}~M_3~z_3)
    )
  )
\end{align*}
where $M_0$, $M_1$, $M_2$, and $M_3$ are the 
proofs of $\one \multimap \Q_2$
$$M_0 = \lambda \abstr{s} \elimone(s,
\test_{\super{\super{1.\star}{0.\star}}{\super{0.\star}{0.\star}} ,
  \super{\super{0.\star}{1.\star}}{\super{0.\star}{0.\star}}} ~(f~{\bf
  0}))$$
$$M_1 = \lambda \abstr{s} \elimone(s,
\test_{\super{\super{0.\star}{1.\star}}{\super{0.\star}{0.\star}},
  \super{\super{1.\star}{0.\star}}{\super{0.\star}{0.\star}}} ~(f~{\bf
  0}))$$
$$M_2 = \lambda \abstr{s} \elimone(s,
\test_{\super{\super{0.\star}{0.\star}}{\super{1.\star}{0.\star}},
  \super{\super{0.\star}{0.\star}}{\super{0.\star}{1.\star}}} ~(f~{\bf
  1}))$$
$$M_3 = \lambda \abstr{s} \elimone(s,
\test_{\super{\super{0.\star}{0.\star}}{\super{0.\star}{1.\star}},
  \super{\super{0.\star}{0.\star}}{\super{1.\star}{0.\star}}} ~(f~{\bf
  1}))$$
and $\overline{\ket{+-}}$ is the proof of $\Q_2$
$$\overline{\ket{+-}} =
\super{\super{\frac{1}{2}.\star}{\frac{-1}{2}.\star}}
{\super{\frac{1}{2}.\star}{\frac{-1}{2}.\star}}$$

Let $f$ be a proof of $\B \multimap \B$.
If $f$ is a constant function, we have $\mbox{\it
Deutsch}\ f \lras a \bullet {\bf 0}$, for some scalar $a$, 
while if $f$ is not constant, $\mbox{\it Deutsch}\
f \lras a \bullet {\bf 1}$ for some scalar $a$.

\subsection{Towards unitarity}
For future work, we may want to restrict the logic further so that functions
are not only linear, but also unitary.  Unitarity, the property that ensures
that the norm and orthogonality of vectors is preserved, is a requirement for
quantum gates.  In the current version, we can argue that we let these
unitarity constraints as properties of the program that must be proved for each
program, rather than enforced by the type system.

Some methods to enforce unitarity in quantum controlled lambda calculus has
been given in
\cite{AltenkirchGrattageLICS05,DiazcaroGuillermoMiquelValironLICS19,LambdaS1}.
QML~\cite{AltenkirchGrattageLICS05} gives a restricted notion of orthogonality
between terms, and constructs its superpositions only over orthogonal terms.
Lambda-$S_1$~\cite{DiazcaroGuillermoMiquelValironLICS19,LambdaS1} is the
unitary restriction of Lambda-$S$~\cite{LambdaS}, using an
extended notion of orthogonality. This kind of restrictions could be added as
restrictions to the interstitial rules to achieve the same result.

\section{Conclusion}

The link between linear logic and linear algebra has been known for a
long time, in the context of models of linear logic. We have shown in
this paper, that this link also exists at the syntactic level,
provided we consider several proofs of $\one$, one for each scalar, we
add two interstitial rules, and proof reduction rules allowing to
commute these interstitial rules with logical rules, to reduce
commuting cuts.

We also understand better in which way must propositional logic be
extended or restricted, so that its proof language becomes a quantum
programming language.  A possible answer is in four parts: we need to
extend it with interstitial rules, scalars, and the connective
$\odot$, and we need to restrict it by making it linear.  We obtain
this way the ${\mathcal L} \odot^{\mathbb C}$-calculus that addresses
both the question of linearity and, for instance, avoids cloning, and
that of the information-erasure, non-reversibility, and
non-determinism of the measurement.

Future work also includes relating the algebraic notion of tensor product
and the linear logic notion of tensor for vector propositions.

\section*{Acknowledgement}
The authors want to thank Thomas Ehrhard,
Jean-Baptiste Joinet, Jean-Pierre Jouannaud, Dale Miller, Alberto
Naibo, Simon Perdrix, Alex Tsokurov, and Lionel Vaux for useful discussions.

\bibliographystyle{abbrv}
\bibliography{biblio}

\end{document}